\theoremstyle{definition}
\definecolor{cornflower}{RGB}{65,105,225}
\definecolor{maroon}{RGB}{178,34,34}
\lstdefinestyle{p4}{
  language=C,
  backgroundcolor=\color{white},
  basicstyle=\linespread{0.8}\footnotesize\ttfamily,
  breaklines=true,
  commentstyle=\color{maroon},
  frame=none,
  tabsize=2,
  xleftmargin=11pt,
  morekeywords={
    header,
    header_type,
    fields,
    extract,
    modify_field,
    metadata,
    control,
    drop,
    parser,
    action,
    table,
    reads,
    actions,
    default_action,
    size,
    if,
    else,
    apply,
    ternary,
    lpm,
    exact,
    @pragma,
    valid
  },
  keywordstyle=\color{cornflower},  
  morekeywords=[2]{
    assume,
    assert,
    drop,
    implies,
    iff,
    and,
    or
  },
  keywordstyle=[2]{\color{violet}},  
}
\definecolor{warning}{rgb}{1.0, 0.8, 0.0}
\lstdefinestyle{fixup}{
  backgroundcolor=\color{white},
  basicstyle=\linespread{0.8}\footnotesize\ttfamily,
  breaklines=true,
  frame=none,
  tabsize=2,
  xleftmargin=11pt,
  moredelim=**[is][\color{gray}]{@}{@},
  moredelim=**[is][\color{cornflower}]{|}{|},
  moredelim=**[is][\color{maroon}]{+}{+},
  moredelim=**[is][\color{warning}]{!}{!},
  moredelim=**[is][\color{white}]{~}{~},
}
\newcommand{\name}{\textsc{SafeP4}\xspace}
\newcommand{\PPPP}[1]{\ensuremath{\textrm{P4}_{#1}}\xspace} \newcommand{\ourchecker}{\textsc{P4Check}\xspace}
\newcommand{\cmdtype}[4]{\ensuremath{#1\vdash #3\!:\!#2 \Mapsto #4}}
\newcommand{\syntax}[1]{\ensuremath{\mathit{#1}}}
\newcommand{\lbbar}{\{\kern-0.5ex|}
\newcommand{\rbbar}{|\kern-0.5ex\}}
\newcommand{\overbar}[1]{\mkern 1.5mu\overline{\mkern-1.5mu#1\mkern-1.5mu}\mkern 1.5mu}
\newcommand{\sos}[1]{\ensuremath{\{\{#1\}\}}} \newcommand{\teval}[1]{\ensuremath{\llbracket #1 \rrbracket}}
\newcommand{\T}{\ensuremath{\Theta}}
\newcommand{\boxedpage}[2]{\fbox{\begin{minipage}{#1}#2\end{minipage}}} 
\title{How to Avoid Making a Billion-Dollar Mistake:
Type-Safe Data Plane Programming with \name}
\titlerunning{Type-Safe Data Plane Programming with \name}
\author{Matthias Eichholz}{Technische Universität Darmstadt, Germany}{}{}{}\author{Eric Campbell}{Cornell University, USA}{}{}{}
\author{Nate Foster}{Cornell University, USA}{}{}{}
\author{Guido Salvaneschi}{Technische Universität Darmstadt, Germany}{}{}{}
\author{Mira Mezini}{Technische Universität Darmstadt, Germany}{}{}{}
\authorrunning{M. Eichholz, E. Campbell, N. Foster, G. Salvaneschi, and M. Mezini}
\keywords{P4, data plane programming, type systems}
\begin{document}

\maketitle

\begin{abstract}    
  The P4 programming language offers high-level, declarative
  abstractions that bring the flexibility of software to the domain of
  networking. Unfortunately, the main abstraction used to represent
  packet data in P4, namely header types, lacks basic safety guarantees.
  Over the last few years, experience with an increasing number of
  programs has shown the risks of the unsafe approach, which
  often leads to subtle software bugs.
    
  This paper proposes \name, a domain-specific language for
  programmable data planes in which all packet data is guaranteed to
  have a well-defined meaning and satisfy essential safety guarantees.
  We equip \name with a formal semantics and a static type system that
  statically guarantees header validity---a common source of safety
  bugs according to our analysis of real-world P4 programs. Statically
  ensuring header validity is challenging because the set of valid
  headers can be modified at runtime, making it a dynamic program
  property. Our type system achieves static safety by using a form of
  path-sensitive reasoning that tracks dynamic information from
  conditional statements, routing tables, and the control plane. Our
  evaluation shows that \name's type system can effectively
  eliminate common failures in many real-world programs.
\end{abstract}

 \section{Introduction}

\setlength{\epigraphwidth}{.8\textwidth}

\epigraph{I couldn't resist the temptation to put in a null reference
  [...] This has led to innumerable errors, vulnerabilities, and
  system crashes, which have probably caused a billion dollars of pain
  and damage in the last forty years.} {---Tony Hoare}

Modern languages offer features such as type systems, structured
control flow, objects, modules, etc. that make it possible to express
rich computations in terms of high-level abstractions rather than
machine-level code. Increasingly, many languages also offer
fundamental safety guarantees---e.g., well-typed programs do not go
wrong~\cite{Milner:1978}---that make entire categories of programming
errors simply impossible.

Unfortunately, although computer networks are critical infrastructure,
providing the communication fabric that underpins nearly all modern
systems, most networks are still programmed using low-level languages
that lack basic safety guarantees. Unsurprisingly, networks are
unreliable and remarkably insecure---e.g., the first step in a
cyberattack often involves compromising a router or other network
device~\cite{OConnor:2018aa,Kumar:2016aa}.

Over the past decade, there has been a shift to more flexible
platforms in which the functionality of the network is specified in
software. Early efforts related to software-defined networking
(SDN)~\cite{McKeown:2008aa, Casado:2007aa}, focused on the control plane
software that computes routes, balances load, and enforces security
policies, and modeled the data plane as a simple pipeline operating on
a fixed set of packet formats. However, there has been recent interest
in allowing the functionality of the data plane itself to be specified
as a program---e.g., to implement new protocols, make more efficient
use of hardware resources, or even relocate application-level
functionality into the network~\cite{Jin:2017, Jin:2018a}. In
particular, the P4 language~\cite{Bosshart:2014aa} enables the
functionality of a data plane to be programmed in terms of declarative
abstractions such as header types, packet parsers, match-action
tables, and structured control flow that a compiler maps down to an
underlying target device.

Unfortunately, while a number of P4's features were clearly inspired
by designs found in modern languages, the central abstraction for
representing packet data---header types---lacks basic safety
guarantees.  To a first approximation, a P4 header type can be thought
of as a record with a field for each component of the header. For
example, the header type for an IPv4 packet, would have a 4-bit
version field, an 8-bit time-to-live field, two 32-bit fields for the
source and destination addresses, and so on.

According to the P4 language specification, an instance of a header
type may either be valid or invalid: if the instance is valid, then
all operations produces a defined value, but if it is invalid, then
reading or writing a field yields an undefined result. In practice,
programs that manipulate invalid headers can exhibit a variety of
faults including dropping the packet when it should be forwarded, or
even leaking information from one packet to the next. In addition,
such programs are also not portable, since their behavior can vary
when executed on different targets.

The choice to model the semantics of header types in an unsafe way was
intended to make the language easier to implement on high-speed
routers, which often have limited amounts of memory. A typical P4
program might specify behavior for several dozen different protocols,
but any particular packet is likely to contain only a small handful of
headers. It follows that if the compiler only needs to represent the
valid headers at run-time, then memory requirements can be reduced.
However, while it may have benefits for language implementers, the
design is a disaster for programmers---it repeats Hoare's ``mistake,''
and bakes an unsafe feature deep into the design of a language that
has the potential to become the de-facto standard in a
multi-billion-dollar industry.

This paper investigates the design of a domain-specific language for
programmable data planes in which all packet data is guaranteed to
have a well-defined meaning and satisfy basic safety guarantees. In
particular, we present \name, a language with a precise semantics and
a static type system that can be used to obtain guarantees about the
validity of all headers read or written by the program. Although the
type system is mostly based on standard features, there are several
aspects of its design that stand out. First, to facilitate tracking
dependencies between headers---e.g. if the TCP header is valid, then
the IPv4 will also be valid---\name has an expressive algebra of types
that tracks validity information at a fine level of granularity.
Second, to accommodate the growing collection of extant P4 programs
with only modest modifications, \name uses a path-sensitive type
system that incorporates information from conditional statements,
forwarding tables, and the control plane to precisely track validity.

To evaluate our design for \name, we formalized the language and its
type system in a core calculus and proved the usual progress and
preservation theorems. We also implemented the \name type system in an
OCaml prototype, \ourchecker, and applied it to a suite of open-source
programs found on GitHub such as \texttt{switch.p4}, a large P4
program that implements the features found in modern data center
switches (specifically, it includes over four dozen different
switching, routing, and tunneling protocols, as well as multicast,
access control lists, among other features). We categorize common
failures and, for programs that fail to type-check, identify the root
causes and apply repairs to make them well-typed. We find that most
programs can be repaired with low effort from programmers, typically
by applying a modest number of simple repairs.

Overall, the main contributions of this paper are as follows:
\begin{itemize}
\item We propose \name, a type-safe enhancement of the P4 language that
  eliminates all errors related to header validity. 
\item We formalize the syntax and semantics of \name in a core
  calculus and prove that the type system is sound.
\item We implement our type checker in an OCaml prototype, \ourchecker.
\item We evaluate our type system empirically on over a dozen
  real-world P4 programs and identify common errors and repairs.
\end{itemize}

The rest of this paper is organized as follows. Section
\ref{sec:background} provides a more detailed introduction to P4 and
elaborates on the problems this work addresses. Section
\ref{sec:semantics} presents the design, operational semantics and
type system of \name and reports our type safety result. The results
of evaluating \name in the wild are presented in Section
\ref{sec:evaluation}. Section \ref{sec:related} surveys related work
and Section \ref{sec:conc} summaries the paper and outlines topics for
future work.

\section{Background and Problem Statement}
\label{sec:background}

This section introduces the main features of P4 and highlights the
problems caused by the unsafe semantics for header types.

\subsection{P4 Language}

P4 is a domain-specific language designed for processing
packets---i.e., arbitrary sequences of bits that can be divided into
(i) a set of pre-determined \emph{headers} that determine how the
packet will be forwarded through the network, and (ii) a
\emph{payload} that encodes application-level data. P4 is designed to
be protocol-independent, which means it handles both packets with
standard header formats (e.g., Ethernet, IP, TCP, etc.) as well as
packets with custom header formats defined by the programmer.
Accordingly, a P4 program first \emph{parses} the headers in the input
packet into a typed representation. Next, it uses a
\emph{match-action pipeline} to compute a transformation on those
headers---e.g., modifying fields, adding headers, or removing them.
Finally, a \emph{deparser} serializes the headers back into into a
packet, which can be output to the next device. A depiction of this
abstract forwarding model is shown in Figure~\ref{fig:p4-pipeline}.

The match-action pipeline relies on a data structure called a
\emph{match-action table}, which encodes conditional processing. More
specifically, the table first looks up the values being tested against
a list of possible entries, and then executes a further snippet of
code depending on which entry (if any) matched. However, unlike
standard conditionals, the entries in a match-action table are not
known at compile-time. Rather, they are inserted and removed at
run-time by the control plane, which may be logically centralized (as
in a software-defined network), or it may operate as a distributed
protocol (as in a conventional network).

The rest of this section describes P4's typed representation, how the
parsers, and deparsers convert between packets and this typed
representation, and how control flows through the match-action
pipeline.

\begin{figure}[t]
\centering \includegraphics[width=0.65\textwidth]{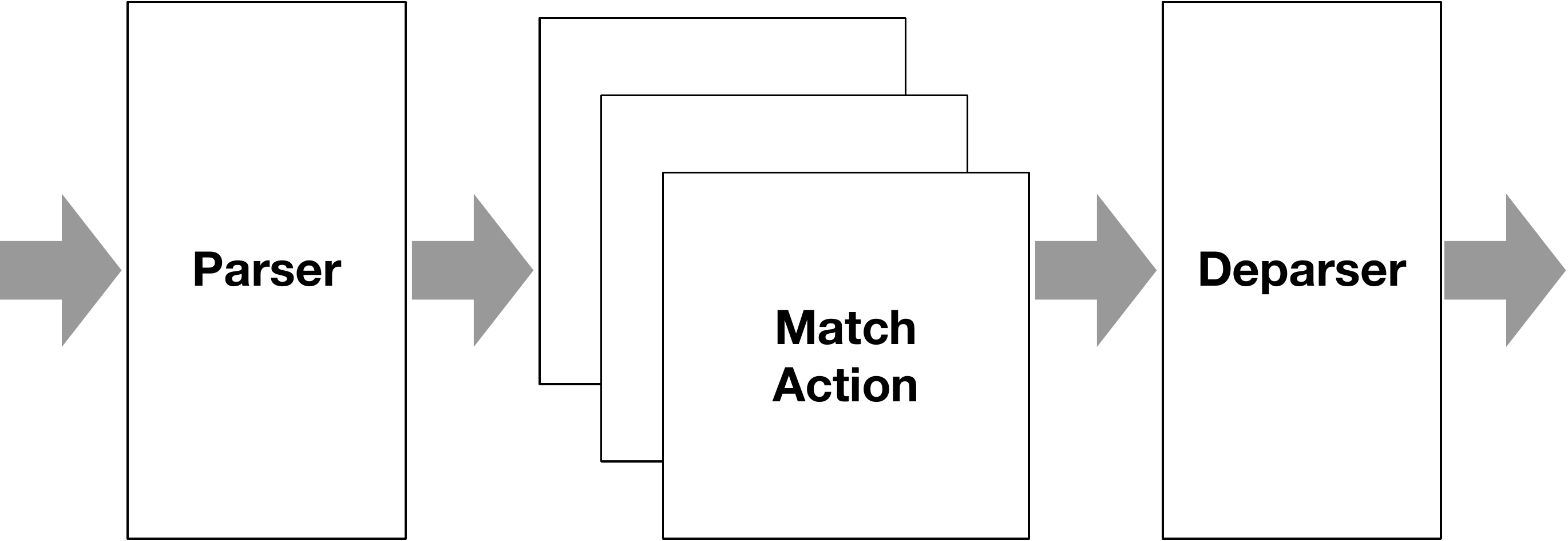}
\caption{Abstract forwarding model.}
\label{fig:p4-pipeline}
\end{figure}

\subparagraph{Header Types and Instances}
Header types specify the internal representation of packet data within
a P4 program. For example, the first few lines of the following
snippet of code:
\begin{lstlisting}[style=p4]
header_type ethernet_t {
  fields {
    dstAddr: 48;
    srcAddr: 48;
    etherType: 16;    
  }
}
header ethernet_t ethernet;
header ethernet_t inner_ethernet;
\end{lstlisting}
declare a type (\texttt{ethernet\_t}) for the Ethernet header with
fields \texttt{dstAddr}, \texttt{srcAddr}, and \texttt{etherType}. The
integer literals indicate the bit width of each field. The next two
lines declare two \texttt{ethernet\_t} instances (\texttt{ethernet}
and \texttt{inner\_ethernet}) with global scope.

\begin{figure}[t]
\noindent\begin{minipage}[c]{0.48\textwidth}
\includegraphics[width=\linewidth]{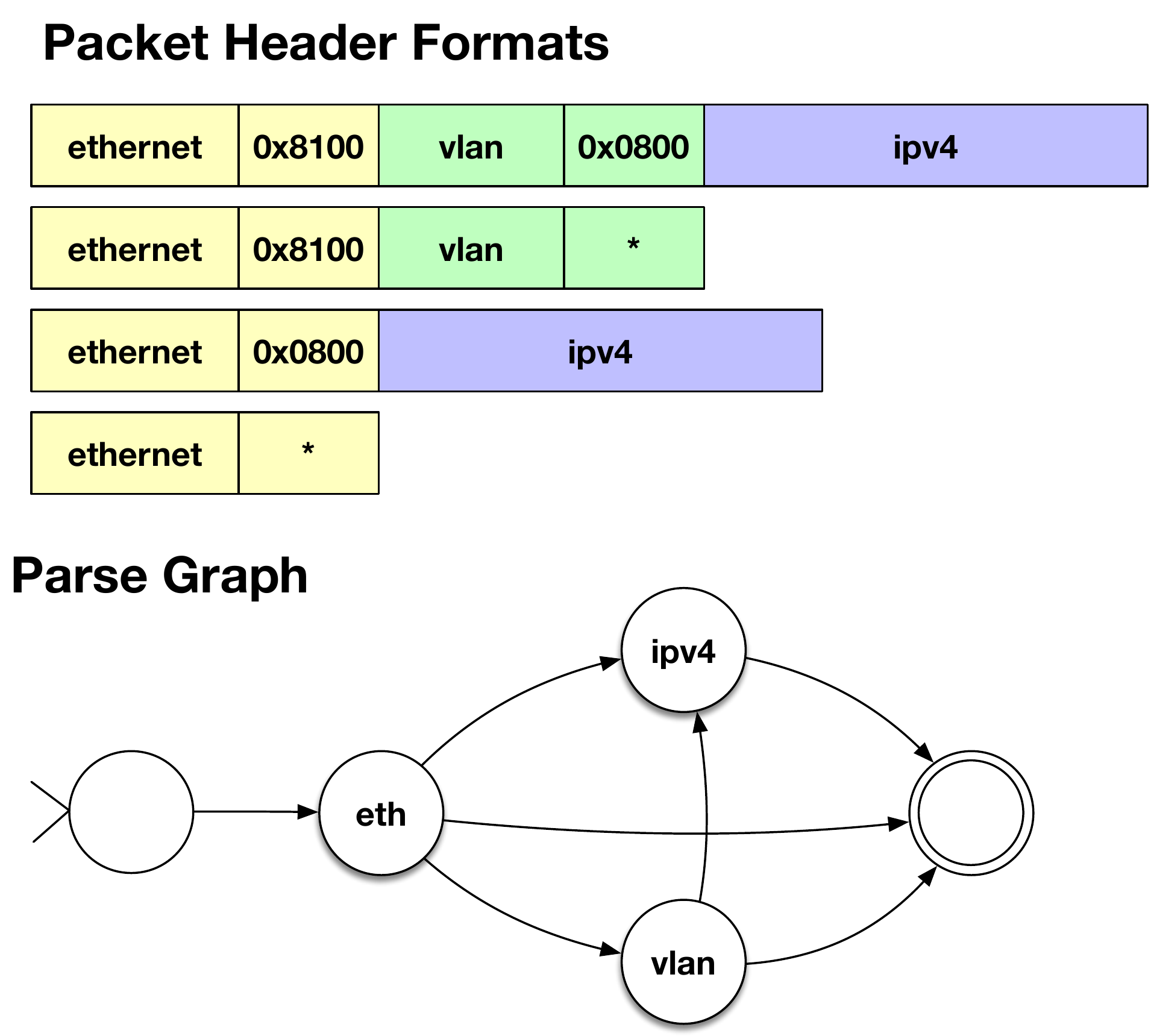}
\end{minipage}~\vrule~
\begin{minipage}[c]{0.5\textwidth}\vspace{4.5pt}
\begin{lstlisting}[style=p4,framexbottommargin=1.5pt]
parser start { 
  return parse_eth; 
}
parser parse_eth {
 extract(ethernet);
 return select(latest.etherType){
   0x8100 : parse_vlan;
   0x0800 : parse_ipv4;
   default: ingress;
  }
}
parser parse_vlan {
  extract vlan {
  return select(latest.etherType){
    0x0800: parse_ipv4;
    default: ingress;
  }
}
parser parse_ipv4 {
  extract(ipv4);
  return ingress;
}
\end{lstlisting}
\end{minipage}
  \caption{(Left) Header formats and parse graph that extracts an Ethernet header optionally followed by VLAN and/or IPv4 headers. (Right) P4 code implementing the same parser.}
  \label{fig:p4-parser}
\end{figure}

\subparagraph{Parsers}
A P4 parser specifies the order in which headers are extracted from
the input packet using a simple abstraction based on finite state
machines. Extracting into an header instance populates its fields with
the requisite bits of the input packet and marks the instance as
valid.
Figure~\ref{fig:p4-parser} depicts
a visual representation of a parse graph for three common headers:
Ethernet, VLAN, and IPv4. The instance \texttt{ethernet} is extracted
first, optionally followed by a \texttt{vlan} instance, or an
\texttt{ipv4} instance, or both.

\begin{figure}
  \begin{minipage}{0.4\textwidth}
\begin{lstlisting}[style=p4]
table forward {
 reads {
  ipv4 : valid;
  vlan : valid;
  ipv4.dstAddr: ternary;
 }
 actions = { 
  nop;
  next_hop; 
  remove;
 }
 default_action : nop();
}
\end{lstlisting}
\end{minipage}
\begin{minipage}{0.59\textwidth}
~~\textbf{Runtime Contents of} \texttt{forward}
\[\boxed{\begin{array}{l|l|l|l|l}
  \multicolumn{3}{c|}{\textbf{Pattern}} & \multicolumn{2}{c}{\textbf{Action}} \\
  \hline
  \texttt{ipv4} & \texttt{vlan} & \texttt{ipv4.dstAddr} & \multicolumn{1}{c|}{\textbf{Name}} & \multicolumn{1}{c}{\textbf{Data}}\\
  \hline\hline
  \texttt{1} & \texttt{0} & \texttt{10.0.0.*} & \texttt{next\_hop} &s,d \\
  \hline
  \texttt{0} & \texttt{1} & \texttt{*} & \texttt{remove} & \end{array}} \]
\end{minipage}
\caption{P4 tables. \texttt{forward} reads the validity of the
  \texttt{ipv4} and \texttt{vlan} header instances and the
  \texttt{dstAddr} field of the \texttt{ipv4} header instance, and
  calls one of its actions: \texttt{nop}, \texttt{next\_hop}, or \texttt{remove}.  }
\label{fig:p4-tables}
\end{figure}

\subparagraph{Tables and Actions}
The bulk of the processing for each packet in a P4 program is
performed using match-action tables that are populated by the control
plane. A table (such as the one in Figure~\ref{fig:p4-tables}) is
defined in terms of (i) the data it \texttt{reads} to determine a
matching entry (if any), (ii) the \texttt{actions} it may execute, and
(iii) an optional \texttt{default\_action} it executes if no matching
entry is found.

The behavior of a table depends on the entries installed at run-time
by the control-plane. Each table entry contains a match pattern, an
action, and action data. Intuitively, the match pattern specifies the
bits that should be used to match values, the action is the name of a
pre-defined function (such as the ones in
Figure~\ref{fig:p4-actions}), and the action data are the arguments to
that function.  Operationally, to process a packet, a table first
scans its entries to locate the first matching entry. If such a
matching entry is found, the packet is said to ``hit'' in the table,
and the associated action is executed. Otherwise, if no matching entry
is found, the packet is said to ``miss'' in the table, and the
\texttt{default\_action} (which is a no-op if unspecified) is
executed.

A table also specifies the \emph{match-kind} that describes how each
header field should match with the patterns provided by the control
plane. In this paper, we focus our attention on \texttt{exact},
\texttt{ternary}, and \texttt{valid} matches. An \texttt{exact} match
requires the bits in the packet be exactly equivalent to the bits in
the controller-installed pattern. A \texttt{ternary} match allows
wildcards in arbitrary positions, so the controller-installed pattern
\texttt{0*} would match bit sequences \texttt{00} and \texttt{01}. A
\texttt{valid} match can only be applied to a header instance and
simply checks the validity bit of that instance.

For example, in Figure~\ref{fig:p4-tables}, the \texttt{forward} table
is shown populated with two rules. The first rule tests whether
\texttt{ipv4} is valid, \texttt{vlan} is invalid, and the first 24
bits of \texttt{ipv4.srcAddr} equal \texttt{10.0.0}, and then applies
\texttt{next\_hop} with arguments $s$ and $d$ (which stand for source
and destination addresses). The second rule checks that \texttt{ipv4}
is invalid, then that \texttt{vlan} is valid, and skips evaluating the
value of \texttt{ipv4.dstAddr} (since it is wildcarded), to finally
apply the \texttt{remove} action.

Actions are functions containing sequences of primitive commands that
perform operations such as adding and removing headers, assigning a
value to a field, adding one field to another, etc. For example,
Figure~\ref{fig:p4-actions} depicts two actions: the
\texttt{next\_hop} action updates the Ethernet source and destination
addresses with action data from the controller; and the
\texttt{remove} action copies EtherType field from the \texttt{vlan}
header instance to the \texttt{ethernet} header instance and
invalidates the \texttt{vlan} header.

\begin{figure}[t]
\begin{minipage}{0.55\textwidth}
\begin{lstlisting}[style=p4]
action next_hop(src, dst) {
 modify_field(ethernet.srcAddr, src);
 modify_field(ethernet.dstAddr, dst);
 subtract_from_field(ipv4.ttl, 1); 
}
\end{lstlisting}
\end{minipage}
\begin{minipage}{0.45\textwidth}
\begin{lstlisting}[style=p4]
action remove() {
 modify_field(
   ethernet.etherType, 
   vlan.etherType);
 remove_header(vlan);
}
\end{lstlisting}
\end{minipage}
\caption{P4 actions.}
\label{fig:p4-actions}
\end{figure}

\subparagraph{Control}

A P4 control block can use standard control-flow constructs to execute
a pipeline of match-action tables in sequence. They manage the order
and conditions under which each table is executed. The
\texttt{ingress} control block begins to execute as soon as the parser
completes. The \texttt{apply} command executes a table and
conditionals branch on a boolean expression such as the validity of a
header instance.
\begin{lstlisting}[style=p4]
control ingress {
 if(valid(ipv4) or valid(vlan)) {
  apply(forward);
 }
}
\end{lstlisting}
The above code applies the \texttt{forward} table if one of
\texttt{ipv4} or \texttt{vlan} is valid.

\subparagraph{Deparser} The deparser reassembles the final output packet, 
after all processing has been done by serializing each valid header
instance in some order. In \PPPP{14}, the version of P4 we consider in
this paper, the compiler automatically generates the deparser from the
parser---i.e., for our example program, the deparser produces a packet
with Ethernet, VLAN (if valid), and IPv4 (if valid), in that order.

\subsection{Common Bugs in P4 Programs}
\label{sec:taxonomy}

Having introduced the basic features of P4, we now present five
categories of bugs found in open-source programs that arise due to
reading and writing invalid headers---the main problem that \name
addresses. There is one category for each of the following syntactic
constructs: (1) parsers, (2) controls, (3) table reads, (4) table
actions, and (5) default actions.

To identify the bugs we surveyed a benchmark suite of 15 research and
industrial P4 programs that are publicly available on GitHub and
compile to the BMv2~\cite{barefoot-bmv2:2018} backend. Later, in
Section~\ref{sec:evaluation}, we will report the number of occurrences
of each of these categories in our benchmark suite detected by our
approach.\footnote{We focus on $\text{P4}_{14}$ programs in this
  paper, but the issues we address also persist in the latest version
  of the language, $\text{P4}_{16}$. We did not consider
  $\text{P4}_{16}$ in this paper due to the smaller number of programs
  currently available.}

\subsubsection{Parser Bugs}
\label{sec:parser-too-general}

The first class of errors is due to the parser being too conservative
about dropping malformed packets, which increases the set of headers
that may be invalid in the control pipeline. In most programs, the
parser chooses which headers to \texttt{extract} based on the fields
of previously-extracted headers using P4's version of a switch
statement, \texttt{select}.
Programmers often fail to handle packets fall through to the
\texttt{default} case of these \texttt{select} statements.

An example from the \textsc{NetHCF}~\cite{Zhang:2018,Bai:2018}
codebase illustrates this bug. \textsc{NetHCF} is a research tool
designed to combat TCP spoofing.
As shown in Figure~\ref{fig:NetHCF-parser}, the parser handles TCP
packets in \texttt{parse\_ipv4} and redirects all other packets to the
\texttt{ingress} control. Unfortunately, the \texttt{ingress} control
(bottom right) does not check whether \texttt{tcp} is valid before
accessing \texttt{tcp.syn} to check whether it is equal to
\texttt{1}. This is unsafe since \texttt{tcp} is not guaranteed to be
valid even though it is required to be valid in the \texttt{ingress}
control.

\begin{figure}[t]
\begin{minipage}[t]{\textwidth}
\begin{minipage}[t]{0.495\textwidth}
\begin{lstlisting}[style=fixup]
+/* UNSAFE */+
~parser_exception unsupported {
  parser_drop;
}~
|parser| parse_ethernet {
 |extract|(ethernet);
 |return| select(ethernet.etherType) {
  0x0800 : parse_ipv4;
  |default| : ingress;
 
 }
}
|parser| parse_ipv4 {
 |extract|(ipv4);
 |return| select(ipv4.protocol) {
  6 : parse_tcp;
  |default| : ingress;

 }
}
\end{lstlisting}
\end{minipage}~\vrule~\begin{minipage}[t]{0.5\textwidth}
\begin{lstlisting}[style=fixup]
+/* SAFE */+
|parser_exception| unsupported {
  parser_drop;
}
@parser parse_ethernet {
 extract(ethernet);
 return select(ethernet.etherType) {
  0x0800 : parse_ipv4;
@  |default| :
   |parser_error| unsupported;
@  }
}@
@parser parse_ipv4 {
 extract(ipv4);
 return select(ipv4.protocol) {
 6 : parse_tcp;
@ |default| : |parser_error| unsupported;
@  }
}@
\end{lstlisting}
\end{minipage}
\end{minipage}
\hrule
\begin{minipage}[t]{0.495\textwidth}
\begin{lstlisting}[style=p4]
parser parse_tcp {
  extract(tcp);
  return ingress;
}
\end{lstlisting}
\end{minipage}\begin{minipage}[t]{0.485\textwidth}
\begin{lstlisting}[style=p4]
control ingress {
  if(tcp.syn == 1 and ...){...}
}
\end{lstlisting}
\end{minipage}
  \caption{Left: unsafe code in \textsc{NetHCF}; Right: our type-safe fix; Bottom: common code.}
  \label{fig:NetHCF-parser}
\end{figure}

To fix this bug, we can define a parser exception,
\texttt{unsupported}, with an handler that drops packets, thereby
protecting the \texttt{ingress} from having to handle unexpected
packets. Note however, that this fix might not be the best solution,
since it alters the original behavior of the program. However, without
knowing the programmer's intention, it is generally not possible to
automatically repair a program with undefined behavior.

\subsubsection{Control Bugs}
\label{sec:unsafe-reference}

Another common bug occurs when a table is executed in a context in
which the instances referenced by that table are not guaranteed to be
valid. This bug can be seen in the open-source code for
\textsc{NetCache}~\cite{Jin:2018,Jin:2017}, a system that uses P4 to
implement a load-balancing cache. The parser for \textsc{NetCache}
reserves a specific port (\texttt{8888}) to handle its special-purpose
traffic, a condition that is built into the parser, which extracts
\texttt{nc\_hdr} (i.e., the
\textsc{\underline{N}et\underline{C}ache}-specific header) only when
UDP traffic arrives from port \texttt{8888}. Otherwise, it performs
standard L2 and L3 routing. Unfortunately, the \texttt{ingress}
control node (Figure~\ref{fig:netcache-p4-unsafety}) tries to access
\texttt{nc\_hdr} before checking that it is valid. Specifically, the
\texttt{reads} declaration for the \texttt{check\_cache\_exists}
table, which is executed first in the \texttt{ingress} pipeline,
presupposes that \texttt{nc\_hdr} is valid. The invocation of the
\texttt{process\_value} table (not shown) contains another instance of
the same bug.

\begin{figure}[t]
\begin{minipage}{\textwidth}
\begin{minipage}[t]{0.485\textwidth}    
\begin{lstlisting}[style=p4]
/* UNSAFE */
control ingress { 

 process_cache();
 process_value();

 apply(ipv4_route);
}
\end{lstlisting}
\end{minipage}~\vrule~\begin{minipage}[t]{0.485\textwidth}
\begin{lstlisting}[style=fixup]
+/* SAFE */+
@control ingress {@
  if(|valid|(nc_hdr)) {
@    process_cache();
    process_value();
@ }@
 apply(ipv4_route);
}@
\end{lstlisting}
\end{minipage}
\end{minipage}
\hrule
\begin{minipage}{0.485\textwidth}
\begin{lstlisting}[style=p4]
control process_cache {
    apply(check_cache_exist);
    ...
}
\end{lstlisting}
\end{minipage}\begin{minipage}{0.485\textwidth}
\begin{lstlisting}[style=p4]
table check_cache_exist {
    reads { nc_hdr.key }
    actions { ... }
}
\end{lstlisting}
\end{minipage}
  \caption{Left: unsafe code in \textsc{NetCache}; Right: our type-safe fix; Bottom: Common code}
  \label{fig:netcache-p4-unsafety}
\end{figure}

To fix these bugs, we can wrap the calls to \texttt{process\_cache}
and \texttt{process\_value} in an conditional that checks the validity
of the header \texttt{nc\_hdr}. This ensures that \texttt{nc\_hdr} is valid when
\texttt{process\_cache} refers to it.

\subsubsection{Table Reads Bugs}
\label{sec:read-errors}

A similar bug arises in programs that contain tables that first match
on the validity of certain header instances before matching on the
fields of those instances. The advantage of this approach is that
multiple types of packets can be processed in a single table, which
saves memory. However, if implemented incorrectly, this programming
pattern can lead to a bug, in which the \texttt{reads} declaration
matches on bits from a header that may not be valid!

The \texttt{switch.p4} program exhibits an exemplar of this bug; it is
a ``realistic production switch'' developed by Barefoot Networks,
meant to be used ``as-is, or as a starting point for more advanced
switches''~\cite{Kodeboyina:2015}.

An archetypal example of table reads bugs is the
\texttt{port\_vlan\_mapping} table of \texttt{switch.p4}
(Figure~\ref{fig:switch-reads}).
This table is invoked in a context where it is not known which of the
VLAN tags is valid, despite containing references to both
\texttt{vlan\_tag\_[0]} and \texttt{vlan\_tag\_[1]} in the
\texttt{reads} declaration. Adroitly, the programmer has guarded the
references to $\texttt{vlan\_tag\_[}i\texttt{].vid}$ with keys that
test the validity of $\texttt{vlan\_tag\_[}i\texttt{]}$, for
$i=\texttt 1, \texttt 2$. Unfortunately, as written, it is impossible
for the control plane to install a rule that will always avoid reading
the value of an invalid header. The first match will check whether the
\texttt{vlan\_tag\_[0]} instance is invalid, which is safe. However,
the very next match will try to read the value of the
$\texttt{vlan\_tag\_[0].vid}$ field, even when the instance is
invalid! This attempt to access an invalid header results in undefined
behavior, and is therefore a bug.

It is worthy to note that this code is not actually buggy on some
targets---in particular, on targets where invalid headers are
initialized with \texttt{0}. However, \texttt{0}-initialization is not
prescribed by the language specification, and therefore this code is
not portable across other targets.

\begin{figure}[t]
\begin{minipage}{0.485\textwidth}
\begin{lstlisting}[style=p4]
/* UNSAFE */
table port_vlan_mapping { 
 reads {
  vlan_tag_[0] : valid;
  vlan_tag_[0].vid : exact;
  vlan_tag_[1] : valid;
  vlan_tag_[1].vid : exact;
 } ...
}
\end{lstlisting}
\end{minipage}~\vrule~
\begin{minipage}{0.485\textwidth}
\begin{lstlisting}[style=fixup]
+/* SAFE */+
@table port_vlan_mapping {
 reads {
  vlan_tag_[0] : valid;@
  vlan_tag_[0].vid : |ternary|;
@  vlan_tag_[1] : valid;
@  vlan_tag_[1].vid : |ternary|;@
 } ...
}@
\end{lstlisting}
\end{minipage}
  \caption{Left: a table in \texttt{switch.p4} with unprotected
    conditional reads; Right: our type-safe fix.}
  \label{fig:switch-reads}
\end{figure}

The naive solution to fix this bug is to refactor the table into four
different tables (one for each combination of validity bits) and then
check the validity of each header before the tables are invoked.
While this fix is perfectly safe, it can result in a combinatorial
blowup in the number of tables, which is clearly undesirable both for
efficiency reasons and because it requires modifying the control
plane.

Fortunately, rather than factoring the table into four tables, we can
replace the \texttt{exact} match-kinds with \texttt{ternary}
match-kinds, which permit matching with wildcards. In particular, the
control plane can install rules that match invalid instances using an
all-wildcard patterns, which is safe.

In order for this solution to typecheck, we need to assume that the
control plane is well-behaved---i.e. that it will install wildcards
for the \texttt{ternary} matches whenever the header is invalid. In
our implementation, we print a warning whenever we make this kind of
assumption so that the programmer can confirm that the control plane
is well-behaved.

\subsubsection{Table Action Bugs}
\label{sec:conflicting-actions}

Another prevalent bug, in our experience, arises when distinct actions
in a table require different (and possible mutually exclusive) headers
to be valid.
This can lead to two problems: (i) the control plane can populate the
table with unsafe match-action rules, and (ii) there may be no
validity checks that we can add to the control to make all of the
actions typecheck.

The \texttt{fabric\_ingress\_dst\_lkp} table
(Figure~\ref{fig:switch-buggy}) in \texttt{switch.p4} provides an
example of this misbehavior.
The \texttt{fabric\_ingress\_dst\_lkp} table reads the value of
\texttt{fabric\_hdr.dstDevice} and then invokes one of several
actions: \texttt{term\_cpu\_packet},
\texttt{term\_fabric\_unicast\_packet}, or
\texttt{term\_fabric\_multicast\_packet}. Respectively, these actions
require the \texttt{fabric\_hdr\_cpu}, \texttt{fabric\_hdr\_unicast},
and \texttt{fabric\_hdr\_multicast} (respectively) headers to be
valid.
Unfortunately the validity of these headers is mutually
exclusive.\footnote{There are other actions in the real
  \texttt{fabric\_ingress\_dst\_lkp}, but these three actions
  demonstrate the core of the problem.}

\begin{figure}[ht]
\begin{minipage}{0.485\textwidth}
\begin{lstlisting}[style=p4]
/* UNSAFE */
table fabric_ingress_dst_lkp {
 reads {
  fabric_hdr.dstDevice : exact;
 }



 actions {
  term_cpu_packet; 
  term_fabric_unicast_packet;
  term_fabric_multicast_packet;
 }
} 
\end{lstlisting}
\end{minipage}~\vrule~\begin{minipage}{0.485\textwidth}
\begin{lstlisting}[style=fixup]
+/* SAFE */+
@table fabric_ingress_dst_lkp {
 reads {
  fabric_hdr.dstDevice : exact;
@  fabric_hdr_cpu : |valid|;
  fabric_hdr_unicast: |valid|;
  fabric_hdr_multicast: |valid|;
@ }
 actions {
  term_cpu_packet; 
  term_fabric_unicast_packet;
  term_fabric_multicast_packet;
 }
}@ 
\end{lstlisting}
\end{minipage}
  \caption{Left: unsafe code in \texttt{switch.p4}; Right: our type-safe fix.}
  \label{fig:switch-buggy}
\end{figure}

Since \texttt{fabric\_hdr\_cpu}, \texttt{fabric\_hdr\_unicast}, and
\texttt{fabric\_hdr\_multicast} are mutually exclusive, there is no
single context that makes this table safe.
The only facility the table provides to determine which action should
be called is \texttt{fabric\_hdr.dstDevice}.
However, the P4 program doesn't establish a relationship between the
value of \texttt{fabric\_hdr.dstDevice} and the validity of any of
these three header instances.
So, the behavior of this table is only well-defined when the input
packets are well-formed, an unreasonable expectation for real
switches, which may receive \emph{any} sequence of bits ``on the
wire.''

We fix this bug by including validity matches in the \texttt{reads}
declaration, as shown in Figure~\ref{fig:switch-buggy}.  As in
Section~\ref{sec:read-errors}, this solution avoids combinatorial
blowup and extensive control plane refactoring.

In order to type-check this solution, we need to make an assumption
about the way the control plane will populate the table.
Concretely, if an action $a$ only typechecks if a header $h$ is valid,
and $h$ is not necessarily valid when the table is applied, we assume
that the control plane will only call $a$ if $h$ is matched as valid.
For example, \texttt{fabric\_hdr\_cpu} is not known to be valid when
(the fixed version of) \texttt{fabric\_ingress\_dst\_lkp} is applied,
so we assume that the control plane will only call action
\texttt{term\_cpu\_packet} when \texttt{fabric\_hdr\_cpu} is matched
as valid.
Again, our implementation prints these assumptions as warnings to the
programmer, so they can confirm that the control plane will satisfy
these assumptions.

\subsubsection{Default Action Bugs}
\label{sec:default-action}

Finally, the \emph{default action} bugs occur when the programmer
incorrectly assumes that a table performs some action when a packet
misses.
The \textsc{NetCache} program (described in
Section~\ref{sec:unsafe-reference}) exhibits an example of this bug,
too. The bug is shown in Figure~\ref{fig:default-action}, where the
table \texttt{add\_value\_header\_1} is expected to make the
\texttt{nc\_value\_1} header valid, which is done in the
\texttt{add\_value\_header\_1\_act} action.
The control plane may refuse to add any rules to the table, which
would cause all packets to miss, meaning that the
\texttt{add\_value\_header\_1\_act} action would never be called and
\texttt{nc\_value\_1} may not be valid.
To fix this error, we simply set the default action for the table to
\texttt{add\_value\_header\_1\_act}, which will force the table to
remove the header no matter what rules the controller installs.

\begin{figure}[ht]
\begin{minipage}{0.485\textwidth}
\begin{lstlisting}[style=p4]
/* UNSAFE */
table add_value_header_1 {
 actions {
  add_value_header_1_act;
 }

}
\end{lstlisting}
\end{minipage}~\vrule~
\begin{minipage}{0.485\textwidth}
\begin{lstlisting}[style=fixup]
+/* SAFE */+
@table add_value_header_1 {
 actions {
  add_value_header_1_act;
 }
@ |default_action| : add_value_header_1_act();@
}@
\end{lstlisting}
\end{minipage}
\caption{Left: unsafe code in \textsc{NetCache}; Right: our type-safe fix.}
\label{fig:default-action}
\end{figure}

\subsection{A Typing Discipline to Eliminate Invalid References}

In this paper, we propose a type system to increase the safety of P4
programs by detecting and preventing the classes of bugs defined in
Section~\ref{sec:taxonomy}. 
These classes of bugs all manifest when a program attempts to access
an invalid header---differentiating themselves only in their syntactic
provenance.
The type system that we present in the next section uses a
path-sensitive analysis, coupled with occurrence typing
\cite{Tobin-Hochstadt:2010aa}, to keep track of which headers are
guaranteed to be available at any program point---rejecting programs
that reference headers that \emph{might} be uninitialized---thus,
preventing all references to invalid headers.

Of course, in general, the problem of deciding header-validity can
depend on arbitrary data, so a simple type system cannot hope to fully
determine all scenarios when an instance will be valid. Indeed,
programmers often use a variety of data-dependent checks to ensure
safety. For instance, the control snippet shown on the left-hand side
of Figure \ref{fig:header-validation} will not produce undefined
behavior, given a parser that  chooses between parsing an
$\texttt{ipv4}$ header when $\texttt{ethernet.etherType}$ is
$\texttt{0x0800}$, an $\texttt{ipv6}$ header when
$\texttt{ethernet.etherType}$ is $\texttt{0x86DD}$, and throws a
parser error otherwise.

\begin{figure}[t]
\begin{minipage}{0.63\textwidth}
\begin{lstlisting}[style=p4]
if(ethernet.etherType == 0x0800) {
 apply(ipv4_table); 
} else if(ethernet.etherType == 0x086DD) {
 apply(ipv6_table);
}
\end{lstlisting}
\end{minipage}~\vrule~\begin{minipage}{0.37\textwidth}
\begin{lstlisting}[style=p4]
if(valid(ipv4)) {
 apply(ipv4_table);
} else if(valid(ipv6)) {
 apply(ipv6_table);
}
\end{lstlisting}
\end{minipage}
\caption{Left: data-dependent header validation; Right: syntactic
  header validation.}
\label{fig:header-validation}
\end{figure}

While this code is safe in this very specific context, it quickly
becomes unsafe when ported to other contexts.
For example in \texttt{switch.p4}, which performs tunneling, the
egress control node copies the \texttt{inner\_ethernet} header into
the \texttt{ethernet}; however the \texttt{inner\_ethernet} header may
not be valid at the program point where the copy is performed.
This behavior is left undefined~\cite{P4:2017aa}, a target is free to
read arbitrary bits, in which case it could decide to call the
\texttt{ipv4\_table} despite \texttt{ipv4} being invalid.

To improve the maintainability and portability of the code, we can
replace the data-dependent checks with validity checks, as illustrated
by the control snippet shown on the right-hand side of
Figure~\ref{fig:header-validation}. The validity checks assert
precisely the preconditions for calling each table, so that no matter
what context this code snippet is called in, it is impossible for the
\texttt{ipv4\_table} to be called when the \texttt{ipv4} header is
invalid.

In the next section, we develop a core calculus for \name with a type
system that eliminates references to invalid headers, encouraging
programers to replace data-dependent checks with header-validity
checks.
 \section{\name}
\label{sec:semantics}
This section discusses our design goals for \name and the choices we
made to accommodate them, and formalizes the language's syntax,
small-step semantics, and type system.

\subsection{Design}

Our primary design goal for \name is to develop a core calculus that
models the main features of \PPPP{14} and \PPPP{16}, while
guaranteeing that all data from packet headers is manipulated in a
safe and well-defined manner. We draw inspiration from Featherweight
Java~\cite{Igarashi:2001aa}---i.e., we model the essential features of
P4, but prune away unnecessary complexity. The result is a minimal
calculus that is easy to reason about, but can still express a large
number of real-world data plane programs. For instance, P4 and \name
both achieve protocol independence by allowing the programmer to
specify the types of packet headers and their order in the bit stream.
Similarly, \name mimics P4's use of tables to interface with the
control-plane and decide which actions to execute at run-time. 

So what features does \name prune away? We omit a number of constructs
that are secondary to how packets are processed---e.g.,
\texttt{field\_list\_calculations}, \texttt{parser\_exceptions},
\texttt{counters}, \texttt{meters}, \texttt{action profiles}, etc. It
would be relatively straightforward to add these to the
calculus---indeed, most are already handled in our prototype---at the
cost of making it more complicated. We also modify or distill several
aspects of P4. For instance, P4 separates the parsing phase and the
control phase. Rather than unnecessarily complicating the syntax of
\name, we allow the syntactic objects that represent parsers and
controls to be freely mixed. We make a similar simplification in
actions, informally enforcing which primitive commands can be invoked
within actions (e.g., field modification, but not conditionals).

Another challenge arises in trying to model core behaviors of both
\PPPP{14} and \PPPP{16}, in that they each have different type systems
and behaviors for evaluating expressions. Our calculus abstracts away
expression typing and syntax variants by assuming that we are given a
set of constants $k$ that can represent values like \texttt{0} or
\texttt{True}, or operators such as $\texttt{\&\&}$ and $\texttt{?:}$.
We also assume that these operators are assigned appropriate (i.e.,
sound) types. With these features in hand, one can instantiate our
type system over arbitrary constants.

Another departure from P4 is related to the \syntax{add} command, which
presents a complication for our expression types.
The analogous \texttt{add\_header} action in \PPPP{14} simply modifies the
validity bit, without initializing any of the fields. This means that
accessing any of the header fields before they have been manually
initialized reads a non-deterministic value. Our calculus neatly
sidesteps this issue by defining the semantics of the
\syntax{add(h)} primitive to initialize each of the fields of $h$ to a
default value. We assume that along with our type constants there is a
function $\syntax{init}$ that accepts a header type $\eta$ and
produces a header instance of type $\eta$ with all fields set to their
default value. Note that we could have instead modified our type
system to keep track of the definedness of header fields as well as
their validity. However, for simplicity we choose to focus on
header validity in this paper.

The portion of our type system that analyzes header validity, requires
some way of keeping track of which headers are valid. Naively, we can
keep track of a set of which headers are guaranteed to be valid on all
program paths, and reject programs that reference headers not in this set. However, this
coarse-grained approach would lead to a large number of false positives.
For instance, the parser shown in Figure~\ref{fig:p4-parser} parses an
\texttt{ethernet} header and then either boots to \texttt{ingress} or
parses an \texttt{ipv4} header and then either proceeds to the
\texttt{ingress} or parses an \texttt{vlan} header.
Hence, at the \texttt{ingress} node, the only header that is
guaranteed to be valid is the \texttt{ethernet} header.
However, it is certainly safe to write an \texttt{ingress} program
that references the \texttt{vlan} header after checking it was valid.
To reflect this in the type system we introduce a special construct
called $\syntax{valid(h)~c_1~else~c_2}$, which executes $c_1$
if $h$ is valid and $c_2$ otherwise.
When we type check this command, following previous work on occurrence
typing~\cite{Tobin-Hochstadt:2010aa}, we check $c_1$ with the
additional fact that $h$ is valid, and we check $c_2$ with the
additional fact that $h$ is not valid.

Even with this enhancement, this type system would still be overly
restrictive. To see why, let us augment the parser from
Figure~\ref{fig:p4-parser} with the ability to parse TCP and UDP
packets: after parsing the \texttt{ipv4} header, the parser can
optionally extract the \texttt{vlan}, \texttt{tcp}, or \texttt{udp}
header and then boot control flow to ingress.
Now suppose that we have a table \texttt{tcp\_table} that refers to
both \texttt{ipv4} and \texttt{tcp} in its \texttt{reads} declaration,
and that \texttt{tcp\_table} is (unsafely) applied immediately in the
\texttt{ingress}.
Because the validity of \texttt{tcp} implies the validity of
\texttt{ipv4}, it should be safe to check the validity of \texttt{tcp}
and then apply \texttt{tcp\_table}. However, using the representation
of valid headers as a set, we would need to ascertain the validity of
\texttt{ipv4} and of \texttt{tcp}.

To solve this problem, we enrich our type representation to keep track
of dependencies between headers.
More specifically, rather than representing all headers guaranteed to
be valid in a set, we use a finer-grained representation---a set of
sets of headers that might be valid at the current program point.
For a given header reference to be safe, it must to be a member of all
possible sets of headers---i.e., it must be valid on all paths through
the program that reach the reference.

Overall, the combination of an expressive language of types and a
simple version of occurrence typing allows us to capture
dependencies between headers and perform useful static analysis of the
dynamic property of header validity.

The final challenge with formally modelling P4 lies in its interface
with the control-plane, which populates the tables and provides
arguments to the actions.
While the control-plane's only methodology for managing switch behavior
is to populate the match-action tables with forwarding entries, it is
perfectly capable of producing undefined behavior.
However, if we assume that the controller is well-intentioned, we can
prove the safety of more programs.

In our formalization, to streamline the presentation, we model the
control plane as a function $\mathcal{CA} (t, H) = (a_i, \bar v)$ that
takes in a table $t$ and the current headers $H$ and produces the
action to call $a_i$ and the (possibly empty) action data arguments
$\bar v$.
We also use a function $\mathcal{CV} (t) = \bar S$ that
  analyzes a table $t$ and produces a list of sets of valid headers
  $\bar S$, one set for each action, that can be safely assumed valid
  when the entries are populated by the control plane.  From the table
  declaration and the header instances that can be assumed valid,
  based on the match-kinds, we can derive a list of match key
  expressions $\bar e$ that must be evaluated when the table is
  invoked.
Together, these functions model the run-time interface between the
switch and the controller.
In order to prove progress and preservation, we assume that
$\mathcal{CV}$ and $\mathcal{CA}$ satisfy three simple correctness
properties: (1) the control plane can safely install table entries
that never read invalid headers, (2) the action data provided by the
control plane has the types expected by the action, and (3) the
control plane will only assume valid headers for an action that are
valid for a given packet.
See
\iftoggle{techreport}{Appendix~\ref{app:control-plane-assumptions}}{technical
    report} for details.

\subsection{Syntax}

The syntax of \name is shown in Figure \ref{fig:syntax}. To
lighten the notation, we write $\bar{x}$ as shorthand for a (possibly
empty) sequence $x_1,...,x_n$.

\begin{figure}[t!]
  \begin{minipage}{0.52\textwidth}
    \[\begin{array}{@{}l@{~}c@{~~}l@{~}l@{}}
           \multicolumn{4}{@{}l}{\textbf{Commands}}\\
              c & ::= &  & \\
                & \mid & \syntax{extract(h)} & \textsc{extraction} \\
                & \mid & \syntax{emit(h)} & \textsc{deparsing} \\
                & \mid & \syntax{c_1;c_2} & \textsc{sequence}^\ast\\
                & \mid & \syntax{if(e)\ c_1\ else\ c_2} & \textsc{conditional}\\
                & \mid & \syntax{valid(h)\ c_1\ else\ c_2}  & \textsc{validity}\\
                & \mid & \syntax{t.apply()} & \textsc{application}\\
                & \mid & \syntax{skip} & \textsc{skip}\\
                & \mid & \syntax{add(h)} & \textsc{addition}^\ast\\
                & \mid & \syntax{remove(h)} & \textsc{removal}^\ast \\
                & \mid & \syntax{h.f = e} & \textsc{modification}^\ast\\[1em]
            \multicolumn{4}{@{}l}{\textbf{Actions}}\\
              a & ::= & \syntax{\lambda \bar{x}.c} & \textsc{action} \\[1em]
        \multicolumn{4}{@{}l}{\textbf{Expressions}}\\
           e & ::= \\
             & \mid & \syntax{v} & \textsc{values}\\
             & \mid & \syntax{h.f} & \textsc{header field} \\
             & \mid & \syntax{x} & \textsc{variable}\\ 
             & \mid & \syntax{k^n} & \textsc{constant}
    \end{array}\]
  \end{minipage}\vrule
  \begin{minipage}{0.45\textwidth}
    \[\begin{array}{>{\quad}lcll}
           \multicolumn{4}{l}{\textbf{Declarations}}\\
           d & ::= & \\
             & \mid & \syntax{t(\overbar{h},\overbar{(e, m)},\overbar{a})} & \textsc{table}\\ 
             & \mid & \syntax{\eta\ \{\overbar{f: \tau} \}} & \textsc{header type}\\
             & \mid & \syntax{h \mapsto \eta} & \textsc{instantiation}
      \end{array}\]
    \[\begin{array}{lcl>{\quad}lcl}
        \multicolumn{3}{l}{\textbf{Match Kinds}}      & \multicolumn{3}{l}{\textbf{Constants}}\\
        m & \in & \{\syntax{exact},\syntax{ternary}\} & k & \in & K\\
        \multicolumn{3}{l}{\textbf{Program}}          & \multicolumn{3}{l}{\textbf{Values}} \\
        \mathcal{P} & ::= & (\bar{d},c)               & v & \in & V \\
      
    \end{array}\]
    \hrule
    \vspace{-.5em}
    \[\begin{array}{lcll}
      \multicolumn{4}{l}{\textbf{Header Types}}\\
      \Theta & ::= & \\
             & \mid & 0  & \textsc{contradiction}\\
             & \mid & 1  & \textsc{empty}\\
             & \mid & h & \textsc{instance} \\ 
             & \mid & \Theta_1 \cdot \Theta_2 & \textsc{concatenation}\\ 
             & \mid & \Theta_1 + \Theta_2 & \textsc{choice}\\
      \end{array}\]
    \[\begin{array}{lcl>{\quad}lcl}
      \multicolumn{3}{l}{\textbf{Action Types}} & \multicolumn{3}{>{\quad}l}{\textbf{Expression Types}} \\
        \alpha & ::= & \bar{\tau} \to \Theta    & \tau & ::= & \texttt{Bool} \\
                                                &&&&\mid & \bar \tau \to \tau \\
                                                &&&&\mid& \cdots \\
    \end{array}\]
    \vspace{.2em}
  \end{minipage}
\caption{Syntax of \name}
\label{fig:syntax}
\end{figure}
 
A \name program consists of a sequence of declarations $\bar d$ and a
command $c$. The set of declarations includes header types, header
instances, and tables.
Header type declarations describe the format of individual headers and
are defined in terms of a name and a sequence of field declarations.
The notation ``$f:\tau$'' indicates that field $f$ has type $\tau$. We
let $\eta$ range over header types.
A header instance declaration assigns a name $h$ to a header type
$\eta$.
The map $\mathcal{HT}$ encodes the (global) mapping between header
instances and header types.
Table declarations $t(\overbar{h}, {(e, m)}, \overbar a)$, are
defined in terms of a sequence of valid-match header instances
  $\overbar h$, a sequence of match-key expressions 
  $\overbar{(e,m)}$ read in the table, where $e$ is an
  expression and $m$ is the match-kind used to match this expression,
and a sequence of actions $\bar a$. The notation $t.\mathit{valids}$
denotes the valid-match instances, $t.\mathit{reads}$ denotes the
expressions, and $t.\mathit{actions}$ denotes the actions.

Actions are written as (uncurried) $\lambda$-abstractions. An action
$\lambda \bar x.~c$ declares a (possibly empty) sequence of
parameters, drawn from a fresh set of names, which are in scope for
the command $c$. The run-time arguments for actions (action data) are
provided by the control plane.
Note that we artificially restrict the commands that can be called in
the body of the action to addition, removal, modification and
sequence; these actions are identified with an asterisk in
Figure~\ref{fig:syntax}.

The calculus provides commands for extracting (\syntax{extract}),
creating (\syntax{add}), removing (\syntax{remove}), and modifying
(\syntax{h.f=e}) header instances.
The \syntax{emit} command is used in the deparser and serializes a
header instance back into a bit sequence (\syntax{emit}).
The \syntax{if}-statement conditionally executes one of two commands
based on the value of a boolean condition.
Similarly, the \syntax{valid}-statement branches on the validity of
$h$. Table application commands (\syntax{t.apply()}) are used to
invoke a table $t$ in the current state. 
The $\syntax{skip}$ command is a no-op.

The only built-in expressions in \name are variables $x$ and header
fields, written $h.f$. We let $v$ range over values and assume a
collection of $n$-ary constant operators $k^n \in K$. 

For simplicity, we assume that every header referenced in an
expression has a corresponding instance declaration. We also assume
that header instance names $h$, header type names $\eta$, variable
names $x$, and table names $t$ are drawn from disjoint sets of names
\textsc{h,e,v}, and \textsc{t} respectively and that each name is
declared only once.

\subsection{Type System}

\name provides two main kinds of types, basic types $\tau$ and header
types $\Theta$ as shown in Figure~\ref{fig:syntax}.
We assume that the set of basic types includes booleans (for
conditionals) as well as tuples and function types (for actions).

A header type $\Theta$ represents a set of possible co-valid header
instances. The type $0$ denotes the empty set. This type arises when
there are unsatisfiable assumptions about which headers are valid. The
type $1$ denotes the singleton denoting the empty set of headers. It
describes the type of the initial state of the program. The type $h$
denotes a singleton set, $\{ \{ h \} \}$---i.e., states where only $h$
is valid. The type $\Theta_1 \cdot \Theta_1$ denotes the set obtained
by combining headers from $\Theta_1$ and $\Theta_2$---i.e., a product
or concatenation. Finally, the type $\Theta_1 + \Theta_2$
denotes the union of $\Theta_1$ or $\Theta_2$, which intuitively
represents an alternative.

The semantics of header types, $\llbracket \Theta \rrbracket$, is
defined by the equations in Figure~\ref{fig:type-semantics-auxfn}.
Intuitively, each subset represents one alternative set of headers
that may be valid. For example, the header type $\mathtt{eth} \cdot
(\mathtt{ipv4} + 1)$ denotes the set $\{
\{\mathtt{eth},\mathtt{ipv4}\}, \{\mathtt{eth}\} \}$.

To formulate the typing rules for \name, we also define a set of
operations on header types: \texttt{Restrict}, \texttt{NegRestrict},
\texttt{Includes}, \texttt{Remove}, and \texttt{Empty}.
The restrict operator $\texttt{Restrict}~\Theta~h$ recursively
traverses $\Theta$ and keeps only those choices in which $h$ is
contained, mapping all others to $0$.
Semantically this has the effect of throwing out the subsets of
$\llbracket \Theta \rrbracket$ that do not contain $h$.
Dually $\texttt{NegRestrict}~\Theta~h$ produces only those
choices/subsets where $h$ is invalid.
$\texttt{Includes}~\Theta~h$ traverses $\Theta$ and checks that $h$
is always valid. 
Semantically this says that $h$ is a member of every element of
$\llbracket\Theta\rrbracket$.
$\texttt{Remove}~\Theta~h$ removes $h$ from every path, which
means, semantically that it removes $h$ from ever element of
$\llbracket\Theta\rrbracket$.
Finally, $\texttt{Empty}~\Theta$ checks whether $\Theta$ denotes the
empty set.
We can lift these operators to operate on sets of headers in the obvious way.
An in-depth treatment of these operators can be found in
\iftoggle{techreport}{Appendix~\ref{sec:appendix-header-ops}}{the
  accompanying technical report}.

\subsubsection{Typing Judgement}

The typing judgement has the form
$\cmdtype{\Gamma}{\Theta}{c}{\Theta'}$, which means that in variable
context $\Gamma$, if $c$ is executed in the header context $\Theta$,
then a header instance type $\Theta'$ is assigned. Intuitively, $\T$
encodes the sets of headers that may be valid when type checking a
command.
$\Gamma$ is a standard type environment which maps variables $x$ to
type $\tau$. 
If there exists $\Theta'$ such that
$\cmdtype \Gamma \Theta c {\Theta'}$, we say that $c$ is well-typed in
 $\Theta$. 

\begin{figure}[t]
    \begin{minipage}[c]{0.35\textwidth}
        \begin{align*}
            \llbracket \T \rrbracket & \subseteq \mathcal{P}(\mathit{Header}) \\
            \llbracket 0 \rrbracket &= \{ \} \\
            \llbracket 1 \rrbracket &= \{\{\}\} \\
            \llbracket h \rrbracket &= \{ \{h\} \} \\
            \llbracket \Theta_1 \cdot \Theta_2 \rrbracket &= \llbracket \Theta_1 \rrbracket \bullet \llbracket \Theta_2 \rrbracket \\
            \llbracket \Theta_1 + \Theta_2 \rrbracket &= \llbracket \Theta_1 \rrbracket \cup \llbracket \Theta_2 \rrbracket \\
        \end{align*} 
    \end{minipage}\quad\vrule
    \begin{minipage}[c]{0.55\textwidth}
        \begin{align*}
                    \mathcal{F}(h,f_i) &= \tau_i & \textit{Field lookup}\\
          \mathcal{A}(a) &= \lambda \bar x : \bar \tau.~c & \textit{Action lookup}\\
          \mathcal{CA}(t, H) &= (a_i, \bar{v}) & \textit{Control-plane actions} \\
          \mathcal{CV}(t) &= \bar{S} & \textit{Control-plane validity} \\
          \mathcal{H}(e) &= \bar h & \textit{Referenced Header instances}
                  \end{align*}
        \hrule
        \vspace{-0.5em}
        \begin{align*}
          \mathsf{maskable}(t, e, \mathit{exact}) &\triangleq \mathit{false} \\
          \mathsf{maskable}(t, e, \mathit{ternary}) &\triangleq \mathcal{H}(e) \subseteq t.\mathit{valids}
        \end{align*}
    \end{minipage}
    \caption{Semantics of header types (left) and auxiliary functions (right).}
    \label{fig:type-semantics-auxfn}
\end{figure}
 
The typing rules rely on several auxiliary definitions shown in
Figure~\ref{fig:type-semantics-auxfn}. The field type lookup function
$\mathcal{F}(h, f_i)$ returns the type assigned to a field $f_i$ in
header $h$ by looking it up from the global header type
  declarations via the header instance declarations. The action
lookup function $\mathcal{A}(a)$ returns the action definition
$\lambda \bar{x} : \bar{\tau}.~c$ for action $a$. Finally, the
function $\mathcal{CA}(t,H)$ computes the run-time actions for table
$t$, while $\mathcal{CV}(t)$ computes $t$'s assumptions about
validity. Both of these are assumed to be instantiated by the control
plane in a way that satisfies basic correctness properties---see
\iftoggle{techreport}{Appendix~\ref{app:control-plane-assumptions}}{technical
    report}.

\begin{figure}[t]
    \begin{minipage}{0.45\textwidth}
        \begin{mathpar}
            \inferrule[T-Zero]{
                \mathtt{Empty}~\Theta_1
            }{
              \cmdtype{\Gamma}{\Theta_1}{c}{\Theta_2}
            }
            \and
            \inferrule[T-Skip]{
              \ 
            }{
              \cmdtype{\Gamma}{\Theta}{\syntax{skip}}{\Theta}
            }
            \and
            \inferrule[T-Seq]{
              \cmdtype{\Gamma}{\Theta}{c_1}{\Theta_1}\\
              \cmdtype{\Gamma}{\Theta_1}{c_2}{ \Theta_2}
            }{
              \cmdtype{\Gamma}{\Theta}{c_1 ; c_2}{\Theta_2}
            }
            \and
            \inferrule[T-If]{
                \Gamma; \Theta \vdash e : Bool \\\\
                \cmdtype{\Gamma}{\Theta}{c_1}{\Theta_1} \\
                \cmdtype{\Gamma}{\Theta}{c_2}{\Theta_2}
            }{
              \cmdtype{\Gamma}{\Theta}{\syntax{if~(e)~c_1~else~c_2}}{\Theta_1 + \Theta_2}
            }
            \and
            \inferrule[T-IfValid]{
              \cmdtype{\Gamma}{\texttt{Restrict}~\Theta~h}{c_1}{\Theta_1} \\
                \cmdtype{\Gamma}{\texttt{NegRestrict}~\Theta~h}{c_2}{\Theta_2}
            }{
                \cmdtype{\Gamma}{\Theta}{\syntax{valid(h)~c_1~else~c_2}}{\Theta_1 + \Theta_2}
            }
            \and
            \inferrule[T-Mod]{
                \texttt{Includes}~\Theta~h \\\\
                \mathcal{F}(h, f) = \tau_i\\
                \Gamma; \Theta \vdash e : \tau_i
            }{
              \cmdtype{\Gamma}{\Theta}{h.f = e}{\Theta}
            }           
        \end{mathpar}
    \end{minipage}~\vrule~
    \begin{minipage}{0.47\textwidth}
        \begin{mathpar}
            \inferrule[T-Extr]{
                \
            }{
              \cmdtype{\Gamma}{\Theta}{extract(h)}{\Theta \cdot h}
            }
            \and
            \inferrule[T-Emit]{
                \
            }{
              \cmdtype{\Gamma}{\Theta}{emit(h)}{\Theta}
            }
            \and
            \inferrule[T-Add]{
            \
            }{
              \cmdtype{\Gamma}{\Theta}{\syntax{add(h)}}{\Theta \cdot h}
            }
            \and
            \inferrule[T-Rem]{
                \
            }{
              \cmdtype{\Gamma}{\Theta}{\syntax{remove(h)}}{\texttt{Remove}~\Theta~h}
            }
            \and
            \inferrule[T-Apply]{
              \mathcal{CV}(t) = \bar{S}\\
              t.\mathit{actions} = \bar{a} \\
              t.\mathit{reads} = \bar{r} \\
              \bar{e} = \{e_j \mid (e_j, m_j) \in \bar{r} \wedge \neg \mathsf{maskable}(t,e_j,m_j)\} \\\\
              \cdot ; \Theta \vdash e_j : \tau_j ~~~\textrm{for}~e_j \in \bar{e} \\\\
              \texttt{Restrict}~\Theta~S_i \vdash a_i : \bar{\tau}_i \rightarrow \Theta_i'~~~\textrm{for}~a_i \in \bar{a} 
            }{
              \cmdtype{\Gamma}{\Theta}{t.apply()}{\left(\sum_{a_i \in \bar a} \Theta_i'\right)}
            }
        \end{mathpar}
    \end{minipage}
    \caption{Command typing rules for \name}
    \label{fig:command-typing-rules}
\end{figure}
 
The typing rules for commands are presented in Figure
\ref{fig:command-typing-rules}.
The rule \textsc{T-Zero} gives a command an arbitrary output type if the input type is empty. It is needed to prove preservation.
The rules \textsc{T-Skip} and \textsc{T-Seq} are standard.
The rule \textsc{T-If} a path-sensitive union type between the type
computed for each branch.
The rule \textsc{T-IfValid} is similar, but leverages knowledge about
the validity of $h$. So the true branch $c_1$ is checked in the context
$\texttt{Restrict}~\Theta~h$, and the false branch $c_2$ is
checked in the context $\texttt{NegRestrict}~\Theta~h$. The top-level output
type is the union of the resulting output types for $c_1$ and $c_2$.
The rule \textsc{T-Mod} checks that $h$ is guaranteed to be valid
using the \texttt{Includes} operator, and uses the auxiliary function
$\mathcal{F}$ to obtain the type assigned to $h.f$. Note that the set
of valid headers does not change when evaluating an assignment, so the
output and input types are identical.
The rules \textsc{T-Extr} and \textsc{T-Add} assign header extractions
and header additions the type $\Theta \cdot h$, reflecting the fact
that $h$ is valid after the command executes. 
Emitting packet headers does not change the set of valid headers,
which is captured by rule \textsc{T-Emit}.
The typing rule \textsc{T-Rem} uses the $\texttt{Remove}$ operator to
remove $h$ from the input type $\Theta$.
Finally, the rule \textsc{T-Apply} checks table applications. To
understand how it works, let us first consider a simpler, but less
precise, typing rule:
\begin{mathpar}
\inferrule{
    t.\mathit{reads} = \bar{e}\\
    \cdot ; \Theta \vdash e_i : \tau_i ~~~\textrm{for}~e_i \in \bar{e} \\\\
    t.\mathit{actions} = \bar{a} \\
    \cdot;\Theta \vdash a_i : \bar{\tau_i} \rightarrow \Theta_i'~~~\textrm{for}~a_i \in \bar{a} 
}{
  \cmdtype{\cdot}{\Theta}{t.apply()}{\left(\sum \Theta_i'\right)}
}
\end{mathpar}
Intuitively, this rule says that to type check a table application,
we check each expression it reads and each of its actions. The final
header type is the union of the types computed for the actions. To put
it another way, it models table application as a
non-deterministic choice between its actions. However, while this rule
is sound, it is overly conservative. In particular, it does not model
the fact that the control plane often uses header validity bits to
control which actions are executed. 

Hence, the actual typing rule, \textsc{T-Apply}, is parameterized on a
function $\mathcal{CV}(t)$ that models the choices made by the control
plane, returning for each action $a_i$, a set of headers $S_i$ that
can be assumed valid when type checking $a_i$. From the reads
declarations of the table declaration, we can derive a subset of the
expressions read by the table---e.g., excluding expressions that can
be wildcarded when certain validity bits are false. This is captured
by the function $\mathsf{maskable}(t, e, m)$ (defined in
Figure~\ref{fig:type-semantics-auxfn}) , which determines whether a
reads expression $e$ with match-kind $m$ in table $t$ can be masked
using a wild-card. The $\mathsf{maskable}$ function is defined using
$\mathcal{H}(e)$, which returns the set of header instances referenced
by an expression $e$.

In the example from Section \ref{sec:read-errors}, if an action $a_j$
is matched by the rule $(0,*,0,*)$, both $S_j$ and $e_j$ are empty.

\begin{figure}[t]
  \begin{mathpar}
      \inferrule{
        \cmdtype{\Gamma, \bar{x}: \bar{\tau}}{\Theta}{c}{\Theta'}
      }{
          \Gamma ; \Theta  \vdash \lambda \bar{x}:\bar \tau.c : \bar \tau \rightarrow \Theta'
      }\quad(\textsc{T-Action})
    \end{mathpar}
    \caption{Action typing rule for \name}
    \label{fig:action-typing-rules}
\end{figure}

The typing judgement for actions
(Figure~\ref{fig:action-typing-rules}) is of the form $\Gamma; \Theta
\vdash a : \bar\tau \to \Theta$, meaning that $a$ has type $\bar\tau
\to \Theta$ in variable context $\Gamma$ and header context $\Theta$.
Given a variable context $\Gamma$ and header type $\Theta$, an action
$\lambda \bar{x}.~c$ encodes a function of type
$\bar{\tau}\rightarrow\Theta'$, so long as the body $c$ is well-typed
in the context where $\Gamma$ is extended with $x_i : \tau_i$ for
every $i$.

\begin{figure}[t]
  \begin{mathpar}
        \inferrule[T-Const]{
            \mathtt{typeof}(k) = \bar{\tau} \rightarrow \tau'\\
            \Gamma ; \Theta \vdash e_i: \tau_i
        }{
            \Gamma ; \Theta \vdash k(\bar{e}): \tau'
        }
        \and
        \inferrule[T-Var]{
            x: \tau \in \Gamma
        }{
            \Gamma ; \Theta  \vdash x : \tau
        }
        \and
        \inferrule[T-Field]{
            \texttt{Includes}~\Theta~h \\
            \mathcal{F}(h, f) = \tau
        }{
            \Gamma ; \Theta \vdash h.f : \tau
        }
    \end{mathpar}
    \caption{Expression typing rules for \name}
    \label{fig:expression-typing-rules}
\end{figure}
The typing rules for expressions are shown in
Figure~\ref{fig:expression-typing-rules}.
Constants are typechecked according to rule \textsc{T-Constant}, as
long as each expression that is passed as an argument to the constant
$k$ has the type required by the \texttt{typeof} function.
The rule \textsc{T-Var} is standard.

\subsection{Operational Semantics}

We now present the small-step operational semantics of \name. We
define the operational semantics for commands in terms of four-tuples
$\langle I,O,H,c\rangle$, where $I$ is the input bit stream (which is
assumed to be infinite for simplicity), $O$ is the output bit stream, $H$ is a map
that associates each valid header instance with a records containing
the values of each field, and $c$ is the command to be evaluated. The
reduction rules are presented in Figure \ref{fig:semantics}.

\begin{figure}[t]
    \begin{mathpar}
        \inferrule[E-Extr]{
            \mathcal{HT}(h) = \eta \\
            \mathit{deserialize_\eta}(I) = (v, I')
        }{
            \langle I, O, H, \syntax{extract(h)}\rangle \rightarrow \langle I', O, H[h \mapsto v], \syntax{skip}\rangle
        }
        \and
        \inferrule[E-Emit]{
            \mathcal{HT}(h) = \eta \\
            \mathit{serialize_\eta}(H(h)) = \bar{B}
        }{
            \langle I, O, H, \syntax{emit(h)}\rangle \rightarrow \langle I, O.\bar{B}, H, \syntax{skip}\rangle
        }
        \and
        \inferrule[E-EmitInvalid]{
            h \not\in \mathit{dom(H)}
        }{
            \langle I, O, H, \syntax{emit(h)}\rangle \rightarrow \langle I, O, H, \syntax{skip}\rangle
        }
        \and
        \inferrule[E-IfValidTrue]{
            h \in \mathit{dom(H)}
        }{
          \langle I, O, H, \syntax{valid(h)~c_1~else~c_2} \rangle \rightarrow \langle I, O, H, c_1 \rangle
        }
        \and
        \inferrule[E-IfValidFalse]{
            h \not \in \mathit{dom(H)}
        }{
            \langle I, O, H, \syntax{valid(h)~c_1~else~c_2} \rangle \rightarrow \langle I, O, H, c_2 \rangle
        }
        \and
        \inferrule[E-Mod]{
          H(h) = r \\
          r' = \{r\ with\ f = v\}
        }{
            \langle I, O, H, \syntax{h.f=v}\rangle \rightarrow \langle I, O, H[h \mapsto r'], \syntax{skip}\rangle
        }
        \and
        \inferrule[E-Apply]{
            \mathcal{CA}(t, H) = (a_i, \bar{v})\\
            \mathcal{A}(a_i) = \lambda \bar{x}.c_i
        }{
            \langle I,O,H, t.apply()\rangle \rightarrow \langle I,O,H,c_i[\bar{v}/\bar{x}]\rangle
        }
        \and
        \inferrule[E-Add]{
            \mathcal{HT}(h) = \eta\\
            \mathit{init}_\eta = v
        }{
            \langle I, O, H, add(h)\rangle \rightarrow \langle I, O, H[h \mapsto v], \syntax{skip}\rangle
        }
        \and
        \inferrule[E-AddValid]{
            h \in \mathit{dom(H)}
        }{
            \langle I, O, H, add(h)\rangle \rightarrow \langle I, O, H, \syntax{skip}\rangle
        }
        \and
        \inferrule[E-Rem]{
          \
        }{
            \langle I, O, H, \syntax{remove(h)}\rangle \rightarrow \langle I, O, H \setminus h, \syntax{skip}\rangle
        }
    \end{mathpar}
\caption{Selected rules of the operational semantics of \name; the
  elided rules are standard and can be found in
  \iftoggle{techreport}{Appendix~\ref{sec:appendix-semantics}.}{the technical report.}}
    \label{fig:semantics}
\end{figure}

\begin{figure}[t]
    \begin{mathpar}
                                                        \inferrule[E-Const]{
            \llbracket k \rrbracket (v_1,...,v_n) = v
        }{
            \langle H,k(v_1,...,v_n)\rangle \rightarrow v
        }
        \and
        \inferrule[E-Field]{
            H(h)=\{f_1:n_1,...,f_k:n_k\}
        }{
            \langle H, h.f_i\rangle \rightarrow n_i
        }
    \end{mathpar}
    \caption{Selected rules of the operational semantics for expressions.}
    \label{fig:expr-semantics}
\end{figure}
 
The command $\syntax{extract(h)}$ evaluates via the rule
\textsc{E-Extr}, which looks up the header type in $\mathcal{HT}$ and
then invokes corresponding deserialization function.
The deserialized header value $v$ is added to to the map of valid
header instances, $H$.
For example, assuming the header type
$\eta = \{ f:\mathit{bit\langle3\rangle};\
g:\mathit{bit\langle2\rangle};\}$ has two fields $f$ and $g$ and
$I=11000B$ where $B$ is the rest of the bit stream following, then
$\mathit{deserialize_\eta(I) = (\{f=110;\ g=00;\}, B)}$.

The rule \textsc{E-Emit} serializes a header instance $h$ back into a bit stream.
It first looks up the corresponding header type and header value in
the header table $\mathcal{HT}$ and the map of valid headers
respectively.
The header value is then passed to the serialization function for the
header type to produce a bit sequence that is appended to the output
bit stream.
Similarly, we assume that a serialization function is defined for every
header type, which takes the bit values of the fields of a header
value and concatenates them to produce a single bit sequence.
We adopt the semantics of P4 with respect to emitting invalid headers.
Emitting an invalid header instance---i.e., a header instance which
has not been added or extracted---has no effect on the output bit
stream (rule \textsc{E-EmitInvalid}).
Notice also that the header remains unchanged in $H$.

Sequential composition reduces left to right, i.e., the left command
needs to be reduced to \syntax{skip} before the right command can be
reduced (rule \textsc{E-Seq}).
The evaluation of conditionals (rules \textsc{E-If, E-IfTrue,
  E-IfFalse}) is standard.
Both \textsc{E-Seq}, \textsc{E-If}, \textsc{E-IfTrue} and
\textsc{E-IfFalse} are relegated to the
\iftoggle{techreport}{appendix}{technical report} for brevity.
The rules for validity checks (\textsc{E-IfValidTrue},
\textsc{E-IfValidFalse}) step to the true branch if $h \in
\mathit{dom}(H)$ and to the false branch otherwise.

Table application commands are evaluated according to rule
\textsc{E-Tapply}. We first invoke the control plane function
$\mathcal{CA}(t, H)$ to determine an action $a_i$ and action data $v$.
Then we use $\mathcal{A}$ to lookup the definition of $a_i$, yielding
$\lambda \bar{x}: \bar{\tau}.~c_i$ and step to $c_i[\bar{v}/\bar{x}]$.
Note that for simplicity, we model the evaluation of expressions read
by the table using the control-plane function $\mathcal{CA}$.

The rule \textsc{E-Add} evaluates addition commands $\syntax{add(h)}$.
Similar to header extraction, the $\mathit{init}_\eta()$ function
produces a header instance $v$ of type $\eta$ with all fields set to a
default value and extends the map $H$ with $h \mapsto v$. Note that
according to \textsc{E-Add-Exist}, if the header instance is already
valid, \syntax{add(h)} does nothing.
Finally, the rule \textsc{E-Rem} removes the header from the map $H$.
Again, if a header $h$ is already invalid, removing it has no effect.

The semantics for expressions is defined in Figure
\ref{fig:expr-semantics}, using tuples $\langle H,e \rangle$, where
$H$ is the same map used in the semantics of commands and $e$ is the
expression to evaluate.  The rule \textsc{E-Field} reduces header
field expressions to the value stored in the heap $H$ for the
respective field.
To evaluate constants via the rule \textsc{E-Const} (omitting the
obvious congruence rule), we assume that there is an evaluation
function for constants $\llbracket k \rrbracket (\bar{v}) = v$ that is
well-behaved---i.e., if
$\mathtt{typeof}(k)=\bar{\tau} \rightarrow \tau'$ and
$\overline{v:\tau}$, then
$.;. \vdash \llbracket k \rrbracket (\bar{v}):
\tau'$. We use these facts to prove progress and preservation. 
\subsection{Safety of \name}

We prove safety in terms of progress and preservation. Both theorems
make use of the relation $H \models \Theta$ which intuitively holds if 
$H$ is described by $\Theta$. The formal definition, as given in
Figure \ref{fig:entails-relation}, satisfies $H \models \Theta$ if and
only if $\mathit{dom}(H) \in \llbracket \Theta \rrbracket$.

\begin{figure}[t!]
  \begin{mathpar}
        \inferrule[Ent-Empty]{\ }{
            \boldsymbol{\cdot} \models 1
        }
        \and
        \inferrule[Ent-Inst]{
            \mathit{dom}(H) = \{h\}
        }{
            H \models h
        }
        \and
        \inferrule[Ent-Seq]{
            H_1 \models \Theta_1 \\\\
            H_2 \models \Theta_2
        }{
            H_1 \cup H_2 \models \Theta_1 \cdot \Theta_2
        }
        \and
        \inferrule[Ent-ChoiceL]{
            H \models \Theta_1 \\
        }{
            H \models \Theta_1 + \Theta_2
        }
        \and
        \inferrule[Ent-ChoiceR]{
            H \models \Theta_2 \\
        }{
            H \models \Theta_1 + \Theta_2
        }
    \end{mathpar}
\caption{The \textit{Entailment} relation between header instances and header instance types}
\label{fig:entails-relation}
\end{figure}
 
We prove type safety via progress and preservation theorems. The
respective proofs are mostly straightforward for our system---we
highlight the unusual and nontrivial cases below an relegate the full
proofs to the \iftoggle{techreport}{appendix}{technical report}.

\begin{theorem}[Progress]
If $\cmdtype{\cdot}{\Theta}{c}{\Theta'}$ and $H \models \Theta$, then either,
\vspace*{-.5em}
\begin{itemize}
\item $c=\mathit{skip}$, or 
\item $\exists \langle I',O',H',c' \rangle.~\langle I,O,H,c \rangle \rightarrow \langle I',O',H',c' \rangle$.
\end{itemize}
\end{theorem}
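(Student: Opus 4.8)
The plan is to proceed by structural induction on the typing derivation $\cmdtype{\cdot}{\Theta}{c}{\Theta'}$, doing a case analysis on the last rule applied. In each case I will either observe that $c$ is already $\syntax{skip}$, or exhibit a reduction step $\langle I,O,H,c\rangle \rightarrow \langle I',O',H',c'\rangle$, using the hypothesis $H \models \Theta$ (equivalently $\mathit{dom}(H) \in \teval{\Theta}$) to supply whatever side conditions the operational rules demand. The base cases \textsc{T-Skip} is immediate; \textsc{T-Zero} is vacuous because $\mathtt{Empty}\ \Theta$ together with $H \models \Theta$ would force $\mathit{dom}(H) \in \teval{\Theta} = \{\}$, a contradiction, so this case never arises. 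For \textsc{T-Extr}, \textsc{T-Add}, \textsc{T-Emit}, \textsc{T-Rem} and \textsc{T-Seq} the redex always fires: \textsc{E-Extr} needs only that $\mathcal{HT}(h)$ is defined (guaranteed by our well-formedness assumption that every referenced instance is declared) and that $\mathit{deserialize}_\eta$ succeeds on the infinite input stream; \textsc{E-Add}/\textsc{E-AddValid} split on $h \in \mathit{dom}(H)$; \textsc{E-Emit}/\textsc{E-EmitInvalid} likewise; \textsc{E-Rem} always applies. For \textsc{T-Seq}, either $c_1 = \syntax{skip}$ and we step the sequence by the (elided) rule for $\syntax{skip};c_2$, or by the inner induction hypothesis $c_1$ steps and we use the congruence rule \textsc{E-Seq}.

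The interesting cases are the conditionals and table application. For \textsc{T-If} we have $\cdot;\Theta \vdash e : \mathit{Bool}$, and we need a canonical-forms-style fact for expressions: a closed, well-typed boolean expression either is a value (and then \textsc{E-IfTrue}/\textsc{E-IfFalse} fires after the boolean is identified as $\mathit{true}$ or $\mathit{false}$) or it steps (via \textsc{E-Const}, \textsc{E-Field} plus congruence, and \textsc{E-If} lifts this). Here \textsc{E-Field} requires $h \in \mathit{dom}(H)$ for any subterm $h.f$ of $e$; this is exactly where $H \models \Theta$ is used, in combination with the fact that \textsc{T-Field} only typed $h.f$ after checking $\texttt{Includes}\ \Theta\ h$, which semantically says $h$ belongs to every element of $\teval{\Theta}$, in particular to $\mathit{dom}(H)$. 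So I will first state and prove a small progress lemma for expressions ($\cdot;\Theta \vdash e:\tau$ and $H\models\Theta$ imply $e$ is a value or $\langle H,e\rangle \rightarrow e'$), by induction on the expression typing derivation, and then invoke it. For \textsc{T-IfValid} the step is unconditional: \textsc{E-IfValidTrue} or \textsc{E-IfValidFalse} applies depending on whether $h \in \mathit{dom}(H)$, with no well-typedness obligation at all. For \textsc{T-Mod} ($h.f = e$), I again use the expression progress lemma on $e$; if $e$ is a value, \textsc{E-Mod} fires provided $h \in \mathit{dom}(H)$, which follows from the premise $\texttt{Includes}\ \Theta\ h$ and $H \models \Theta$; otherwise the congruence rule for modification steps $e$.

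The case I expect to be the main obstacle is \textsc{T-Apply}. Here the operational rule \textsc{E-Apply} invokes the control-plane oracle $\mathcal{CA}(t,H) = (a_i,\bar v)$ and then $\mathcal{A}(a_i) = \lambda\bar x.c_i$ to produce the redex $c_i[\bar v/\bar x]$. To conclude that this step is actually available I need the control-plane correctness assumptions promised in the appendix: in particular, that $\mathcal{CA}$ returns an action $a_i$ that is genuinely one of $t.\mathit{actions}$ and supplies action data $\bar v$ of the expected arity, so that the substitution and the lookup $\mathcal{A}(a_i)$ are well-defined. I will also need that the expressions $e_j$ read by the table (those not masked by $\mathsf{maskable}$) evaluate without getting stuck—but since the calculus folds read-expression evaluation into $\mathcal{CA}$ rather than into the small-step relation, this is discharged by the same control-plane assumption rather than by the expression progress lemma. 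Thus the obstacle is less a deep argument than careful bookkeeping: I must cite precisely the three correctness properties for $\mathcal{CV}$ and $\mathcal{CA}$ and check that property (1)—the control plane never reads invalid headers—lets \textsc{E-Apply} fire, while the remaining properties are only needed for preservation. Everything else is a routine discharge of operational side conditions using $H \models \Theta$ and the global well-formedness assumptions on declarations.
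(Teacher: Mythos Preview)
Your proposal is correct and mirrors the paper's proof almost exactly: both proceed by induction on the typing derivation with a case analysis on the last rule, invoke an expression progress lemma for \textsc{T-If} and \textsc{T-Mod}, use the $\texttt{Includes}\ \Theta\ h$ premise together with $H \models \Theta$ to obtain $h \in \mathit{dom}(H)$ (the paper packages this as Lemma~\ref{lem:includes-domain}), derive a contradiction from $H \models \Theta$ and $\mathtt{Empty}\ \Theta$ in the \textsc{T-Zero} case, and discharge \textsc{T-Apply} by appealing to the control-plane assumption that guarantees $\mathcal{CA}(t,H)$ returns a usable action (the paper's Proposition~\ref{prop:cp-reads}). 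The only cosmetic discrepancy is that the paper names the $\syntax{skip};c_2$ rule \textsc{E-Seq} and the congruence rule \textsc{E-Seq1}, the reverse of your usage, but the argument is otherwise identical.
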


Intuitively, progress says that a well-typed command is fully reduced
or can take a step.

\begin{theorem}[Preservation]
If $\cmdtype{\Gamma}{\Theta_1}{c}{\Theta_2}$ and 
$\langle I,O,H,c \rangle \rightarrow \langle I',O',H',c'\rangle$, where 
$H \models \Theta_1$, then 
$\exists \Theta_1', \Theta_2'.~\cmdtype{\Gamma}{\Theta_1'}{c}{\Theta_2'}$ where $H' \models \Theta_1'$ and $\Theta_2' < \Theta_2$.
\end{theorem}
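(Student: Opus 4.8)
The plan is to prove preservation by induction on the derivation of $\cmdtype{\Gamma}{\Theta_1}{c}{\Theta_2}$, with a case analysis on the last typing rule and, within each case, on the reduction rule that fires on $c$; the conclusion is of course about the reduced command $c'$. Throughout I read $\Theta_2' < \Theta_2$ as the semantic subtyping order $\teval{\Theta_2'} \subseteq \teval{\Theta_2}$, which is reflexive and transitive; equality already suffices in most cases, and strict containment only appears when a step commits to one branch of a choice or to one of a table's actions. Two facts from the excerpt are used repeatedly: $H \models \Theta$ iff $\mathit{dom}(H) \in \teval{\Theta}$, and the semantic readings of the header operators --- namely $\teval{\texttt{Restrict}~\Theta~h} = \{\, S \in \teval{\Theta} \mid h \in S\,\}$, $\teval{\texttt{NegRestrict}~\Theta~h} = \{\, S \in \teval{\Theta} \mid h \notin S\,\}$, $\teval{\texttt{Remove}~\Theta~h} = \{\, S \setminus \{h\} \mid S \in \teval{\Theta}\,\}$, and $\texttt{Empty}~\Theta$ iff $\teval{\Theta} = \varnothing$ --- together with the fact that $\bullet$, $\cup$, and all of these operators are monotone with respect to $\subseteq$.

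Before the induction I would establish three lemmas. First, a \emph{weakening} lemma for the input context: if $\cmdtype{\Gamma}{\Theta}{c}{\Theta'}$ and $\teval{\Theta''} \subseteq \teval{\Theta}$, then $\cmdtype{\Gamma}{\Theta''}{c}{\Theta'''}$ for some $\Theta'''$ with $\teval{\Theta'''} \subseteq \teval{\Theta'}$; this is an induction on the typing derivation in which \textsc{T-Zero} absorbs the subcase $\teval{\Theta''} = \varnothing$ and every other rule goes through because $\texttt{Includes}$, $\texttt{Restrict}$, $\texttt{NegRestrict}$, $\texttt{Remove}$, the constructors $\cdot$ and $+$, and expression typing (\textsc{T-Field} uses $\texttt{Includes}$ again) are all monotone. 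Second, a \emph{substitution} lemma for action data: if $\cmdtype{\Gamma, \bar x : \bar\tau}{\Theta}{c}{\Theta'}$ and $\cdot ; \Theta \vdash v_i : \tau_i$ for each $i$, then $\cmdtype{\Gamma}{\Theta}{c[\bar v / \bar x]}{\Theta'}$. Third, a routine preservation statement for expression evaluation, used to discharge the congruence steps on the guard of an \syntax{if} and on the right-hand side of a modification: an expression step preserves the expression's type and leaves $H$ and the header context unchanged, so the command is re-typed verbatim with $\Theta_2' = \Theta_2$.

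With these in hand most cases are short. \textsc{T-Zero} is vacuous because $H \models \Theta_1$ contradicts $\texttt{Empty}~\Theta_1$, and \textsc{T-Skip} is vacuous because \syntax{skip} does not step. For \textsc{T-Seq} on $c_1 ; c_2$: if $c_1$ steps, apply the induction hypothesis to obtain the new intermediate context and then the weakening lemma to re-type $c_2$ from it before recombining via \textsc{T-Seq}; if $c_1 = \syntax{skip}$, we step to $c_2$, which is already typed as required. For \textsc{T-If} and \textsc{T-IfValid} the congruence step uses the expression lemma; the branch steps go to some $c_i$ typed $\Theta_i$, and $\teval{\Theta_i} \subseteq \teval{\Theta_1 + \Theta_2}$; in the \textsc{T-IfValid} true case one additionally checks $H \models \texttt{Restrict}~\Theta~h$ from $h \in \mathit{dom}(H)$ and the semantics of $\texttt{Restrict}$ (dually $\texttt{NegRestrict}$ on the false branch). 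For \textsc{T-Mod}, \textsc{T-Emit}, \textsc{T-Extr}, \textsc{T-Add}, and \textsc{T-Rem} the command steps to \syntax{skip}; here $\mathit{dom}(H')$ is $\mathit{dom}(H)$, $\mathit{dom}(H) \cup \{h\}$, or $\mathit{dom}(H) \setminus \{h\}$, and $H' \models \Theta_2$ follows directly from $H \models \Theta_1$ together with the definition of $\bullet$ (for the types $\Theta \cdot h$) or the semantics of $\texttt{Remove}$, so we may take $\Theta_1' = \Theta_2' = \Theta_2$.

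The case I expect to be the main obstacle is \textsc{T-Apply} on $t.\mathit{apply}()$. The step \textsc{E-Apply} yields $\mathcal{CA}(t, H) = (a_i, \bar v)$ and reduces to $c_i[\bar v/\bar x]$, where $\mathcal{A}(a_i) = \lambda \bar x. c_i$. From the \textsc{T-Apply} premises, $a_i$ is typed by $\texttt{Restrict}~\Theta~S_i \vdash a_i : \bar\tau_i \to \Theta_i'$, so inverting \textsc{T-Action} gives $\cmdtype{\bar x : \bar\tau_i}{\texttt{Restrict}~\Theta~S_i}{c_i}{\Theta_i'}$. The three remaining pieces come from the control-plane correctness assumptions referenced in the excerpt: assumption~(3) gives $S_i \subseteq \mathit{dom}(H)$, which with $\mathit{dom}(H) \in \teval{\Theta}$ and the semantics of $\texttt{Restrict}$ yields $H \models \texttt{Restrict}~\Theta~S_i$; assumption~(2) gives $\cdot ; \Theta \vdash v_k : \tau_k$ for the action data, so the substitution lemma produces $\cmdtype{\Gamma}{\texttt{Restrict}~\Theta~S_i}{c_i[\bar v/\bar x]}{\Theta_i'}$; and since $a_i$ is one of $t.\mathit{actions}$ we have $\teval{\Theta_i'} \subseteq \bigcup_{a_j \in \bar a} \teval{\Theta_j'} = \teval{\sum_{a_j \in \bar a} \Theta_j'}$. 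Taking $\Theta_1' = \texttt{Restrict}~\Theta~S_i$ and $\Theta_2' = \Theta_i'$ closes the case. The delicacy here is entirely in stating the informal control-plane hypotheses precisely enough that (a) $\mathcal{CA}$ always returns a declared action of $t$ with well-typed action data, and (b) the validity set $\mathcal{CV}$ associates with that action is a subset of the headers actually valid in $H$; once those are pinned down this case becomes mechanical.
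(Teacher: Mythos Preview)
Your proposal is correct and follows essentially the same route as the paper: the same induction on the typing derivation with a case split on the reduction rule, the same three auxiliary lemmas (your ``weakening'' lemma is the paper's \emph{Control Type Bounds} lemma, and substitution and expression preservation appear verbatim), and the same use of the control-plane assumptions to close the \textsc{T-Apply} case. The only cosmetic difference is that the paper packages facts like $H \models \texttt{Restrict}~\Theta~h$ and $H[h\mapsto v] \models \Theta\cdot h$ as separate named lemmas rather than appealing directly to the set-theoretic semantics, but the content is identical.
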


More interestingly, preservation says that if a command $c$ is
well-typed with input type $\Theta_1$ and output type $\Theta_2$, and
$c$ evaluates to $c'$ in a single step, then there exists an input
type $\Theta_1'$ and an output type $\Theta_2'$ that make $c'$
well-typed. To make the inductive proof go through, we also need to
prove that $\Theta_1'$ describes the same maps of header instance $H$
as $\Theta_1$, and $\Theta_2'$ is semantically contained in
$\Theta_2$. We define syntactic containment to be
$\Theta_1 < \Theta_2 \triangleq \llbracket \Theta_1 \rrbracket
\subseteq \llbracket \Theta_2 \rrbracket$. (These conditions are
somewhat reminiscent of conditions found in languages with subtyping.)

\begin{proof}
By induction on a derivation of $\cmdtype{\Gamma}{\Theta_1}{c}{\Theta_2}$, with a case analysis on the last rule used. 
We focus on two of the most interesting cases. 
See \iftoggle{techreport}{Appendix~\ref{app:safety}}{technical report} for the full proof.
 
\begin{description}
\item{\textit{Case} \textsc{T-IfValid}:}
$c = \syntax{valid(h)~c_1~else~c_2}$ and $\cmdtype{\Gamma}{\mathtt{Restrict}~\Theta_1~h}{c_1}{\Theta_{12}}$ and $\cmdtype{\Gamma}{\mathtt{NegRestrict}~\Theta_1~h}{c_2}{\Theta_{22}}$ and $\Theta_2 = \Theta_{12} + \Theta_{22}$. \\[.5em]
There are two evaluation rules that apply to $c$, \textsc{E-IfValidTrue} and \textsc{E-IfValidFalse}
\begin{description}
\item{\textbf{Subcase} \textsc{E-IfValidTrue}:} $c' = c_1$ and
  $h \in \mathit{dom}(H)$ and $H' = H$.\\[.5em] Let
  $\Theta_1' = \mathtt{Restrict}~\Theta_1~h$ and
  $\Theta_2' = \Theta_{12}$. We have
  $\cmdtype{\Gamma}{\Theta_1'}{c'}{\Theta_2'}$ by assumption, we have
  $H \models \Theta_1'$ by
  \iftoggle{techreport}{Lemma~\ref{lem:restrict-domain-entail}}{a
      lemma formalizing the relationship between \textsc{Restrict}
      and ($\models$) (see tech report)}, and we have
  $\Theta_2' < \Theta_2$ by the definition of $<$ and the semantics of
  union.
\item{\textbf{Subcase} \textsc{E-IfValidFalse}:} $c' = c_2$ and $h \not\in \mathit{dom}(H)$ and $H' = H$.\\[.5em]
 Symmetric to the previous case. 
\end{description}

\item{\textit{Case} \textsc{T-Apply}:}
$c = \syntax{t.apply()}$ and $\mathcal{CV}(t) = (\bar{S}, \bar{e})$ and $t.\mathit{actions} = \bar{a}$ and 
$\cdot; \Theta \vdash e_j : \tau_j$ for $e_j \in \bar{e}$ and 
$\mathtt{Restrict}~\Theta_1~S_i \vdash a_i : \bar{\tau_i} \rightarrow \Theta_i'$ for $a_i \in \bar{a}$ and $\Theta_2 = \sum\left( \Theta_i' \right)$\\[.5em]
Only one evaluation rule applies to $c$, \textsc{E-Apply}. It follows that
$\mathcal{CA}(t, H) = (a_i, \bar{v})$, and $c' = c_i[\bar{v}/\bar{x}]$
where $\mathcal{A}(a_i) = \lambda \bar{x}.~c_i$. By inverting
\textsc{T-Action}, we have $\cmdtype{\Gamma, \bar{x} : \bar{\tau}_i;}{\texttt{Restrict}~\Theta~S_i}{c_i}{\Theta_i'}$. 
By \iftoggle{techreport}{Proposition~\ref{prop:cp-action-data}}{control plane assumption (2)}, we have $\cdot; \cdot \vdash \bar{v} : \bar{\tau}_i$. 
By the substitution lemma, we have $\cmdtype{\Gamma}{\texttt{Restrict}~\Theta~S_i}{c_i[\bar{v}/\bar{x}]}{\Theta_i'}$.
Let $\Theta_1' = \mathtt{Restrict}~\Theta~S_i$ and $\Theta_2' = \Theta_i'$. 
We have shown that $\cmdtype{\Gamma}{\Theta_1'}{c'}{\Theta_2'}$, 
we have that $H' \models \Theta_1'$ by \iftoggle{techreport}{Proposition~\ref{prop:cp-valid}}{control plane assumption (3)}, and 
we have $\Theta_2' < \Theta_2$ by the definition of $<$ and the semantics of union types.\qedhere
\end{description}
\end{proof}
 
\section{Experience (Evaluation)}
\label{sec:evaluation}

We implemented our type system in a tool called \ourchecker that
automatically checks P4 programs and reports violations of the type
system presented in Figure~\ref{fig:command-typing-rules}. \ourchecker
uses the front-end of \texttt{p4v}~\cite{Liu:2018aa} and handles the
full \PPPP{14} language.\footnote{We also have an open-source
    prototype implementation for \PPPP{16} that handles the most
    common features of \PPPP{16}
    (\url{https://github.com/cornell-netlab/p4check}).} Our key findings,
which are reported in detail below, show (i) that our type system
finds bugs ``in the wild'' and (ii) that the programmer effort needed
to repair programs to pass our type checker is modest.

\begin{figure}[t]
  \centering
    \includegraphics[width=\textwidth]{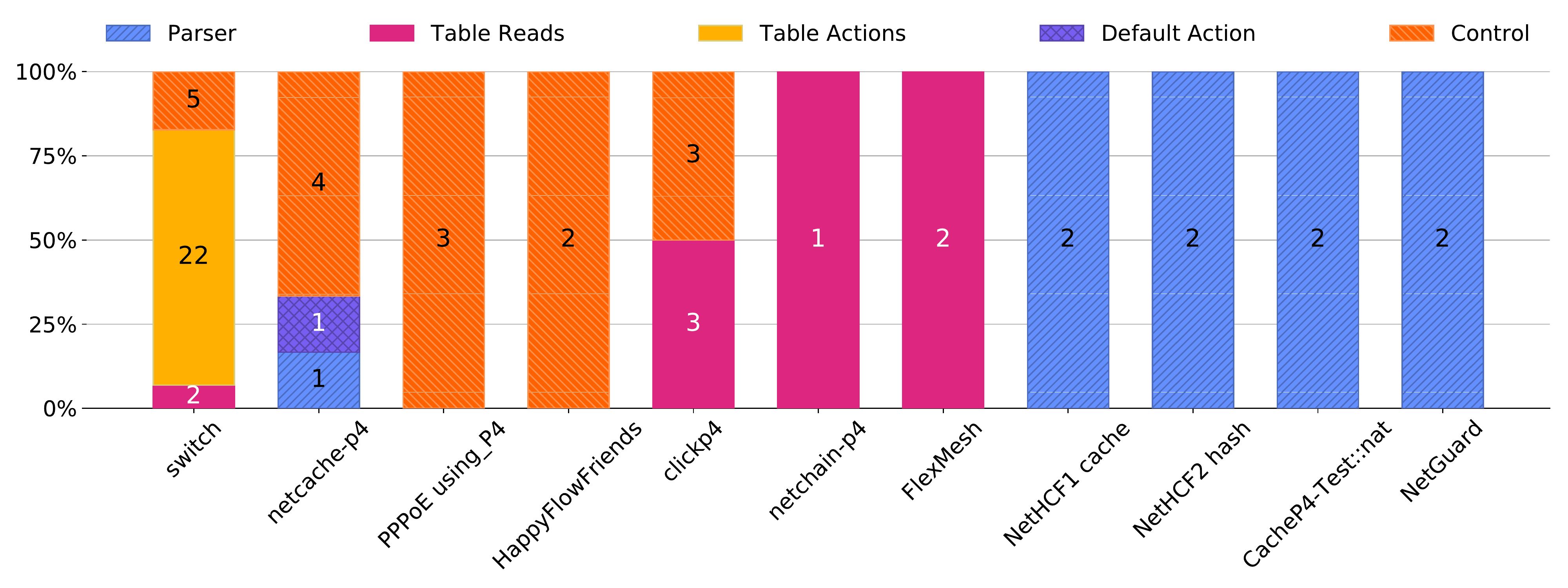}
  \caption{Proportional frequencies of each bug type per-program. The
    raw number of bugs for each program and category is reported at
    the top of each stacked bar. }
  \label{fig:p4check-class}
\end{figure}

\begin{figure}[t]
  \centering
  \includegraphics[width=\textwidth]{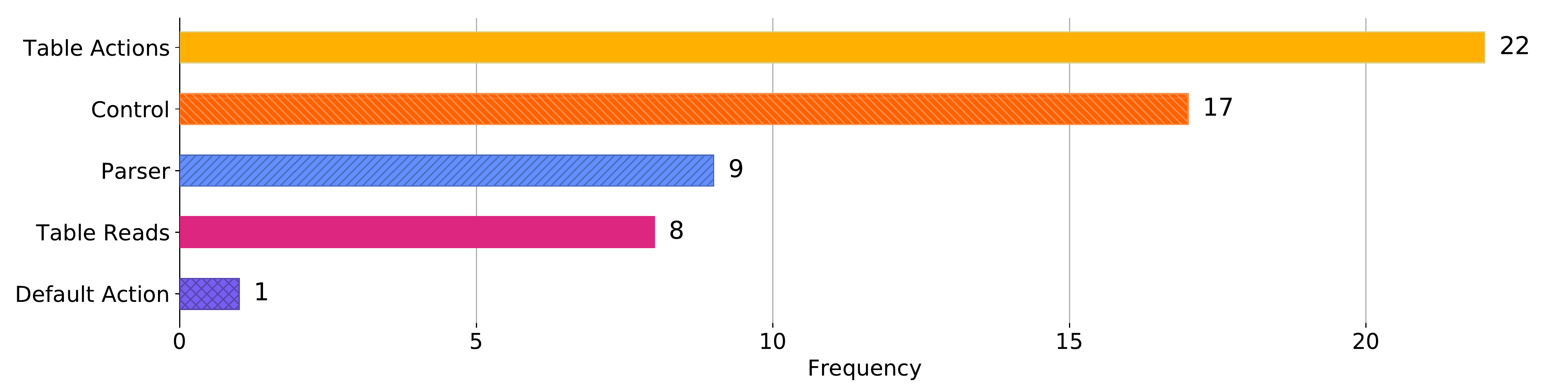}
  \caption{Frequency of each bug across all programs. The raw number
    of bugs in each category is reported to the right of the bar}
  \label{fig:p4check-freq}
\end{figure}
      
\subsection{Overview of Bugs in the Wild}

We ran \ourchecker on 15 open source $\PPPP{14}$ programs\footnote{We
  chose to check \PPPP{14} instead of \PPPP{16}, since there are
  currently more \PPPP{14} programs available on GitHub.} of varying
sizes and complexity, ranging from 143 to 9060 lines of code.
Our criteria for selecting programs was: (1) each program had to be
open source, (2) available on GitHub, and (3) compile without errors,
(4) and be written either by industrial teams developing production
code or by researchers implementing standard or novel network
functionality in P4---i.e., we excluded programs primarily used for
teaching.
Out of the 15 subject programs only 4 passed our type checker, all of
which were simple implementations of routers or DDoS mitigation that
accepted only a small number of packet types and were relatively small
(188--635 lines of code).
For the remaining 11 programs (industrial and research) our checker
found $418$ type checking violations overall.

Frequently, multiple violations produced by \ourchecker have the same
root cause.
For example, if a single action \texttt{rewrite\_ipv4} that rewrites
fields \texttt{srcAddr} and \texttt{dstAddr} for an \texttt{ipv4}
header is called in a context that cannot prove that \texttt{ipv4} is
valid, then both references to \texttt{ipv4.srcAddr} and
\texttt{ipv4.dstAddr} will be reported as violations, even though they
are due to the same \emph{control} bug
(Section~\ref{sec:unsafe-reference})---namely that
\texttt{rewrite\_ipv4} was not called in a context that could prove
the validity of \texttt{ipv4}.
To address this issue, we applied another metric to quantify the
number of bugs (inspired by the method proposed by
others~\cite{Klees:2018}): we equate the number of bugs in each
program with the number of bug \emph{fixes} required to make the
program in question pass our type checker. Using this metric, we
counted $58$ bugs.

We classified the bugs according to the classes described in
Section~\ref{sec:taxonomy}.
Figure~\ref{fig:p4check-class} depicts the per-program breakdown of
the frequency of each bug class, and Figure~\ref{fig:p4check-freq}
depicts the overall frequency of each bug.
Notice that even though table action bugs were the most frequent bug
(with 22 occurrences), they were only found in a single program
(\texttt{switch.p4}). These bugs are especially prevalent in this program
because of its heavy reliance on correct control-plane configuration.
Conversely, there were 9 occurrences across 5 programs for both parser
bugs and table reads bugs.

Readers familiar with previous work on \texttt{p4v}~\cite{Liu:2018aa},
a recent P4 verification tool, may notice that we detected no default
action bugs for the \texttt{switch.p4} program, while \texttt{p4v}
reported many!
The reasons for this are two-fold.
First, \texttt{p4v} allows programmers to verify complex properties,
which means that it can express fine-grained conditions on tables and
relationships between them.
In contrast, we make heuristic assumptions about P4 programs that
automatically eliminate many bugs, including some default action bugs.
Second, our repairs are often coarse-grained and may enforce a
stronger guarantee on the program than may be necessary; using
first-order logic annotations, \texttt{p4v} programmers manually
specify the weakest (and hence more complex) assumptions.

We make no claims about the completeness of our taxonomy. For example,
we found one instance, in the \textsc{HappyFlowFriends} program, where
the programmer had mistakenly instantiated metadata $m$ as a header,
and consequently did not parse $m$ (since metadata is always valid)
causing $m$ to (ironically) always be invalid.

\subsection{\ourchecker in Action}
\label{sec:bugfixes}

We reprise the canonical examples of each class of bugs from
Section~\ref{sec:taxonomy}, describing how \ourchecker detects them
and discussing ways to fix them.

\subsubsection{Parser Bugfixes}

Recall Figure~\ref{fig:NetHCF-parser}, which exhibits the parser bug.
The bug occurs because the parser, which extracts IPv4-TCP packets,
boots unexpected packets (such as IPv6 or UDP packets) directly to
\texttt{ingress}, which then assumes that both the \texttt{ipv4} and
\texttt{tcp} headers are valid, even though the parser does not
guarantee this fact.

\begin{figure}
  \begin{minipage}{\textwidth}
\begin{lstlisting}[style=fixup]
./h.p4, line 350, cols 12-21: +error+ |tcp| not guaranteed to be valid
./h.p4, line 118, cols 8-16: +error+ |ipv4| not guaranteed to be valid
./h.p4, line 101, cols 42-50: +error+ |ipv4| not guaranteed to be valid
./h.p4, line 320, cols 8-15: +error+ |tcp| not guaranteed to be valid
./h.p4, line 362, cols 12-19:+error+ |tcp| not guaranteed to be valid
./h.p4, line 362, cols 29-36: +error+ |tcp| not guaranteed to be valid
./h.p4, line 295, cols 60-69: +error+ |tcp| not guaranteed to be valid
./h.p4, line 107, cols 8-16: +error+ |ipv4| not guaranteed to be valid
./h.p4, line 101, cols 42-50: +error+ |ipv4| not guaranteed to be valid
./h.p4, line 163, cols 8-16: +error+ |ipv4| not guaranteed to be valid
./h.p4, line 101, cols 42-50: +error+ |ipv4| not guaranteed to be valid
\end{lstlisting}
  \end{minipage}
  \hrule
  \begin{minipage}{\textwidth}
\begin{lstlisting}[style=fixup]
./h.p4, line 350, cols 12-21: +error+ |tcp| not guaranteed to be valid
./h.p4, line 320, cols 8-15: +error+ |tcp| not guaranteed to be valid
./h.p4, line 362, cols 12-19: +error+ |tcp| not guaranteed to be valid
./h.p4, line 362, cols 29-36: +error+ |tcp| not guaranteed to be valid
./h.p4, line 295, cols 60-69: +error+ |tcp| not guaranteed to be valid
\end{lstlisting}
  \end{minipage}
  \caption{Curated output from \ourchecker for the parser bug in \textsc{NetHCF} before (above) and after (below) modifying \texttt{parse\_ethernet}}   \label{fig:p4check-output}
\end{figure}

In terms of our type system, the parser produces packets of type
$\texttt{ethernet} \cdot (1 + \texttt{ipv4} \cdot (1 +
\texttt{tcp}))$; however the control only handles packets of type
$\texttt{ethernet} \cdot \texttt{ipv4} \cdot \texttt{tcp}$. Hence,
when typecheck this example, \ourchecker reports every reference to
\texttt{tcp} and \texttt{ipv4} in the whole program as a violation of
the type system.
As shown in the top half of Figure~\ref{fig:p4check-output}, we get an error message at every
reference to \texttt{ipv4} or \texttt{tcp}.
The ubiquity of the reports intimates a mismatch between the parsing
and the control types, which gives the programer a hint as how to fix
the problem.

When we modify the \texttt{default} clause in
\texttt{parse\_ethernet}, as in Figure~\ref{fig:NetHCF-parser}, and
run our tool again, all of the \texttt{ipv4} violations are removed
from the output, as shown in the bottom half of
Figure~\ref{fig:p4check-output}.
Then fixing the \texttt{parse\_ipv4} parser, as in
Figure~\ref{fig:NetHCF-parser}, causes our tool to output no
violations. In particular, the type upon entering the \texttt{ingress}
control function is
$\texttt{ethernet}\cdot\texttt{ipv4}\cdot\texttt{tcp}$, so all
subsequent references to \texttt{ipv4} and \texttt{tcp} are safe.

\subsubsection{Control Bugfixes}

Recall that a control bug occurs when the incoming type presents a
choice between two instances that are not handled by subsequent code.
The program shown in Figure~\ref{fig:netcache-p4-unsafety}
uses a parser that produces the type $\Theta = \texttt{ethernet} \cdot
(1 + \texttt{ipv4} \cdot (1 + \texttt{udp} \cdot(1 + \texttt{nc\_hdr}
\cdot \tau) + \texttt{tcp}))$, where $\tau$ is a type for caching operations.
Note that $\texttt{Includes}~\Theta~\texttt{nc\_hdr}$ does not hold.
However,  \texttt{process\_cache} and \texttt{process\_value} only type
check in contexts where $\texttt{Includes}~\Theta~\texttt{nc\_hdr}$ is
true. \ourchecker reports type violations at every reference to
$\texttt{nc\_hdr}$. Fixing this error is simply a matter of wrapping
the \texttt{process\_cache()} call in a validity check as demonstrated
in Figure~\ref{fig:netcache-p4-unsafety}.
As \textsc{NetCache} handles TCP and UDP packets as well as its
special-purpose packets, we simply apply the IPv4
routing table if the validity check for \texttt{nc\_hdr}
fails.\footnote{Astute readers may detect a parser bug in this
  example. Hint, the \texttt{ipv4\_route} table requires
  $\texttt{Includes}~\Theta~\texttt{ipv4}$ where $\Theta$ is type
  where it is applied.}

\subsubsection{Table Reads Bugfixes}

\begin{figure}[t]
  \begin{minipage}{\textwidth}
\begin{lstlisting}[style=fixup]
port.p4, line 248, cols 8-24: !warning!: assuming either |vlan_tag_[0]| matched as valid or |vlan_tag_[0].vid| wildcarded
  
port.p4, line 250, cols 8-24: !warning!: assuming either |vlan_tag_[1]| matched as valid or |vlan_tag_[1].vid| wildcarded
\end{lstlisting}      
  \end{minipage}
  \hrule
  \begin{minipage}{\textwidth}
\begin{lstlisting}[style=fixup]
fabric.p4 line 42, cols 41-67: !warning!: assuming |fabric_header_cpu| matched as valid for rules with action |terminate_cpu_packet|

fabric.p4, line 57, cols 17-54: !warning!: assuming |fabric_header_unicast| matched as valid for rules with action |terminate_fabric_unicast_packet|

fabric.p4, line 81, cols 17-56: !warning!: assuming |fabric_header_multicast| matched as valid for rules with action |terminate_fabric_multicast_packet|
\end{lstlisting}      
  \end{minipage}
  
  \caption{Warnings printed after fixing \texttt{switch.p4}'s reads bug
    (top), and its actions bug (bottom)}
  \label{fig:warnings}
\end{figure}

Table reads errors, as shown in Figure~\ref{fig:switch-reads}, occur
when a header $h$ is included in the \texttt{reads} declaration of a
table $t$ with match kind $k$, and $h$ is not guaranteed to be valid
at the call site of $t$, and if $h \not \in \texttt{valid\_reads}(t)$
or the match-kind of $k \neq \texttt{ternary}$.

In the case of the \texttt{port\_vlan\_mapping} table in
Figure~\ref{fig:switch-reads}, there is a valid bit for both
\texttt{vlan\_tag\_[0]} and \texttt{vlan\_tag\_[1]}, both of which are
followed by \texttt{exact} matches.
To solve this problem, we need to use the \texttt{ternary} match-kind
instead, which allows the use of wildcard matching.
When a field is matched with a wildcard, the table does not attempt to
compute the value of the \texttt{reads} expression.

This fix assumes that the controller is well behaved and fills the
\texttt{vlan\_tag\_[0].vid} with a wildcard whenever
\texttt{vlan\_tag\_[0]} is matched as invalid (and similarly for
\texttt{vlan\_tag\_[1]}). This also what the \name type system does,
with its $\mathsf{maskable}$ checks in the \textsc{T-Apply} rule
\ourchecker prints warnings describing these assumptions to the
programmer (top of Figure~\ref{fig:warnings}), giving them properties
against which to check their control plane implementation.

\subsubsection{Table Action Bugfixes}

Table actions bugs occur when at least one action cannot be safely
executed in all scenarios. For example, the table
\texttt{fabric\_ingress\_dst\_lkp} shown in
Figure~\ref{fig:switch-buggy} has a table action bug, which can be
fixed by modifying the table's \texttt{reads} declaration.
Recall that the parser will parse exactly one of the headers
\texttt{fabric\_hdr\_cpu}, \texttt{fabric\_hdr\_unicast} and
\texttt{fabric\_hdr\_multicast}, which means that when the table is
applied at type $\Theta$, exactly one of
$\texttt{Includes}~\Theta~\texttt{fabric\_hdr\_i}$ for $i \in
\{\texttt{cpu}, \texttt{unicast}, \texttt{multicast}\}$ will hold.
Now, the action \texttt{term\_cpu\_packet} typechecks only with the
(nonempty) type $\texttt{Restrict}\; \Theta\;
\texttt{fabric\_hdr\_cpu}$, and the actions
$\texttt{term\_fabric\_}i\texttt{\_packet}$ only typecheck with the
(nonempty) types \\$\texttt{Restrict}\; \Theta\;
\texttt{term\_fabric\_}i\texttt{\_packet}$ for $i = \texttt{unicast},
\texttt{multicast}$.
\ourchecker suggests that this is the cause of the bug since it
reports type violations for all of the references to these three
headers in the control paths following from the application of
\texttt{fabric\_ingress\_dst\_lkp}.

The optimal\footnote{Another fix would be to refactor the single
  into multiple tables, each guarded by a separate validity
  check. However, combining this kind of logic in a single table
  helps conserve memory, so in striving to change the
  behavior of the program as little as possible, we propose modifying
  the table reads.} fix here is to augment the \texttt{reads}
declaration to include a validity check for each contentious header.
We then assume that the controller is well-behaved enough to only call
actions when their required headers are valid, allowing us to
typecheck each action in the appropriate type restriction.
\ourchecker alerts the programmer whenever it makes such an assumption.
We show these warnings for the fixed version of
\texttt{fabric\_ingress\_dst\_lkp} below the line in
Figure~\ref{fig:warnings}.

\subsubsection{Default Action Bugfix}

Default action bugs occur when a programmer creates a wrapper table
for an action that modifies the type, and forgets to force the table
to call that action when the packet misses.
The \texttt{add\_value\_header\_1} table from
Figure~\ref{fig:default-action} wraps the action
\texttt{add\_value\_header\_1\_act}, which calls the single line
\texttt{add\_header(nc\_value\_1)}.

The default action, when left unspecified, is \texttt{nop}, which
means that if the pre-application type was $\Theta$, then the
post-application type is
\mbox{$\Theta + \Theta \cdot \texttt{nc\_value\_1}$}, which does not
include $\texttt{nc\_value\_1}$.  Hence, \ourchecker reports every
subsequent reference (on this code path) to \texttt{nc\_header\_1} to
be a type violation.

To fix this bug, we need to set the default action to
$\texttt{add\_value\_1}$---this makes the post-application type
$\Theta\cdot\texttt{nc\_value\_1} + \Theta\cdot\texttt{nc\_value\_1} =
\Theta\cdot\texttt{nc\_value\_1}$, which includes
$\texttt{nc\_value\_1}$, thus allowing the subsequent code to
typecheck.

\subsection{Overhead}

It is important to evaluate two kinds of overhead when considering a
static type system: overhead on programmers and on the underlying
implementation.

Typically, adding a static type system to a dynamic type system
requires more work for the programmer---the field of gradual typing is
devoted breaking the gargantuan task into smaller commit-sized
chunks~\cite{Campora:2017}.
Surprisingly, in our experience, migrating real-world P4 code to pass
the \name type system only required modest programmer effort.

To qualitatively evaluate the effort required to change an unsafe
program into a safe one using our type system, we manually fixed
all of the detected bugs.
The programs that had bugs required us to edit between $0.10\%$ and
$1.4\%$ of the lines of code.
The one exception was \textsc{PPPoE\_using\_P4}, which was a 143 line
program that required 6 line-edits (4\%), all of which were validity
checks.
Conversely, \texttt{switch.p4} required 34 line edits, the greatest
observed number, but this only accounted for $0.37\%$ of the total
lines of code in the program.

Each class of bugs has a simple one-to-two line fix, as
described in Section~\ref{sec:bugfixes}: adding a validity check,
adding a default action, or slightly modifying the parser.
Each of these changes was straightforward to identify and simple to
make.

Another possible concern is that that extending tables with extra read
expressions, or adding run-time validity checks to controls, might
impose a heavy cost on implementations, especially on hardware.
Although we have not yet performed an extensive study of the impact on
compiled code, based on the size and complexity of the annotations we
added, we believe the additional cost should be quite low. We were
able to compile our fixed version of the \texttt{switch.p4} program to
the Tofino architecture~\cite{Tofino} with only a modest increase in
resource usage. Overall, given the large number of potential bugs
located by \ourchecker, we believe the assurance one gains about
safety properties by using a static type system makes the costs well
worth it.
 
\section{Related Work}
\label{sec:related}

Probably the most closely related work to \name is
\texttt{p4v}~\cite{Liu:2018aa}. Unlike \name, which is based on a
static type system, \texttt{p4v} uses Dijkstra's approach to program
verification based on predicate transformer semantics. To model the
behavior of the control plane, \texttt{p4v} uses first-order
annotations. \name's typing rule for table application is inspired by
this idea, but adopts simple heuristics---e.g., we only assume
  that the control plane is well-behaved---rather than requiring
logical annotations.

Both \texttt{p4v} and \textsc{P4Check} can be used to verify safety
properties of data planes modelled in P4---e.g., that no read or write
operations are possible on an invalid header. As it is often the case
when comparing approaches based on types to those based on program
verification, \texttt{p4v} can check more complex properties,
including architectural invariants and program-specific
properties---e.g., that the IPv4 time-to-live field is correctly
decremented on every packet. However, in general, it requires
annotating the program with formal specifications both for the
correctness property itself and to model the behavior of the control
plane.

McKeown et al. developed an operational semantics for
P4~\cite{McKeown:2016aa}, which is translated to Datalog to verify
safety properties and to check program equivalence. An operational
semantics for P4 was also developed in the K
framework~\cite{Rosu:2010aa}, yielding a symbolic model checker and
deductive verification tool~\cite{Kheradmand:2018aa}.
Vera~\cite{Stoenescu:2018} models the semantics of P4 by translation
to SymNet~\cite{Stoenescu:2016}, and develops a symbolic execution
engine for verifying a variety of properties, including header
validity.

Compared to \name, these approaches do not use their formalization of
P4 as a foundation for defining a type system that addresses common
bugs. To the best of our knowledge, \name is the first formal calculus
for a P4-like packet processing language that provides
correct-by-construction guarantees of header safety properties.

Other languages have used type systems to rule our safety problems due
to null references. For example, NullAway~\cite{Sridharan:2018aa}
analyzes all Java programs annotated with \texttt{@Nullable}
annotations, making path-sensitive deductions about which references
may be null. Similar to the validity checks in \name, NullAway
analyses conditionals for null checks of the form \texttt{var != null}
using data flow analysis.

Looking further afield, PacLang~\cite{Ennals:2004aa} is a concurrent
packet-processing language that uses a linear type system to allow
multiple references to a given packet within a single thread. PacLang
and \name share the use of a type system for verifying safety
properties but they differ in the kind of properties they address and,
hence, the kind of type system they employ for this purpose. In
addition, the primary focus in PacLang is on efficient compilation
whereas \name is concerned with ensuring safety of header data.

Domino~\cite{Sivaraman:2016aa} is a domain-specific language for data
plane algorithms supporting {\it packet transactions}---i.e., blocks
of code that are guaranteed to be atomic and isolated from other
transactions. In Domino, the programmer defines the operations needed
for each packet without worrying about other in-flight packets. If it
succeeds, the compiler guarantees performance at the line rate
supported on programmable switches. Overall, Domino focuses on
transactional guarantees and concurrency rather than header safety
properties.

BPF+~\cite{Begel:1999aa} and eEBPF~\cite{Corbet:2014} are
packet-processing frameworks that can be used to extend the kernel
networking stack with custom functionality. The modern eBPF framework
is based on machine-level programming model, but it uses a virtual
machine and code verifier to ensure a variety of basic safety
properties. Much of the recent work on eBPF focuses on techniques such
as just-in-time compilation to achieve good performance.

SNAP~\cite{Arashloo:2016:SSN:2934872.2934892} is a language for
stateful packet processing based on P4. It offers a programming model
with global state registers that are distributed across many physical
switches while optimizing for various criteria, such as minimizing
congestion. More specifically, the compiler analyses read/write
dependencies to automatically optimize the placement of state and the
routing of traffic across the underlying physical topology.

While our approach to track validity is network-specific, is similar to taint analysis~\cite{Volpano:1996,
  Halfond:2006, Huang:2014}, which attempts to identify secure program
parts that can be safely accessed.

Of course, there is a long tradition of formal calculi that aim to
capture some aspect of computation and make it amenable for
mathematical reasoning. The design of \name is directly inspired by
Featherweight Java~\cite{Igarashi:2001aa}, which stands out for its
elegant formalization of a real-world language in an extensible core
calculus.
 
\section{Conclusion}
\label{sec:conc}

P4 provides a powerful programming model for network devices based on
high-level and declarative abstractions. Unfortunately, P4 lacks basic
safety guarantees, which often lead to a variety of bugs in practice.
This paper proposes \name, a domain-specific language for programmable
data planes that comes equipped with a formal semantics and a static
type system which ensures that every read or write to a header at
run-time will be safe. Under the hood, \name uses a rich set of
types that tracks dependencies beween headers, as well as a
path-sensitive analysis and domain-specific heuristics that model
common idioms for programming control planes and minimize false
positives. Our experiments using an OCaml prototype and a suite of
open-source programs found on GitHub show that most P4 applications
can be made safe with minimal programming effort. We hope that our
work can help lay the foundation for future enhancements to P4 as well
as the next generation of data plane languages. In the future, we plan
to explore enriching \name's type system to track additional
properties, investigate correct-by-construction techniques for writing
control-plane code, and develop a compiler for the language.

\bibliography{report}

\begin{thebibliography}{10}

\bibitem{Arashloo:2016:SSN:2934872.2934892}
Mina~Tahmasbi Arashloo, Yaron Koral, Michael Greenberg, Jennifer Rexford, and
  David Walker.
\newblock Snap: Stateful network-wide abstractions for packet processing.
\newblock In {\em Proceedings of the 2016 ACM SIGCOMM Conference}, SIGCOMM '16,
  pages 29--43, New York, NY, USA, 2016. ACM.
\newblock URL: \url{http://doi.acm.org/10.1145/2934872.2934892}, \href
  {http://dx.doi.org/10.1145/2934872.2934892}
  {\path{doi:10.1145/2934872.2934892}}.

\bibitem{Bai:2018}
Jiasong Bai, Jun Bi, Menghao Zhang, and Guanyu Li.
\newblock Filtering spoofed {IP} traffic using switching {ASICs}.
\newblock In {\em Proceedings of the ACM SIGCOMM 2018 Conference on Posters and
  Demos}, pages 51--53. ACM, 2018.

\bibitem{Begel:1999aa}
Andrew Begel, Steven McCanne, and Susan~L. Graham.
\newblock Bpf+: Exploiting global data-flow optimization in a generalized
  packet filter architecture.
\newblock In {\em Proceedings of the Conference on Applications, Technologies,
  Architectures, and Protocols for Computer Communication}, SIGCOMM '99, pages
  123--134, New York, NY, USA, 1999. ACM.
\newblock URL: \url{http://doi.acm.org/10.1145/316188.316214}, \href
  {http://dx.doi.org/10.1145/316188.316214} {\path{doi:10.1145/316188.316214}}.

\bibitem{Bosshart:2014aa}
Pat Bosshart, Dan Daly, Glen Gibb, Martin Izzard, Nick McKeown, Jennifer
  Rexford, Cole Schlesinger, Dan Talayco, Amin Vahdat, George Varghese, and
  David Walker.
\newblock P4: Programming protocol-independent packet processors.
\newblock {\em SIGCOMM Comput. Commun. Rev.}, 44(3):87--95, July 2014.
\newblock URL: \url{http://doi.acm.org/10.1145/2656877.2656890}, \href
  {http://dx.doi.org/10.1145/2656877.2656890}
  {\path{doi:10.1145/2656877.2656890}}.

\bibitem{Campora:2017}
John~Peter Campora, Sheng Chen, Martin Erwig, and Eric Walkingshaw.
\newblock Migrating gradual types.
\newblock {\em Proceedings of the ACM on Programming Languages}, 2(POPL):15,
  2017.

\bibitem{Casado:2007aa}
Martin Casado, Michael~J Freedman, Justin Pettit, Jianying Luo, Nick McKeown,
  and Scott Shenker.
\newblock Ethane: Taking control of the enterprise.
\newblock In {\em ACM SIGCOMM Computer Communication Review}, volume~37, pages
  1--12. ACM, 2007.

\bibitem{P4:2017aa}
P4~Language Consortium.
\newblock P4 language specification, version 1.0.4.
\newblock Technical report, Available at https://p4.org/specs/, 2017.

\bibitem{Corbet:2014}
Jonathan Corbet.
\newblock {BPF}: the universal in-kernel virtual machine, May 2014.
\newblock Available at \url{https://lwn.net/Articles/599755/},.

\bibitem{Ennals:2004aa}
Robert Ennals, Richard Sharp, and Alan Mycroft.
\newblock Linear types for packet processing.
\newblock In David Schmidt, editor, {\em Programming Languages and Systems},
  pages 204--218, Berlin, Heidelberg, 2004. Springer Berlin Heidelberg.

\bibitem{Halfond:2006}
William G.~J. Halfond, Alessandro Orso, and Panagiotis Manolios.
\newblock Using positive tainting and syntax-aware evaluation to counter {SQL}
  injection attacks.
\newblock In {\em Proceedings of the 14th ACM SIGSOFT International Symposium
  on Foundations of Software Engineering}, SIGSOFT '06/FSE-14, pages 175--185,
  New York, NY, USA, 2006. ACM.
\newblock URL: \url{http://doi.acm.org/10.1145/1181775.1181797}, \href
  {http://dx.doi.org/10.1145/1181775.1181797}
  {\path{doi:10.1145/1181775.1181797}}.

\bibitem{Huang:2014}
Wei Huang, Yao Dong, and Ana Milanova.
\newblock Type-based taint analysis for {J}ava web applications.
\newblock In {\em Proceedings of the 17th International Conference on
  Fundamental Approaches to Software Engineering - Volume 8411}, pages
  140--154, New York, NY, USA, 2014. Springer-Verlag New York, Inc.
\newblock URL: \url{http://dx.doi.org/10.1007/978-3-642-54804-8_10}, \href
  {http://dx.doi.org/10.1007/978-3-642-54804-8_10}
  {\path{doi:10.1007/978-3-642-54804-8_10}}.

\bibitem{Igarashi:2001aa}
Atsushi Igarashi, Benjamin~C. Pierce, and Philip Wadler.
\newblock Featherweight {J}ava: A minimal core calculus for {J}ava and {GJ}.
\newblock {\em ACM Trans. Program. Lang. Syst.}, 23(3):396--450, May 2001.
\newblock URL: \url{http://doi.acm.org/10.1145/503502.503505}, \href
  {http://dx.doi.org/10.1145/503502.503505} {\path{doi:10.1145/503502.503505}}.

\bibitem{Jin:2018}
Xin Jin.
\newblock netcache-p4, Mar 2018.
\newblock URL: \url{https://github.com/netx-repo/netcache-p4}.

\bibitem{Jin:2018a}
Xin Jin, Xiaozhou Li, Haoyu Zhang, Nate Foster, Jeongkeun Lee, Robert
  Soul{\'e}, Changhoon Kim, and Ion Stoica.
\newblock {NetChain}: Scale-free sub-rtt coordination.
\newblock In {\em {USENIX} Symposium on Networked Systems Design and
  Implementation ({NSDI})}, April 2018.
\newblock Best paper award.

\bibitem{Jin:2017}
Xin Jin, Xiaozhou Li, Haoyu Zhang, Robert Soul{\'e}, Jeongkeun Lee, Nate
  Foster, Changhoon Kim, and Ion Stoica.
\newblock Netcache: Balancing key-value stores with fast in-network caching.
\newblock In {\em Proceedings of the 26th Symposium on Operating Systems
  Principles}, pages 121--136. ACM, 2017.

\bibitem{Kheradmand:2018aa}
Ali Kheradmand and Grigore Ro\c{s}u.
\newblock {P4K}: A formal semantics of {P4} and applications.
\newblock Technical Report https://arxiv.org/abs/1804.01468, University of
  Illinois at Urbana-Champaign, April 2018.

\bibitem{Klees:2018}
George~T. Klees, Andrew Ruef, Benjamin Cooper, Shiyi Wei, and Michael Hicks.
\newblock Evaluating fuzz testing.
\newblock In {\em Proceedings of the {ACM} Conference on Computer and
  Communications Security (CCS)}, October 2018.

\bibitem{Kodeboyina:2015}
Chaitanya Kodeboyina.
\newblock An open-source {P4} switch with {SAI} support, Jun 2015.
\newblock URL:
  \url{https://p4.org/p4/an-open-source-p4-switch-with-sai-support.html}.

\bibitem{Kumar:2016aa}
Rahul Kumar and BB~Gupta.
\newblock Stepping stone detection techniques: Classification and
  state-of-the-art.
\newblock In {\em Proceedings of the international conference on recent
  cognizance in wireless communication \& image processing}, pages 523--533.
  Springer, 2016.

\bibitem{Liu:2018aa}
Jed Liu, William Hallahan, Cole Schlesinger, Milad Sharif, Jeongkeun Lee,
  Robert Soul{\'e}, Han Wang, C\u{a}lin Ca\c{s}caval, Nick McKeown, and Nate
  Foster.
\newblock P4v: Practical verification for programmable data planes.
\newblock In {\em Proceedings of the 2018 Conference of the ACM Special
  Interest Group on Data Communication}, SIGCOMM '18, pages 490--503, New York,
  NY, USA, 2018. ACM.
\newblock URL: \url{http://doi.acm.org/10.1145/3230543.3230582}, \href
  {http://dx.doi.org/10.1145/3230543.3230582}
  {\path{doi:10.1145/3230543.3230582}}.

\bibitem{McKeown:2008aa}
Nick McKeown, Tom Anderson, Hari Balakrishnan, Guru Parulkar, Larry Peterson,
  Jennifer Rexford, Scott Shenker, and Jonathan Turner.
\newblock Openflow: Enabling innovation in campus networks.
\newblock {\em SIGCOMM Comput. Commun. Rev.}, 38(2):69--74, March 2008.
\newblock URL: \url{http://doi.acm.org/10.1145/1355734.1355746}, \href
  {http://dx.doi.org/10.1145/1355734.1355746}
  {\path{doi:10.1145/1355734.1355746}}.

\bibitem{McKeown:2016aa}
Nick McKeown, Dan Talayco, George Varghese, Nuno Lopes, Nikolaj Bjorner, and
  Andrey Rybalchenko.
\newblock Automatically verifying reachability and well-formedness in {P4}
  networks.
\newblock Technical report, September 2016.
\newblock URL:
  \url{https://www.microsoft.com/en-us/research/publication/automatically-verifying-reachability-well-formedness-p4-networks/}.

\bibitem{Milner:1978}
Robin Milner.
\newblock A theory of type polymorphism in programming.
\newblock {\em Journal of Computer and System Sciences}, 17(3):348--375, dec
  1978.

\bibitem{Tofino}
Barefoot Networks.
\newblock Tofino 2.
\newblock URL: \url{https://www.barefootnetworks.com/products/brief-tofino-2/}.

\bibitem{barefoot-bmv2:2018}
Barefoot Networks.
\newblock Behavioral model, Dec 2018.
\newblock URL: \url{https://github.com/p4lang/behavioral-model}.

\bibitem{OConnor:2018aa}
TJ~OConnor, William Enck, W~Michael Petullo, and Akash Verma.
\newblock Pivotwall: {SDN}-based information flow control.
\newblock In {\em Proceedings of the Symposium on SDN Research}, page~3. ACM,
  2018.

\bibitem{Rosu:2010aa}
Grigore Ro{\c s}u and Traian~Florin {\c S}erb{\u a}nu{\c t}{\u a}.
\newblock An overview of the {K} semantic framework.
\newblock {\em Journal of Logic and Algebraic Programming}, 79(6):397--434,
  2010.
\newblock \href {http://dx.doi.org/10.1016/j.jlap.2010.03.012}
  {\path{doi:10.1016/j.jlap.2010.03.012}}.

\bibitem{Sivaraman:2016aa}
Anirudh Sivaraman, Alvin Cheung, Mihai Budiu, Changhoon Kim, Mohammad Alizadeh,
  Hari Balakrishnan, George Varghese, Nick McKeown, and Steve Licking.
\newblock Packet transactions: High-level programming for line-rate switches.
\newblock In {\em Proceedings of the 2016 ACM SIGCOMM Conference}, SIGCOMM '16,
  pages 15--28, New York, NY, USA, 2016. ACM.
\newblock URL: \url{http://doi.acm.org/10.1145/2934872.2934900}, \href
  {http://dx.doi.org/10.1145/2934872.2934900}
  {\path{doi:10.1145/2934872.2934900}}.

\bibitem{Sridharan:2018aa}
Manu Sridharan.
\newblock Engineering nullaway, uber's open source tool for detecting
  nullpointerexceptions on android, Dec 2018.
\newblock URL: \url{https://eng.uber.com/nullaway/}.

\bibitem{Stoenescu:2018}
Radu Stoenescu, Dragos Dumitrescu, Matei Popovici, Lorina Negreanu, and Costin
  Raiciu.
\newblock Debugging {P4} programs with {Vera}.
\newblock In {\em {ACM} {SIGCOMM}}, pages 518--532, New York, NY, USA, 2018.
  ACM.
\newblock URL: \url{http://doi.acm.org/10.1145/3230543.3230548}, \href
  {http://dx.doi.org/10.1145/3230543.3230548}
  {\path{doi:10.1145/3230543.3230548}}.

\bibitem{Stoenescu:2016}
Radu Stoenescu, Matei Popovici, Lorina Negreanu, and Costin Raiciu.
\newblock {SymNet}: Scalable symbolic execution for modern networks.
\newblock In {\em {ACM} {SIGCOMM}}, pages 314--327, New York, NY, USA, 2016.
  ACM.
\newblock URL: \url{http://doi.acm.org/10.1145/2934872.2934881}, \href
  {http://dx.doi.org/10.1145/2934872.2934881}
  {\path{doi:10.1145/2934872.2934881}}.

\bibitem{Tobin-Hochstadt:2010aa}
Sam Tobin-Hochstadt and Matthias Felleisen.
\newblock Logical types for untyped languages.
\newblock In {\em Proceedings of the 15th ACM SIGPLAN International Conference
  on Functional Programming}, ICFP '10, pages 117--128, New York, NY, USA,
  2010. ACM.
\newblock URL: \url{http://doi.acm.org/10.1145/1863543.1863561}, \href
  {http://dx.doi.org/10.1145/1863543.1863561}
  {\path{doi:10.1145/1863543.1863561}}.

\bibitem{Volpano:1996}
Dennis Volpano, Cynthia Irvine, and Geoffrey Smith.
\newblock A sound type system for secure flow analysis.
\newblock {\em J. Comput. Secur.}, 4(2-3):167--187, January 1996.
\newblock URL: \url{http://dl.acm.org/citation.cfm?id=353629.353648}.

\bibitem{Zhang:2018}
Menghao Zhang.
\newblock Anti-spoof, Nov 2018.
\newblock URL: \url{https://github.com/zhangmenghao/Anti-spoof}.

\end{thebibliography}

\iftoggle{techreport}{
  
\appendix

\section{Additional Operational Semantics rules}
\label{sec:appendix-semantics}
This section presents additional evaluation rules.
\begin{mathpar}
        \inferrule[E-Seq]{
            \ 
        }{
            \langle I, O, H, \syntax{skip;c_2}\rangle \rightarrow \langle I, O, H, \syntax{c_2}\rangle
        }
        \and
        \inferrule[E-Seq1]{
            \langle I, O, H, \syntax{c_1}\rangle \rightarrow \langle I', O', H', \syntax{c_1'}\rangle 
        }{
            \langle I, O, H, \syntax{c_1;c_2}\rangle \rightarrow \langle I', O', H', \syntax{c_1';c_2}\rangle
        }
        \and
    \inferrule[E-If]{
        \langle H, e \rangle \rightarrow e'
    }{
        \langle I, O, H, \syntax{if\ (e)\ then\ c_1\ else\ c_2}\rangle \rightarrow \langle I, O, H, \syntax{if\ (e')\ then\ c_1\ else\ c_2}\rangle
    }
    \and
    \inferrule[E-IfTrue]{\ }{
        \langle I, O, H, \syntax{if\ (true)\ then\ c_1\ else\ c_2}\rangle \rightarrow \langle I, O, H, c_1\rangle
    }
    \and
    \inferrule[E-IfFalse]{\ }{
        \langle I, O, H, \syntax{if\ (false)\ then\ c_1\ else\ c_2}\rangle \rightarrow \langle I, O, H, c_2\rangle
    }
    \and
    \inferrule[E-Mod1]{
        \langle H, e \rangle \rightarrow e'
      }{
          \langle I, O, H, \syntax{h.f=e}\rangle \rightarrow \langle I, O, H, \syntax{h.f=e'}\rangle
      }
\end{mathpar}
 \section{Operations on header types}
\label{sec:appendix-header-ops}
This section presents an in-depth treatment of the operations defined on header instance types. 
In the following we assume that $S$ ranges over elements of the domain $\mathcal{P}(\mathcal{P}(H))$.
\subparagraph{Restriction}
The restrict operator $\texttt{Restrict}~\Theta~h$ recursively traverses $\Theta$ and keeps only those choices in which $h$ is contained, zeroing out the others.
Semantically this has the effect of throwing out the subsets of $\llbracket \Theta \rrbracket$ that do not contain $h$, i.e., we define restriction semantically as $S|_h \triangleq \{hs \mid hs \in S \land h \in hs\}$.
Syntactically we define restriction by induction on $\Theta$ as shown in Figure \ref{fig:restrict}.
\begin{figure}[ht]
    \boxedpage{\textwidth}{
    \begin{align*}
        \mathtt{Restrict}\ 0\ h &\triangleq 0 \\
        \mathtt{Restrict}\ 1\ h &\triangleq 0 \\
        \mathtt{Restrict}\ g\ h &\triangleq \begin{cases}
            g & \text{if }g=h\\
            0 & \text{otherwise}
        \end{cases}\\
        \mathtt{Restrict}\ (\Theta_1 \cdot \Theta_2)\ h &\triangleq ((\mathtt{Restrict}\ \Theta_1\ h) \cdot \Theta_2) + (\Theta_1 \cdot (\mathtt{Restrict}\ \Theta_2\ h)) \\
        \mathtt{Restrict}\ (\Theta_1 + \Theta_2)\ h &\triangleq (\mathtt{Restrict}\ \Theta_1\ h) + (\mathtt{Restrict}\ \Theta_2\ h)\\
    \end{align*}}
    \caption{Syntactic definition of the $\texttt{Restrict}~\Theta~h$ operator.}
    \label{fig:restrict}
\end{figure}
The equivalence of the syntactic and the semantic definition is captured by Lemma \ref{lem:restrict-equal}.
\begin{lemma}
\label{lem:restrict-equal}
$\llbracket \Theta \rrbracket |_h == \llbracket \mathtt{Restrict}\ \Theta\ h \rrbracket.$
\end{lemma}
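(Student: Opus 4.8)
The plan is to prove the identity by structural induction on $\Theta$, working from the semantic definition of restriction, $S|_h \triangleq \{hs \in S \mid h \in hs\}$, the denotational equations for $\llbracket \cdot \rrbracket$ in Figure~\ref{fig:type-semantics-auxfn}, and the syntactic clauses defining $\mathtt{Restrict}$ in Figure~\ref{fig:restrict}. The base cases are routine unfoldings: for $\Theta = 0$ both sides denote $\{\}$; for $\Theta = 1$ we get $\llbracket 1 \rrbracket|_h = \{\{\}\}|_h = \{\}$ since $h \notin \{\}$, matching $\llbracket \mathtt{Restrict}\ 1\ h \rrbracket = \llbracket 0 \rrbracket = \{\}$; and for $\Theta = g$ a case split on $g = h$ yields $\{\{g\}\}$ when $g = h$ and $\{\}$ otherwise, exactly tracking the two branches of $\mathtt{Restrict}\ g\ h$.

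For the choice case $\Theta = \Theta_1 + \Theta_2$, restriction distributes pointwise over union, so $\llbracket \Theta_1 + \Theta_2 \rrbracket|_h = \llbracket \Theta_1 \rrbracket|_h \cup \llbracket \Theta_2 \rrbracket|_h$, and applying the two induction hypotheses together with the semantics of $+$ gives $\llbracket (\mathtt{Restrict}\ \Theta_1\ h) + (\mathtt{Restrict}\ \Theta_2\ h) \rrbracket$, as required.

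The substantive case is concatenation. First I would isolate a set-theoretic lemma about the product operator $\bullet$ on $\mathcal{P}(\mathcal{P}(\mathit{Header}))$: $(S_1 \bullet S_2)|_h = (S_1|_h \bullet S_2) \cup (S_1 \bullet (S_2|_h))$. This holds because $h \in hs_1 \cup hs_2$ iff $h \in hs_1$ or $h \in hs_2$, so the elements of $S_1 \bullet S_2$ containing $h$ are exactly those where $h$ comes from the left factor together with those where it comes from the right. Using this, $\llbracket \Theta_1 \cdot \Theta_2 \rrbracket|_h = (\llbracket \Theta_1 \rrbracket \bullet \llbracket \Theta_2 \rrbracket)|_h$ rewrites to $(\llbracket \Theta_1 \rrbracket|_h \bullet \llbracket \Theta_2 \rrbracket) \cup (\llbracket \Theta_1 \rrbracket \bullet \llbracket \Theta_2 \rrbracket|_h)$, and then the induction hypotheses plus the denotations of $\cdot$ and $+$ identify this with $\llbracket ((\mathtt{Restrict}\ \Theta_1\ h) \cdot \Theta_2) + (\Theta_1 \cdot (\mathtt{Restrict}\ \Theta_2\ h)) \rrbracket$.

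The main obstacle is precisely this concatenation step: one must recognize that restricting a product requires a \emph{sum} of two products (restrict the left factor, or restrict the right factor), and verify that the distributivity identity for $\bullet$ is exactly what makes the syntactic clause sound. A minor point I would flag explicitly is the convention on $\bullet$---whether it is taken over arbitrary sets or only over pairwise-disjoint header sets; since instance names are assumed distinct throughout, the identity goes through in either reading, but stating the convention keeps the product manipulations unambiguous.
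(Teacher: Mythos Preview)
Your proposal is correct and mirrors the paper's own proof: a structural induction on $\Theta$ with routine base cases, distribution of $|_h$ over union for the choice case, and the concatenation case handled via the identity $(S_1 \bullet S_2)|_h = (S_1|_h \bullet S_2) \cup (S_1 \bullet (S_2|_h))$ followed by the induction hypotheses. The paper carries out exactly this calculation inline rather than isolating the $\bullet$-distributivity as a separate lemma, but the argument is the same.
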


\begin{proof}
  \iftoggle{fullproofs}{
    By induction on $\Theta$.\\
    \textit{Case} $\Theta=0$:
    \begin{align*}
        &\phantom{{}={}} \llbracket 0 \rrbracket | h \\
        &= \{\}|h & \text{by definition of } \llbracket.\rrbracket \\
        &= \{hs|hs \in \{\} \land h \in hs \} & \text{by definition of } .|h \\
        &= \{\} & \text{by set theory} \\
        &= \llbracket 0 \rrbracket & \text{by definition of } \llbracket . \rrbracket \\
        &= \llbracket \mathtt{Restrict}\ 0\ h \rrbracket & \text{by definition of } \mathtt{Restrict}\ .\ h 
    \end{align*}
    \textit{Case} $\Theta=1$:
    \begin{align*}
        &\phantom{{}={}} \llbracket 1 \rrbracket | h \\
        &= \{\{\}\}|h & \text{by definition of } \llbracket . \rrbracket \\
        &= \{hs|hs \in \{\{\}\} \land h \in hs \} & \text{by definition of } .|h \\
        &= \{\} & \text{by set theory} \\
        &= \llbracket 0 \rrbracket & \text{by definition of } \llbracket . \rrbracket \\
        &= \llbracket \mathtt{Restrict}\ 1\ h \rrbracket & \text{by definition of } \mathtt{Restrict}\ .\ h
    \end{align*}
    \textit{Case} $\Theta=g$:
    \begin{align*}
        &\phantom{{}={}} \llbracket g \rrbracket | h \\
        &= \{\{g\}\}|h & \text{by definition of } \llbracket . \rrbracket \\
        &= \{hs|hs \in \{\{g\}\} \land h \in hs \} & \text{by definition of } .|h \\
        &\phantom{{}={}} \textit{Subcase } h = g \\
        &= \{\{g\}\} & \text{by set theory} \\
        &= \llbracket g \rrbracket & \text{by definition of } \llbracket . \rrbracket \\
        &= \llbracket \mathtt{Restrict}\ g\ h \rrbracket & \text{by assumption } h=g \text{ and by definition of } \mathtt{Restrict}\ .\ h \\
        &\phantom{{}={}} \textit{Subcase } h \neq g \\
        &= \{\} & \text{by set theory} \\
        &= \llbracket 0 \rrbracket & \text{by definition of } \llbracket . \rrbracket \\
        &= \llbracket \mathtt{Restrict}\ g\ h \rrbracket & \text{by definition of } \mathtt{Restrict}\ .\ h \text{ and by assumption } h \neq g \\
    \end{align*}
    \textit{Case} $\Theta=\Theta_1 \cdot \Theta_2$: \\
    \begin{align*}
        &\phantom{{}={}} \llbracket \Theta_1 \cdot \Theta_2 \rrbracket | h \\
        &= \{ hs_1 \cup hs_2 | hs_1 \in \llbracket \Theta_1 \rrbracket \land hs_2 \in \llbracket \Theta_2 \llbracket \} | h & \text{by definition of } \llbracket . \rrbracket \\
        &= \{ hs_1 \cup hs_2 | hs_1 \in \llbracket \Theta_1 \rrbracket \land hs_2 \in \llbracket \Theta_2 \rrbracket \land h \in (hs_1 \cup hs_2)\} & \text{by definition of } .|h \\
        &= \{ hs_1 \cup hs_2 | hs_1 \in \llbracket \Theta_1 \rrbracket \land hs_2 \in \llbracket \Theta_2 \rrbracket \land h \in hs_1\}\ \cup & \text{by set theory}  \\
        &\phantom{{}={}} \{ hs_1 \cup hs_2 | hs_1 \in \llbracket \Theta_1 \rrbracket \land hs_2 \in \llbracket \Theta_2 \rrbracket \land h \in hs_2\} \\
        &= \{ hs_1 \cup hs_2 | (hs_1 \in \llbracket \Theta_1 \rrbracket \land h \in hs_1) \land hs_2 \in \llbracket \Theta_2 \rrbracket\}\ \cup & \text{by logic and set theory}  \\
        &\phantom{{}={}} \{ hs_1 \cup hs_2 | hs_1 \in \llbracket \Theta_1 \rrbracket \land (hs_2 \in \llbracket \Theta_2 \rrbracket \land h \in hs_2)\}\\
        &= \{hs_1 | hs_1 \in \llbracket \Theta_1 \rrbracket \land h \in hs_1\} \bullet \{hs_2 | hs_2 \in \llbracket \Theta_2 \rrbracket \}\ \cup & \text{ by definition of } S_1 \bullet S_2 \\
        &\phantom{{}={}} \{hs_1 | hs_1 \in \llbracket \Theta_1 \rrbracket \} \bullet \{hs_2 | hs_2 \in \llbracket \Theta_2 \rrbracket \land h \in hs_2\} \\
        &= \{hs_1 | hs_1 \in \llbracket \Theta_1 \rrbracket \}|h \bullet \{hs_2 | hs_2 \in \llbracket \Theta_2 \rrbracket \}\ \cup & \text{by definition of } .|h\\
        &\phantom{{}={}} \{hs_1 | hs_1 \in \llbracket \Theta_1 \rrbracket \} \bullet \{hs_2 | hs_2 \in \llbracket \Theta_2 \rrbracket\}| h \\
        &= \llbracket \Theta_1 \rrbracket | h \bullet \llbracket \Theta_2 \rrbracket \cup \llbracket \Theta_1 \rrbracket \bullet \llbracket \Theta_2 \rrbracket | h & \text{by definition of } \llbracket . \rrbracket \\
        &= \llbracket \mathtt{Restrict}\ \Theta_1\ h \rrbracket \bullet \llbracket \Theta_2 \rrbracket \cup \llbracket \Theta_1 \rrbracket \bullet \llbracket \mathtt{Restrict}\ \Theta_2\ h \rrbracket  & \text{by induction hypothesis}  \\
        &= \llbracket \mathtt{Restrict}\ \Theta_1\ h \cdot \Theta_2 + \Theta_1 \cdot \mathtt{Restrict}\ \Theta_2\ h \rrbracket  & \text{by definition of } S_1 \bullet S_2 \text{ and } \llbracket . \rrbracket \\
        &= \llbracket \mathtt{Restrict}\ (\Theta_1 \cdot \Theta_2)\ h \rrbracket & \text{by definition of } \mathtt{Restrict}\ .\ h 
    \end{align*}
    \textit{Case} $\Theta = \Theta_1 + \Theta_2$:
    \begin{align*}
        &\phantom{{}={}} \llbracket \Theta_1 + \Theta_2 \rrbracket | h \\
        &= (\llbracket \Theta_1 \rrbracket \cup \llbracket \Theta_2 \rrbracket) | h & \text{by definition of } \llbracket . \rrbracket\\
        &= \{hs | hs \in (\llbracket \Theta_1 \rrbracket \cup \llbracket \Theta_2 \rrbracket) \land h \in hs \} & \text{by definition of } .|h \\
        &= \{hs_1|hs_1 \in \llbracket \Theta_1 \rrbracket \land h \in hs_1 \} \cup \{hs_2|hs_2 \in \llbracket \Theta_2 \rrbracket \land h \in hs_2\} & \text{by set theory}\\
        &= \llbracket \Theta_1 \rrbracket | h \cup \llbracket \Theta_2 \rrbracket | h & \text{by definition of } .|h \\
        &= \llbracket \mathtt{Restrict}\ \Theta_1\ h\rrbracket \cup \llbracket \mathtt{Restrict}\ \Theta_2\ h\rrbracket & \text{by induction hypothesis} \\
        &= \llbracket \mathtt{Restrict}\ \Theta_1\ h + \mathtt{Restrict}\ \Theta_2\ h\rrbracket & \text{by definition of } \llbracket . \rrbracket \\
        &= \llbracket \mathtt{Restrict}\ (\Theta_1 + \Theta_2)\ h \rrbracket & \text{by definition of } \mathtt{Restrict}\ .\ h
    \end{align*}
    }{ By straight-forward induction on $\Theta$. }
\end{proof}

\subparagraph{Negated Restriction}
Dually to the restrict operator, $\texttt{NegRestrict}~\Theta~h$ produces only those choices/subsets where $h$ is invalid.
Semantically, negated restriction is defined as $S|_{\neg h} \triangleq \{hs \mid hs \in S \land h \not\in hs\}$.
Syntactically we define \textit{Negated Restriction} by induction on $\Theta$ as shown in Figure \ref{fig:negrestrict}.
\begin{figure}[ht]
    \boxedpage{\textwidth}{\begin{align*}
        \mathtt{NegRestrict}\ 0\ h &\triangleq 0 \\
        \mathtt{NegRestrict}\ 1\ h &\triangleq 1 \\
        \mathtt{NegRestrict}\ g\ h &\triangleq \begin{cases}
            0 & \text{if }g=h\\
            g & \text{otherwise}
        \end{cases}\\
        \mathtt{NegRestrict}\ (\Theta_1 \cdot \Theta_2)\ h &\triangleq (\mathtt{NegRestrict}\ \Theta_1\ h) \cdot (\mathtt{NegRestrict}\ \Theta_2\ h)\\
        \mathtt{NegRestrict}\ (\Theta_1 + \Theta_2)\ h &\triangleq (\mathtt{NegRestrict}\ \Theta_1\ h) + (\mathtt{NegRestrict}\ \Theta_2\ h) \\
    \end{align*}}
    \caption{Syntactic definition of the $\texttt{NegRestrict}~\Theta~h$ operator}
    \label{fig:negrestrict}
\end{figure}

The equivalence of the syntactic and semantic definition is captured by Lemma \ref{lem:negrestrict-equal}.
\begin{lemma}
\label{lem:negrestrict-equal}
$\llbracket \Theta \rrbracket |_{ \neg h} == \llbracket \mathtt{NegRestrict}\ \Theta\ h \rrbracket.$
\end{lemma}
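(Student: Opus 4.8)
The plan is to mirror the proof of Lemma~\ref{lem:restrict-equal}, proceeding by induction on the structure of $\Theta$. First I would unfold both sides of the claimed equality: the left-hand side by the semantic operator $S|_{\neg h} \triangleq \{hs \mid hs \in S \land h \notin hs\}$ together with the equations defining $\llbracket\cdot\rrbracket$, and the right-hand side by the syntactic clauses for $\mathtt{NegRestrict}$ from Figure~\ref{fig:negrestrict}. As with the restrict case, this reduces the goal to five cases.

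The base cases are immediate. For $\Theta = 0$ both sides denote $\{\}$. For $\Theta = 1$ we have $\llbracket 1 \rrbracket|_{\neg h} = \{\{\}\}|_{\neg h} = \{\{\}\}$, since $h \notin \{\}$, which matches $\llbracket \mathtt{NegRestrict}\ 1\ h \rrbracket = \llbracket 1 \rrbracket = \{\{\}\}$. For $\Theta = g$ an instance, a case split on whether $g = h$ gives $\{\}$ (when $g = h$, because $h \in \{g\}$) or $\{\{g\}\}$ (when $g \neq h$), which agrees exactly with the two branches of the syntactic definition.

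For the inductive cases, the choice case $\Theta = \Theta_1 + \Theta_2$ follows because $|_{\neg h}$ distributes over union, i.e.\ $\{hs \mid hs \in S_1 \cup S_2 \land h \notin hs\} = (S_1|_{\neg h}) \cup (S_2|_{\neg h})$; after this the two induction hypotheses close the goal. The concatenation case $\Theta = \Theta_1 \cdot \Theta_2$ is the most interesting step but, I expect, actually simpler than its counterpart in Lemma~\ref{lem:restrict-equal}: the crucial observation is that $h \notin (hs_1 \cup hs_2)$ is equivalent to $h \notin hs_1 \land h \notin hs_2$, so the filter factors cleanly through the product, $\{hs_1 \cup hs_2 \mid hs_1 \in \llbracket\Theta_1\rrbracket \land hs_2 \in \llbracket\Theta_2\rrbracket \land h \notin hs_1 \cup hs_2\} = \llbracket\Theta_1\rrbracket|_{\neg h} \bullet \llbracket\Theta_2\rrbracket|_{\neg h}$, with no need for the two-summand decomposition that positive restriction required. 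Applying the induction hypotheses to each factor and folding back through the definitions of $\bullet$ and $\llbracket\cdot\rrbracket$ yields $\llbracket (\mathtt{NegRestrict}\ \Theta_1\ h) \cdot (\mathtt{NegRestrict}\ \Theta_2\ h) \rrbracket$, as required.

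The only real (and modest) obstacle is getting the set-theoretic bookkeeping in the concatenation case exactly right — in particular justifying the factorization above — but this is routine and strictly easier than the analogous step already carried out for $\mathtt{Restrict}$, so the whole argument is a straightforward induction on $\Theta$.
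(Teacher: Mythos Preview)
Your proposal is correct and follows essentially the same approach as the paper's proof: induction on $\Theta$, with the same case analysis and the same key observation that $h \notin (hs_1 \cup hs_2)$ is equivalent to $h \notin hs_1 \land h \notin hs_2$, making the concatenation case factor cleanly through $\bullet$ without the two-summand split needed for $\mathtt{Restrict}$.
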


\begin{proof}
  \iftoggle{fullproofs}{
    By induction on $\Theta$.\\
    \textit{Case} $\Theta = 0$:
    \begin{align*}
        &\phantom{{}={}} \llbracket 0 \rrbracket | \neg h \\
        &= \{\}|\neg h & \text{by definition of } \llbracket.\rrbracket \\
        &= \{hs|hs \in \{\} \land h \not\in hs \} & \text{by definition of } .|\neg h \\
        &= \{\} & \text{by set theory} \\
        &= \llbracket 0 \rrbracket & \text{by definition of } \llbracket . \rrbracket \\
        &= \llbracket \mathtt{NegRestrict}\ 0\ h \rrbracket & \text{by definition of } \mathtt{NegRestrict}\ .\ h 
    \end{align*}
    \textit{Case} $\Theta = 1$:
    \begin{align*}
        &\phantom{{}={}} \llbracket 1 \rrbracket | \neg h \\
        &= \{\{\}\}|\neg h & \text{by definition of } \llbracket.\rrbracket \\
        &= \{hs|hs \in \{\{\}\} \land h \not\in hs \} & \text{by definition of } .|\neg h \\
        &= \{\{\}\} & \text{by set theory} \\
        &= \llbracket 1 \rrbracket & \text{by definition of } \llbracket . \rrbracket \\
        &= \llbracket \mathtt{NegRestrict}\ 1\ h \rrbracket & \text{by definition of } \mathtt{NegRestrict}\ .\ h 
    \end{align*}
    \textit{Case} $\Theta=g$:
    \begin{align*}
        &\phantom{{}={}} \llbracket g \rrbracket | \neg h \\
        &= \{\{g\}\}|\neg h & \text{by definition of } \llbracket . \rrbracket \\
        &= \{hs|hs \in \{\{g\}\} \land h \not\in hs \} & \text{by definition of } .|\neg h \\
        &\phantom{{}={}} \textit{Subcase } h = g \\
        &= \{\} & \text{by set theory} \\
        &= \llbracket 0 \rrbracket & \text{by definition of } \llbracket . \rrbracket \\
        &= \llbracket \mathtt{NegRestrict}\ 0\ h \rrbracket & \text{by assumption } h=g \text{ and by definition of } \mathtt{NegRestrict}\ .\ h \\
        &\phantom{{}={}} \textit{Subcase } h \neq g \\
        &= \{\{g\}\} & \text{by set theory} \\
        &= \llbracket g \rrbracket & \text{by definition of } \llbracket . \rrbracket \\
        &= \llbracket \mathtt{NegRestrict}\ g\ h \rrbracket & \text{by definition of } \mathtt{NegRestrict}\ .\ h \text{ and by assumption } h \neq g \\
    \end{align*}
    \textit{Case} $\Theta = \Theta_1 \cdot \Theta_2$:\\
    \begin{align*}
        &\phantom{{}={}} \llbracket \Theta_1 \cdot \Theta_2 \rrbracket | \neg h \\
        &= (\llbracket \Theta_1 \rrbracket \bullet \llbracket \Theta_2 \rrbracket)|\neg h & \text{by definition of } S_1 \bullet S_2\\
        &= \{ hs_1 \cup hs_2 | hs_1 \in \llbracket \Theta_1 \rrbracket \land hs_2 \in \llbracket \Theta_2 \llbracket \} | \neg h & \text{by definition of } \llbracket . \rrbracket \\
        &= \{ hs_1 \cup hs_2 | hs_1 \in \llbracket \Theta_1 \rrbracket \land hs_2 \in \llbracket \Theta_2 \rrbracket \land h \not\in (hs_1 \cup hs_2)\} & \text{by definition of } .|\neg h \\
        &= \{ hs_1 \cup hs_2 | hs_1 \in \llbracket \Theta_1 \rrbracket \land hs_2 \in \llbracket \Theta_2 \rrbracket \land h \not\in hs_1 \land h \not\in hs_2)\} & \text{by set theory and logic} \\
        &= \{ hs_1 \cup hs_2 | (hs_1 \in \llbracket \Theta_1 \rrbracket \land h \not\in hs_1) \land (hs_2 \in \llbracket \Theta_2 \rrbracket \land h \not\in hs_2)\} & \text{by set theory and logic} \\
        &= \{hs_1 | hs_1 \in \llbracket \Theta_1 \rrbracket \land h \not\in hs_1\} \bullet \{hs_2 | hs_2 \in \llbracket \Theta_2 \rrbracket \land h \not\in hs_2\} & \text{by definition of } S_1 \bullet S_2 \\
        &= \llbracket \Theta_1 \rrbracket | \neg h \bullet \llbracket \Theta_2 \rrbracket | \neg h & \text{ by definition of } .|\neg h \\
        &= \llbracket \mathtt{NegRestrict}\ \Theta_1\ h\rrbracket \bullet \llbracket \mathtt{NegRestrict}\ \Theta_2\ h\rrbracket & \text{by induction hypothesis} \\
        &= \llbracket (\mathtt{NegRestrict}\ \Theta_1\ h) \cdot (\mathtt{NegRestrict}\ \Theta_2\ h)\rrbracket & \text{By definition of } \llbracket . \rrbracket \\
        &= \llbracket \mathtt{NegRestrict}\ (\Theta_1\ \cdot \Theta_2)\ h\rrbracket & \text{by definition of } \mathtt{NegRestrict}\ .\ h \\ 
    \end{align*}
    \textit{Case} $\Theta = \Theta_1 + \Theta_2$:
    \begin{align*}
        &\phantom{{}={}} \llbracket \Theta_1 + \Theta_2 \rrbracket | \neg h \\
        &= (\llbracket \Theta_1 \rrbracket \cup \llbracket \Theta_2 \rrbracket) | \neg h & \text{by definition of } \llbracket . \rrbracket\\
        &= \{hs | hs \in (\llbracket \Theta_1 \rrbracket \cup \llbracket \Theta_2 \rrbracket) \land h \not\in hs \} & \text{by definition of } .|\neg h \\
        &= \{hs_1|hs_1 \in \llbracket \Theta_1 \rrbracket \land h \not\in hs_1 \} \cup \{hs_2|hs_2 \in \llbracket \Theta_2 \rrbracket \land h \not\in hs_2\} & \text{by set theory}\\
        &= \llbracket \Theta_1 \rrbracket | \neg h \cup \llbracket \Theta_2 \rrbracket | \neg h & \text{by definition of } .|\neg h \\
        &= \llbracket \mathtt{NegRestrict}\ \Theta_1\ h\rrbracket \cup \llbracket \mathtt{NegRestrict}\ \Theta_2\ h\rrbracket & \text{by induction hypothesis} \\
        &= \llbracket \mathtt{NegRestrict}\ \Theta_1\ h + \mathtt{NegRestrict}\ \Theta_2\ h\rrbracket & \text{by definition of } \llbracket . \rrbracket \\
        &= \llbracket \mathtt{NegRestrict}\ (\Theta_1 + \Theta_2)\ h \rrbracket & \text{by definition of } \mathtt{NegRestrict}\ .\ \neg h
    \end{align*}
    }{By straight-forward induction on $\Theta$.}
\end{proof}

\subparagraph{Inclusion}
$\texttt{Includes}~\Theta~h$ traverses $\Theta$ and checks to make sure that $h$ is valid in every path.
Semantically this says that $h$ is a member of every element of $\llbracket\Theta\rrbracket$, i.e., $h \sqsubset S \triangleq \bigwedge (hs \in S \land h \in hs)$.
Syntactically we define \textit{Inclusion} by induction on $\Theta$ as shown in Figure \ref{fig:inclusion}.
\begin{figure}[ht]
    \boxedpage{\textwidth}{
        \begin{align*}
            \mathtt{Includes}\ 0\ h &\triangleq \mathit{false}\\
            \mathtt{Includes}\ 1\ h &\triangleq \mathit{false}\\
            \mathtt{Includes}\ g\ h &\triangleq \begin{cases}
                \mathit{true} & \text{if } g=h\\
                \mathit{false} & \text{otherwise}
            \end{cases}\\
            \mathtt{Includes}\ (\Theta_1 \cdot \Theta_2)\ h &\triangleq (\mathtt{Includes}\ \Theta_1\ h) \lor (\mathtt{Includes}\ \Theta_2\ h)\\
            \mathtt{Includes}\ (\Theta_1 + \Theta_2)\ h &\triangleq (\mathtt{Includes}\ \Theta_1\ h) \land (\mathtt{Includes}\ \Theta_2\ h)\\
        \end{align*}}
    \caption{Syntactic definition of the $\texttt{Includes}~\Theta~h$ operator}
    \label{fig:inclusion}
\end{figure}

The equivalence of the syntactic and semantic definition is captured by Lemma \ref{lem:includes-equal}.
\begin{lemma}
\label{lem:includes-equal}
$\forall hs\in S.h\in hs == \mathtt{Includes}\ \Theta\ h.$
\end{lemma}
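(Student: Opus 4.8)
The plan is to prove Lemma~\ref{lem:includes-equal} by structural induction on $\Theta$, following exactly the template already used for Lemmas~\ref{lem:restrict-equal} and~\ref{lem:negrestrict-equal}, where throughout $S$ abbreviates $\teval{\Theta}$. In each case I unfold the semantics $\teval{\cdot}$ on the left-hand side and the syntactic clause of $\mathtt{Includes}$ from Figure~\ref{fig:inclusion} on the right-hand side, and then reconcile the two using elementary set-theoretic reasoning together with the induction hypotheses for the immediate subterms.

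For the base cases, $\teval{0} = \{\}$, $\teval{1} = \{\{\}\}$, and $\teval{g} = \{\{g\}\}$, so the quantified statement $\forall hs\in S.\ h\in hs$ collapses to a direct membership test that matches the corresponding constant clause of $\mathtt{Includes}$ (true precisely when $g = h$ in the instance case, and $\mathit{false}$ for $0$ and $1$). For the choice case $\Theta = \Theta_1 + \Theta_2$ I use $\teval{\Theta_1 + \Theta_2} = \teval{\Theta_1} \cup \teval{\Theta_2}$ and the fact that a universal over a union splits as $\forall hs\in A\cup B.\ P(hs) \iff (\forall hs\in A.\ P(hs)) \wedge (\forall hs\in B.\ P(hs))$; applying the induction hypothesis to each conjunct yields $(\mathtt{Includes}\ \Theta_1\ h) \wedge (\mathtt{Includes}\ \Theta_2\ h)$, which is exactly the syntactic clause for $+$.

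The concatenation case $\Theta = \Theta_1 \cdot \Theta_2$ is where the real work lies and is the step I expect to be the main obstacle. Here $\teval{\Theta_1 \cdot \Theta_2} = \teval{\Theta_1} \bullet \teval{\Theta_2} = \{\,hs_1 \cup hs_2 \mid hs_1 \in \teval{\Theta_1},\ hs_2 \in \teval{\Theta_2}\,\}$, and I must show $(\forall hs \in \teval{\Theta_1}\bullet\teval{\Theta_2}.\ h\in hs)$ is equivalent to $(\mathtt{Includes}\ \Theta_1\ h) \vee (\mathtt{Includes}\ \Theta_2\ h)$. The ($\Leftarrow$) direction is immediate: if (by the induction hypothesis) $h$ lies in every element of one of the two denotations, it lies in every union $hs_1 \cup hs_2$. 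For ($\Rightarrow$) I argue contrapositively using the induction hypotheses: if there are witnesses $hs_1 \in \teval{\Theta_1}$ and $hs_2 \in \teval{\Theta_2}$ with $h \notin hs_1$ and $h \notin hs_2$, then $h \notin hs_1 \cup hs_2$, contradicting the hypothesis. The delicate point, which must be handled carefully, is the interaction with empty denotations: if one of $\teval{\Theta_1}, \teval{\Theta_2}$ is empty then the product is empty and the left-hand side holds vacuously even when the right-hand side is $\mathit{false}$. To make the induction go through cleanly I would either carry an auxiliary non-emptiness assumption on $\Theta$ alongside the statement (which is all that is needed in practice, since the typing rules apply $\mathtt{Includes}$ only to types that are not empty, with \textsc{T-Zero} covering the empty case), or phrase the reconciliation through the $\mathtt{Empty}$ operator and its own characterization ($\mathtt{Empty}\ \Theta$ iff $\teval{\Theta} = \{\}$). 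With that subtlety pinned down, the remaining manipulations are the same routine set-algebra steps already performed in the proof of Lemma~\ref{lem:restrict-equal}.
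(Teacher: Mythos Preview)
Your plan is the same structural induction on $\Theta$ that the paper uses, and the $1$, $g$, and $+$ cases go through exactly as you describe.

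There is one internal inconsistency you should fix. In your base-case discussion you assert that $\forall hs\in\teval{0}.\ h\in hs$ evaluates to $\mathit{false}$, but under the standard reading of a universal quantifier this is vacuously $\mathit{true}$ since $\teval{0}=\{\}$. You clearly already understand this, because you flag precisely the same vacuity issue later in the concatenation case when $\teval{\Theta_1}\bullet\teval{\Theta_2}$ becomes empty. So your treatment of the $0$ base case contradicts your own (correct) analysis of the $\cdot$ case.

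This is not so much a gap in your argument as an observation that the lemma, read literally with the standard universal, fails already at $\Theta=0$: the left side is vacuously true while $\mathtt{Includes}\ 0\ h=\mathit{false}$. The paper's own proof sidesteps this by working from the definition $h\sqsubset S \triangleq \bigwedge(hs\in S\land h\in hs)$ and declaring the empty conjunction to be $\mathit{false}$; with that convention the $0$ case and the degenerate $\cdot$ subcases line up, and the rest is the routine induction you sketched. Your proposed alternative of carrying a non-emptiness side condition (or equivalently factoring through $\mathtt{Empty}$ and Lemma~\ref{lem:empty-iff}) is a cleaner way to make the statement precise and is arguably more careful than what the paper does; either choice lets the concatenation case close via the contrapositive argument you give.
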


\begin{proof}
  \iftoggle{fullproofs}{
    By induction on $\Theta$. \\
    \textit{Case} $\Theta = 0$:
    \begin{align*}
        &\phantom{{}={}} h \sqsubset \llbracket 0 \rrbracket \\
        &= \bigwedge (hs \in \{\} \land h \in hs) & \text{by definition of } \llbracket.\rrbracket \\
        &= \mathtt{false} & \text{by logic and set theory}\\
        &= \mathtt{Includes}\ 0\ h & \text{by definition of } \mathtt{Includes}\ .\ h
    \end{align*}
    \textit{Case} $\Theta = 1$:
    \begin{align*}
        &\phantom{{}={}} h \sqsubset \llbracket 1 \rrbracket \\
        &= \bigwedge (hs \in \{\{\}\} \land h \in hs) & \text{by definition of } \llbracket.\rrbracket \\
        &= \mathtt{false} & \text{by logic and set theory}\\
        &= \mathtt{Includes}\ 1\ h & \text{by definition of } \mathtt{Includes}\ .\ h
    \end{align*}
    \textit{Case} $\Theta = g$:
    \begin{align*}
        &\phantom{{}={}} h \sqsubset \llbracket g \rrbracket \\
        &= \bigwedge (hs \in \{\{g\}\} \land h \in hs) & \text{by definition of } \llbracket.\rrbracket \\
        &\phantom{{}={}} \textit{Subcase } h=g \\
        &= \mathtt{true} & \text{by logic and set theory} \\
        &= \mathtt{Includes}\ g\ h & \text{by definition of } (\mathtt{Includes}\ .\ h) \text{ and assumption } h=g \\
        &\phantom{{}={}} \textit{Subcase } h\neq g \\
        &= \mathtt{false} & \text{by logic and set theory}\\
        &= \mathtt{Includes}\ g\ h & \text{by definition of } (\mathtt{Includes}\ .\ h) \text{ and assumption } h\neq g \\
    \end{align*}
    \textit{Case} $\Theta = \Theta_1 \cdot \Theta_2$:
    \begin{align*}
        &\phantom{{}={}} h \sqsubset \llbracket \Theta_1 \cdot \Theta_2 \rrbracket \\
        &= h \sqsubset (\llbracket \Theta_1 \rrbracket \bullet \llbracket \Theta_2 \rrbracket) & \text{by definition of } S_1 \bullet S_2 \\
        &= h \sqsubset \{hs_1 \cup hs_2 | hs_1 \in \llbracket \Theta_1 \rrbracket \land hs_2 \in \llbracket \Theta_2 \rrbracket \} & \text{by definition of } \llbracket . \rrbracket \\
        &= h \sqsubset \{hs_1 | hs_1 \in \llbracket \Theta_1 \rrbracket\} \lor h \sqsubset \{hs_2 | hs_2 \in \llbracket \Theta_2 \rrbracket \} & \text{by set theory and logic} \\
        &= \bigwedge (hs_1 \in \llbracket \Theta_1 \rrbracket \land h \in hs_1) \lor \bigwedge (hs_2 \in \llbracket \Theta_2 \rrbracket \land h \in hs_2) & \text{by definition of } h \sqsubset .\\
        &= h \sqsubset \llbracket \Theta_1 \rrbracket \lor h \sqsubset \llbracket \Theta_2 \rrbracket & \text{by definition of } \llbracket . \rrbracket \\
        &= (\mathtt{Includes}\ \Theta_1\ h) \lor (\mathtt{Includes}\ \Theta_2\ h) & \text{by induction hypothesis} \\
        &= (\mathtt{Includes}\ \Theta_1 \cdot \Theta_2\ h) & \text{by definition of } \mathtt{Includes}\ .\ h
    \end{align*}
    \textit{Case} $\Theta = \Theta_1 + \Theta_2$:
    \begin{align*}
        &\phantom{{}={}} h \sqsubset \llbracket \Theta_1 + \Theta_2 \rrbracket \\
        &= h \sqsubset (\llbracket \Theta_1 \rrbracket \cup \llbracket \Theta_2 \rrbracket) & \text{by definition of } \llbracket .\rrbracket \\
        &= \bigwedge (hs \in (\llbracket \Theta_1 \rrbracket \cup \llbracket \Theta_2 \rrbracket) \land h \in hs) & \text{by definition of } h\sqsubset . \\
        &= \bigwedge (hs_1 \in \llbracket \Theta_1 \rrbracket \land h \in hs_1 \land hs_2 \in \llbracket \Theta_2 \rrbracket \land h \in hs_2) & \text{by set theory and logic} \\
        &= \bigwedge (hs_1 \in \llbracket \Theta_1 \rrbracket \land h \in hs_1) \land \bigwedge (hs_2 \in \llbracket \Theta_2 \rrbracket \land h \in hs_2) & \text{by set theory and logic} \\
        &= h \sqsubset \llbracket \Theta_1 \rrbracket \land h \sqsubset \llbracket \Theta_2 \rrbracket & \text{by definition of } h\sqsubset .\\
        &= (\mathtt{Includes}\ \Theta_1\ h) \land (\mathtt{Includes}\ \Theta_2\ h) & \text{by induction hypothesis} \\
        &= (\mathtt{Includes}\ (\Theta_1 + \Theta_2)\ h) & \text{by definition of } \mathtt{Includes}\ .\ h
    \end{align*}
    }{ By straight-forward induction on $\Theta$. }
\end{proof}

\subparagraph{Removal}
$\texttt{Remove}~\Theta~h$ removes $h$ from every path, which means, semantically that it removes $h$ from every element of
$\llbracket\Theta\rrbracket$, i.e., $S \setminus h \triangleq \{hs \mid hs \in S \land hs \setminus \{h\} \}$.
Syntactically we define \textit{Removal} by induction on $\Theta$ as shown in Figure \ref{fig:remove}.
\begin{figure}[ht]
    \boxedpage{\textwidth}{
        \begin{align*}
            \mathtt{Remove}\ 0\ h &\triangleq 0\\
            \mathtt{Remove}\ 1\ h &\triangleq 1\\
            \mathtt{Remove}\ g\ h &\triangleq \begin{cases}
                1 & \text{if } g=h\\
                g & \text{otherwise}
            \end{cases}\\
            \mathtt{Remove}\ (\Theta_1 \cdot \Theta_2)\ h &\triangleq (\mathtt{Remove}\ \Theta_1\ h) \cdot (\mathtt{Remove}\ \Theta_2\ h)\\
            \mathtt{Remove}\ (\Theta_1 + \Theta_2)\ h &\triangleq (\mathtt{Remove}\ \Theta_1\ h) + (\mathtt{Remove}\ \Theta_2\ h)\\
        \end{align*}}
    \caption{Syntactic definition of the $\texttt{Remove}~\Theta~h$ operator}
    \label{fig:remove}
\end{figure}

The equivalence of the syntactic and semantic definition is captured by Lemma \ref{lem:remove-equal}.
\begin{lemma}
\label{lem:remove-equal}
$\llbracket \Theta \rrbracket \setminus h == \llbracket \mathtt{Remove}\ \Theta\ h \rrbracket$.
\end{lemma}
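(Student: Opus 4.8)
The plan is to prove Lemma~\ref{lem:remove-equal} by structural induction on $\Theta$, following exactly the pattern already used for Lemmas~\ref{lem:restrict-equal}, \ref{lem:negrestrict-equal}, and \ref{lem:includes-equal}. The statement unfolds the semantic removal operator, $S \setminus h = \{ hs \setminus \{h\} \mid hs \in S \}$, and relates it to the syntactic $\mathtt{Remove}$ of Figure~\ref{fig:remove}, so in each case I simply compute both sides and check that they coincide.

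First I would dispatch the three base cases. For $\Theta = 0$ we have $\llbracket 0 \rrbracket = \{\}$, and removing $h$ from the empty set of sets yields $\{\}$, matching $\llbracket \mathtt{Remove}\ 0\ h \rrbracket = \llbracket 0 \rrbracket$. For $\Theta = 1$, $\llbracket 1 \rrbracket = \{\{\}\}$ and $\{\} \setminus \{h\} = \{\}$, so the left side is $\{\{\}\} = \llbracket 1 \rrbracket = \llbracket \mathtt{Remove}\ 1\ h \rrbracket$. For $\Theta = g$ a single instance, $\llbracket g \rrbracket = \{\{g\}\}$: if $g = h$ then $\{g\} \setminus \{h\} = \{\}$, so the left side is $\{\{\}\} = \llbracket 1 \rrbracket$, matching $\mathtt{Remove}\ g\ h = 1$; if $g \neq h$ then $\{g\} \setminus \{h\} = \{g\}$, so the left side is $\{\{g\}\} = \llbracket g \rrbracket$, matching $\mathtt{Remove}\ g\ h = g$.

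For the inductive cases, the choice case $\Theta = \Theta_1 + \Theta_2$ is immediate: the pointwise map $hs \mapsto hs \setminus \{h\}$ distributes over union of sets of headers, so $(S_1 \cup S_2) \setminus h = (S_1 \setminus h) \cup (S_2 \setminus h)$; applying the induction hypothesis to each summand then yields $\llbracket \mathtt{Remove}\ \Theta_1\ h \rrbracket \cup \llbracket \mathtt{Remove}\ \Theta_2\ h \rrbracket = \llbracket \mathtt{Remove}\ (\Theta_1 + \Theta_2)\ h \rrbracket$ by the definition of $\llbracket \cdot \rrbracket$ and of $\mathtt{Remove}$. The concatenation case $\Theta = \Theta_1 \cdot \Theta_2$ is the only place needing a small observation: one must show $(S_1 \bullet S_2) \setminus h = (S_1 \setminus h) \bullet (S_2 \setminus h)$, which follows from the elementary distributive law $(hs_1 \cup hs_2) \setminus \{h\} = (hs_1 \setminus \{h\}) \cup (hs_2 \setminus \{h\})$ lifted to the elementwise-union operator $\bullet$ on sets of sets. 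With that identity in hand, the induction hypotheses for $\Theta_1$ and $\Theta_2$ close the case exactly as in the proof of Lemma~\ref{lem:restrict-equal}.

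I expect the only mild subtlety to be the bookkeeping in the concatenation case --- making the rewriting from $\{ (hs_1 \cup hs_2) \setminus \{h\} \mid hs_1 \in \llbracket \Theta_1 \rrbracket \land hs_2 \in \llbracket \Theta_2 \rrbracket \}$ to $\{ hs_1' \cup hs_2' \mid hs_1' \in \llbracket \Theta_1 \rrbracket \setminus h \land hs_2' \in \llbracket \Theta_2 \rrbracket \setminus h \}$ fully precise --- but this is routine set theory with no real obstacle, and the overall argument is a straightforward induction on $\Theta$ just like its siblings; accordingly I would present it in the same abbreviated style, spelling out the full calculation only under the \texttt{fullproofs} toggle.
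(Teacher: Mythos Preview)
Your proposal is correct and mirrors the paper's own proof essentially step for step: both proceed by structural induction on $\Theta$, dispatching the three base cases by direct computation (with the $g=h$ versus $g\neq h$ split for the instance case), handling $\Theta_1 + \Theta_2$ via distributivity of the pointwise $\setminus h$ over union, and handling $\Theta_1 \cdot \Theta_2$ via the elementary identity $(hs_1 \cup hs_2) \setminus \{h\} = (hs_1 \setminus \{h\}) \cup (hs_2 \setminus \{h\})$ lifted to $\bullet$. The only difference is presentational.
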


\begin{proof}
  \iftoggle{fullproofs}{
    By induction on $\Theta$.\\
    \textit{Case} $\Theta=0$:
    \begin{align*}
        &\phantom{{}={}} \llbracket 0 \rrbracket \setminus h \\
        &= \{\} \setminus h & \text{by definition of } \llbracket . \rrbracket \\
        &= \{hs | hs \in \{\} \land hs \setminus \{h\}\} & \text{by definition of } .\setminus h \\
        &= \{\} & \text{by set theory} \\
        &= \llbracket 0 \rrbracket & \text{by definition of } \llbracket . \rrbracket \\
        &= \llbracket \mathtt{Remove}\ 0\ h \rrbracket & \text{by definition of }\mathtt{Remove}\ .\ h
    \end{align*}
    \textit{Case} $\Theta=1$:
    \begin{align*}
        &\phantom{{}={}} \llbracket 1 \rrbracket \setminus h \\
        &= \{\{\}\} \setminus h & \text{by definition of } \llbracket . \rrbracket \\
        &= \{hs | hs \in \{\{\}\} \land hs \setminus \{h\}\} & \text{by definition of } .\setminus h \\
        &= \{\{\}\} & \text{by set theory} \\
        &= \llbracket 1 \rrbracket & \text{by definition of } \llbracket . \rrbracket \\
        &= \llbracket \mathtt{Remove}\ 1\ h \rrbracket & \text{by definition of }\mathtt{Remove}\ .\ h
    \end{align*}
    \textit{Case} $\Theta=g$:
    \begin{align*}
        &\phantom{{}={}} \llbracket g \rrbracket \setminus h \\
        &= \{\{g\}\} \setminus h & \text{by definition of } \llbracket . \rrbracket \\
        &= \{hs | hs \in \{\{g\}\} \land hs \setminus \{h\}\} & \text{by definition of } .\setminus h \\
        &\phantom{{}={}} \textit{Subcase } h = g \\
        &= \{\{\}\} & \text{by set theory} \\
        &= \llbracket 1 \rrbracket & \text{by definition of } \llbracket . \rrbracket \\
        &= \llbracket \mathtt{Remove}\ 1\ h \rrbracket & \text{by definition of }\mathtt{Remove}\ .\ h \\
        &\phantom{{}={}} \textit{Subcase } h \neq g \\
        &= \{\{g\}\} & \text{by set theory} \\
        &= \llbracket g \rrbracket & \text{by definition of } \llbracket . \rrbracket \\
        &= \llbracket \mathtt{Remove}\ g\ h \rrbracket & \text{by definition of }\mathtt{Remove}\ .\ h
    \end{align*}
    \textit{Case} $\Theta = \Theta_1 \cdot \Theta_2$:
    \begin{align*}
        &\phantom{{}={}} \llbracket \Theta_1 \cdot \Theta_2 \rrbracket \setminus h \\
        &= (\llbracket \Theta_1 \rrbracket \bullet \llbracket \Theta_2 \rrbracket) \setminus h & \text{by definition of } S_1 \bullet S_2\\
        &= \{ hs_1 \cup hs_2 | hs_1 \in \llbracket \Theta_1 \rrbracket \land hs_2 \in \llbracket \Theta_2 \llbracket \} \setminus h & \text{by definition of } \llbracket . \rrbracket \\
        &= \{ hs_1 \cup hs_2 | hs_1 \in \llbracket \Theta_1 \rrbracket \land hs_2 \in \llbracket \Theta_2 \rrbracket \land (hs_1 \cup hs_2) \setminus h\} & \text{by definition of } .\setminus h \\
        &= \{ hs_1 \cup hs_2 | hs_1 \in \llbracket \Theta_1 \rrbracket \land hs_2 \in \llbracket \Theta_2 \rrbracket \land hs_1 \setminus h \land hs_2 \setminus h)\} & \text{by set theory and logic} \\
        &= \{ hs_1 \cup hs_2 | (hs_1 \in \llbracket \Theta_1 \rrbracket \land hs_1 \setminus h) \land (hs_2 \in \llbracket \Theta_2 \rrbracket \land hs_2 \setminus h)\} & \text{by set theory and logic} \\
        &= \{hs_1 | hs_1 \in \llbracket \Theta_1 \rrbracket \land hs_1 \setminus h\} \bullet \{hs_2 | hs_2 \in \llbracket \Theta_2 \rrbracket \land hs_2 \setminus h\} & \text{by definition of } S_1 \bullet S_2 \\
        &= \llbracket \Theta_1 \rrbracket \setminus h \bullet \llbracket \Theta_2 \rrbracket \setminus h & \text{ by definition of } .\setminus h \\
        &= \llbracket \mathtt{Remove}\ \Theta_1\ h\rrbracket \bullet \llbracket \mathtt{Remove}\ \Theta_2\ h\rrbracket & \text{by induction hypothesis} \\
        &= \llbracket (\mathtt{Remove}\ \Theta_1\ h) \cdot (\mathtt{Remove}\ \Theta_2\ h)\rrbracket & \text{By definition of } \llbracket . \rrbracket \\
        &= \llbracket \mathtt{Remove}\ (\Theta_1\ \cdot \Theta_2)\ h\rrbracket & \text{by definition of } \mathtt{Remove}\ .\ h \\ 
    \end{align*}
    \textit{Case} $\Theta = \Theta_1 + \Theta_2$:
    \begin{align*}
        &\phantom{{}={}} \llbracket \Theta_1 + \Theta_2 \rrbracket \setminus h \\
        &= (\llbracket \Theta_1 \rrbracket \cup \llbracket \Theta_2 \rrbracket) \setminus h & \text{by definition of } \llbracket . \rrbracket \\
        &= \{hs | hs \in (\llbracket \Theta_1 \rrbracket \cup \llbracket \Theta_2 \rrbracket) \land hs \setminus \{h\}\} & \text{by definition of } .\setminus h \\
        &= \{hs | hs \in (\llbracket \Theta_1 \rrbracket \cup \llbracket \Theta_2 \rrbracket) \land hs \setminus \{h\}\} & \text{by definition of } .\setminus h \\
        &= \{hs_1 | hs_1 \in \llbracket \Theta_1 \rrbracket \land hs_1 \setminus \{h\}\}\ \cup  & \text{by logic and set theory}\\
        &\phantom{{}={}} \{hs_2 | hs_2 \in \llbracket \Theta_2 \rrbracket \land hs_2 \setminus \{h\}\} \\
        &= \llbracket \Theta_1 \rrbracket \setminus h \cup \llbracket \Theta_2 \rrbracket \setminus h & \text{by definition of } .\setminus h \\
        &= \llbracket \mathtt{Remove}\ \Theta_1\ h \rrbracket \cup \llbracket \mathtt{Remove}\ \Theta_2\ h \rrbracket & \text{by induction hypothesis} \\
        &= \llbracket (\mathtt{Remove}\ \Theta_1\ h) \cdot (\llbracket \mathtt{Remove}\ \Theta_2\ h) \rrbracket & \text{by definition of } \llbracket .\rrbracket \\
        &= \llbracket \mathtt{Remove}\ (\Theta_1 \cdot \Theta_2)\ h \rrbracket & \text{by definition of } \mathtt{Remove}\ .\ h \\
    \end{align*}
    }{By straight-forward induction on $\Theta$. }
\end{proof}

\subparagraph{Emptiness}
$\texttt{Empty}~\Theta$ checks if $\Theta$ is semantically empty. Syntactically we define \textit{Empty} by induction on $\Theta$ as shown in Figure \ref{fig:empty}.

\begin{figure}[ht]
    \boxedpage{\textwidth}{
        \begin{align*}
            \mathtt{Empty}\ 0 &\triangleq \mathit{true}\\
            \mathtt{Empty}\ 1 &\triangleq \mathit{false}\\
            \mathtt{Empty}\ h &\triangleq \mathit{false}\\
            \mathtt{Empty}\ (\Theta_1 \cdot \Theta_2) &\triangleq \mathtt{Empty}\ \Theta_1 \wedge \mathtt{Empty}\ \Theta_2\\
            \mathtt{Empty}\ (\Theta_1 + \Theta_2) &\triangleq \mathtt{Empty}\ \Theta_1 \wedge \mathtt{Empty}\ \Theta_2\\
        \end{align*}}
    \caption{Syntactic definition of the $\texttt{Remove}~\Theta~h$ operator}
    \label{fig:empty}
\end{figure}

The equivalence of the syntactic and semantic definition is captured by Lemma \ref{lem:empty-iff}.
\begin{lemma}
\label{lem:empty-iff}
$\llbracket \Theta \rrbracket == \{ \}$ if and only if $\mathtt{Empty}\ \Theta$.
\end{lemma}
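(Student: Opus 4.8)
The plan is to prove Lemma~\ref{lem:empty-iff} by structural induction on $\Theta$, mirroring the shape of the syntactic definition of $\mathtt{Empty}$ in Figure~\ref{fig:empty} and the clauses for $\llbracket \cdot \rrbracket$ in Figure~\ref{fig:type-semantics-auxfn}. The three base cases are immediate from unfolding definitions: for $\Theta = 0$ we have $\llbracket 0 \rrbracket = \{\}$ and $\mathtt{Empty}\ 0 = \mathit{true}$, so both sides of the biconditional hold; for $\Theta = 1$ we have $\llbracket 1 \rrbracket = \{\{\}\} \neq \{\}$ and $\mathtt{Empty}\ 1 = \mathit{false}$, so both sides fail; and for $\Theta = h$ we have $\llbracket h \rrbracket = \{\{h\}\} \neq \{\}$ and $\mathtt{Empty}\ h = \mathit{false}$, so again both sides fail. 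In each base case the equivalence holds because the two propositions have the same truth value.

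For the inductive step there are two cases. In the choice case $\Theta = \Theta_1 + \Theta_2$, I would use $\llbracket \Theta_1 + \Theta_2 \rrbracket = \llbracket \Theta_1 \rrbracket \cup \llbracket \Theta_2 \rrbracket$ together with the elementary fact that $A \cup B = \{\}$ iff $A = \{\}$ and $B = \{\}$; applying the induction hypothesis to each of $\Theta_1$ and $\Theta_2$ turns this into $\mathtt{Empty}\ \Theta_1 \wedge \mathtt{Empty}\ \Theta_2$, which is exactly $\mathtt{Empty}\ (\Theta_1 + \Theta_2)$. In the concatenation case $\Theta = \Theta_1 \cdot \Theta_2$, I would use $\llbracket \Theta_1 \cdot \Theta_2 \rrbracket = \llbracket \Theta_1 \rrbracket \bullet \llbracket \Theta_2 \rrbracket$, where $S_1 \bullet S_2 = \{\, hs_1 \cup hs_2 \mid hs_1 \in S_1 \land hs_2 \in S_2 \,\}$; the relevant fact here is that $S_1 \bullet S_2 = \{\}$ iff $S_1 = \{\}$ or $S_2 = \{\}$ (if both factors are nonempty, any choice of $hs_1 \in S_1$ and $hs_2 \in S_2$ witnesses $hs_1 \cup hs_2 \in S_1 \bullet S_2$; conversely an empty factor makes the comprehension empty). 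By the induction hypothesis this becomes $\mathtt{Empty}\ \Theta_1 \vee \mathtt{Empty}\ \Theta_2$.

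I expect the only real subtlety — and the one step that is not purely mechanical — to be this concatenation case: $\bullet$ is conjunctive on elements but disjunctive on emptiness, so the clause for $\mathtt{Empty}\ (\Theta_1 \cdot \Theta_2)$ must combine the recursive calls with $\vee$ rather than $\wedge$. (The display in Figure~\ref{fig:empty} writes $\wedge$ for this clause, which appears to be a typo for $\vee$: with $\wedge$ the lemma fails, e.g. on a term $\Theta_1 \cdot \Theta_2$ with $\Theta_2$ semantically empty but $\Theta_1$ nonempty, and such terms do arise — for instance from $\mathtt{Restrict}$.) Aside from reconciling that clause, the proof is a routine structural induction requiring no auxiliary lemmas beyond the definitional equalities for $\llbracket \cdot \rrbracket$ and $\bullet$ already fixed in Figure~\ref{fig:type-semantics-auxfn}.
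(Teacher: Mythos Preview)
Your proposal is correct and follows the same structural-induction approach as the paper's proof. You are also right that the clause for $\mathtt{Empty}\ (\Theta_1 \cdot \Theta_2)$ in Figure~\ref{fig:empty} should read $\vee$ rather than $\wedge$; the paper's own (elided) proof of the concatenation case correctly observes that $\llbracket \Theta_1 \rrbracket \bullet \llbracket \Theta_2 \rrbracket \neq \{\}$ iff both factors are nonempty, which is exactly the disjunctive emptiness condition you derive, so the printed $\wedge$ is indeed a typo.
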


\begin{proof}
  \iftoggle{fullproofs}{
    By induction on $\Theta$.\\
    \textit{Case} $\Theta=0$:
    We have $\llbracket 0 \llbracket = \{\}$ and $\mathit{Empty}~0 = \mathit{true}$.

    \textit{Case} $\Theta=1$:\\
    We have $\llbracket 1 \llbracket \neq \{ \}$ and $\mathit{Empty}~1 = \mathit{false}$.

    \textit{Case} $\Theta=h$:\\
    We have $\llbracket h \llbracket \neq \{ \}$ and $\mathit{Empty}~h = \mathit{false}$.

    \textit{Case} $\Theta=\Theta_1 \cdot \Theta_2$:
    By definition we have $\llbracket \Theta_1 \cdot \Theta_2 \rrbracket = \llbracket \Theta_1 \rrbracket \bullet \llbracket \Theta_2 \rrbracket$ which is equal to 
   $\{ s_1 \cup s_2 \mid s_1 \in \llbracket \Theta_1 \rrbracket \wedge s_2 \in \llbracket \Theta_2 \rrbracket \}$. It follows that $\llbracket \Theta_1 \cdot \Theta_2 \llbracket \neq \{ \}$ iff 
   $\llbracket \Theta_1 \llbracket \neq \{ \}$ and  $\llbracket \Theta_2 \llbracket \neq \{ \}$.     By induction hypothesis, we have 
    $\llbracket \Theta_1 \llbracket \neq \{ \}$ if and only if $\mathit{Empty}~\Theta_1 = \mathit{true}$, and 
    $\llbracket \Theta_2 \llbracket \neq \{ \}$ if and only if $\mathit{Empty}~\Theta_2 = \mathit{true}$. 
   The result follows as $\mathit{Empty}~(\Theta_1 \cdot \Theta_2) = \mathtt{Empty}\ \Theta_1 \wedge \mathtt{Empty}\ \Theta_2$.

    \textit{Case} $\Theta=\Theta_1 + \Theta_2$:
    By definition we have $\llbracket \Theta_1 + \Theta_2 \rrbracket = \llbracket \Theta_1 \rrbracket \cup \llbracket \Theta_2 \rrbracket$. 
    It follows that $\llbracket \Theta_1 \cdot \Theta_2 \llbracket \neq \{ \}$ iff  $\llbracket \Theta_1 \llbracket \neq \{ \}$ and  $\llbracket \Theta_2 \llbracket \neq \{ \}$.
    By induction hypothesis, we have 
    $\llbracket \Theta_1 \llbracket \neq \{ \}$ if and only if $\mathit{Empty}~\Theta_1 = \mathit{true}$, and 
    $\llbracket \Theta_2 \llbracket \neq \{ \}$ if and only if $\mathit{Empty}~\Theta_2 = \mathit{true}$. 
    The result follows as $\mathit{Empty}~(\Theta_1 + \Theta_2) = \mathtt{Empty}\ \Theta_1 \wedge \mathtt{Empty}\ \Theta_2$.
  }{  By straight-forward induction on $\Theta$.}
\end{proof}

 \section{Safety of \name}
\label{app:safety}

We prove safety in terms of progress and preservation.
Both theorems make use of the relation $H \models \Theta$ as defined in Figure \ref{fig:entails-relation}.
The empty header instance map only entails the empty header instance type $1$ (Rule \textsc{Ent-Empty}).
If a header instance $h$ is contained in the map of valid header instances $H$, $H$ entails the header instance type $h$ (Rule \textsc{Ent-Inst}).
The sequence type $\Theta_1 \cdot \Theta_2$ is entailed by the distinct union of the maps entailing $\Theta_1$ and $\Theta_2$ respectively (Rule \textsc{Ent-Seq}) and the choice type $\Theta_1 + \Theta_2$ is entailed either by the map entailing $\Theta_1$ or the map entailing $\Theta_2$ (Rules \textsc{Ent-ChoiceL} and \textsc{Ent-ChoiceR}).

We prove progress and preservation only for commands.
For expressions we formulate these properties as additional lemmas (Lemmas \ref{lem:expr-progress} and \ref{lem:expr-preservation}).
The respective proofs are straightforward for our system.

\begin{lemma}[Expression Progress]
    \label{lem:expr-progress}
    If $\cdot ;\Theta \vdash e:\tau$ and $H \models \Theta$, then either $e$ is a value or $\exists e'.\langle H,e \rangle \rightarrow e'$.
\end{lemma}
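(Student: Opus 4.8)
The plan is to proceed by induction on the derivation of $\cdot;\Theta \vdash e : \tau$, with a case analysis on the last rule applied. Since there are only three expression typing rules (\textsc{T-Const}, \textsc{T-Var}, \textsc{T-Field}), the analysis is short. First, the case \textsc{T-Var} is vacuous: that rule requires $x : \tau \in \Gamma$, but here $\Gamma = \cdot$ is empty, so no derivation can end in \textsc{T-Var}. (This reflects the fact that variables occur only inside action bodies, where they are bound by the enclosing $\lambda$; at the top level there are no free variables to get stuck on.)

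Next, the case \textsc{T-Field}, where $e = h.f$. The premises of the rule give $\texttt{Includes}~\Theta~h$ and $\mathcal{F}(h,f) = \tau$. From $H \models \Theta$ we obtain $\mathit{dom}(H) \in \llbracket \Theta \rrbracket$ by the characterization of $(\models)$ noted in the text, and then the semantic reading of $\texttt{Includes}$ (Lemma~\ref{lem:includes-equal}) yields $h \in \mathit{dom}(H)$. Hence $H(h)$ is a well-defined record $\{f_1 : n_1, \dots, f_k : n_k\}$, and it contains the field $f$ because $f$ is a field of the header type $\mathcal{HT}(h)$ — that is precisely how $\mathcal{F}$ located its type. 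Therefore rule \textsc{E-Field} applies and $\langle H, h.f\rangle \rightarrow n_i$ for the appropriate $i$, so $e$ takes a step.

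Finally, the case \textsc{T-Const}, where $e = k(\bar e)$. Applying the induction hypothesis to each argument $e_i$ (each is typed under the same context $\cdot;\Theta$, and $H \models \Theta$ still holds), either every $e_i$ is already a value $v_i$, or some $e_i$ is not. In the former case the evaluation function $\llbracket k \rrbracket(\bar v)$ is defined — guaranteed by the standing well-behavedness assumption on constants together with \textsc{T-Const}, which ensures the $v_i$ have exactly the argument types that $\texttt{typeof}(k)$ demands — so rule \textsc{E-Const} applies. In the latter case, pick the leftmost non-value argument; the (elided but entirely standard) congruence rule for constant application reduces that argument and hence lets $e$ step. Either way $e$ takes a step, which closes the induction.

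The only mildly delicate point is the \textsc{T-Field} case, where one must bridge from the \emph{syntactic} predicate $\texttt{Includes}~\Theta~h$ appearing in the typing rule to the \emph{run-time} fact $h \in \mathit{dom}(H)$; this is exactly what Lemma~\ref{lem:includes-equal} (combined with the semantic characterization of $\models$) supplies, so there is no real obstacle. The constant case rests only on the assumption that $\llbracket k \rrbracket$ is total on well-typed argument tuples, which is among our standing hypotheses on the constant signature.
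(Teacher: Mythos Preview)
Your proof is correct and follows the natural induction on the typing derivation. The paper itself does not spell out a proof of this lemma, merely remarking that it is ``straightforward for our system,'' so your argument is exactly the kind of routine case analysis the authors had in mind. One small simplification: in the \textsc{T-Field} case you reach $h \in \mathit{dom}(H)$ by combining Lemma~\ref{lem:includes-equal} with the semantic characterization of $\models$, but the paper already packages this step as Lemma~\ref{lem:includes-domain} (Included Instances in Domain), which you could cite directly.
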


\begin{lemma}[Expression Preservation]
\label{lem:expr-preservation}
    If $\Gamma;\Theta \vdash e:\tau$ and $H \models \Theta$ and $\langle H,e\rangle \rightarrow e'$ then $\Gamma;\Theta\vdash e':\tau$.
\end{lemma}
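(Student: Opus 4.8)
The plan is to induct on the derivation of $\Gamma;\Theta \vdash e:\tau$, performing a case analysis on the last rule used and inverting the step $\langle H,e\rangle \to e'$. Since expression evaluation modifies neither $H$ nor $\Theta$, the hypothesis $H \models \Theta$ is carried through verbatim, so the only real content is showing that the type $\tau$ is preserved. The case \textsc{T-Var} (where $e = x$) is vacuous, since no evaluation rule reduces a bare variable.

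For \textsc{T-Field}, we have $e = h.f_i$ with $\texttt{Includes}~\Theta~h$ and $\mathcal{F}(h,f_i) = \tau$; the only applicable evaluation rule is \textsc{E-Field}, which gives $H(h) = \{f_1:n_1,\dots,f_k:n_k\}$ and $e' = n_i$. From $H \models \Theta$ we have $\mathit{dom}(H) \in \llbracket \Theta \rrbracket$, and Lemma~\ref{lem:includes-equal} combined with $\texttt{Includes}~\Theta~h$ forces $h \in \mathit{dom}(H)$, so the lookup is well-defined. Typing $n_i : \tau$ then follows from the well-formedness invariant that every field of a valid instance stores a value of its declared type --- an invariant established when instances come into existence (via $\mathit{deserialize}_\eta$ in \textsc{E-Extr} and $\mathit{init}_\eta$ in \textsc{E-Add}) and maintained by \textsc{E-Mod}, whose typing rule \textsc{T-Mod} checks the assigned expression against $\mathcal{F}(h,f)$; a value typed in the empty context remains typed in $\Gamma;\Theta$ by weakening. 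For \textsc{T-Const}, $e = k(\bar e)$ with $\mathtt{typeof}(k) = \bar\tau \to \tau'$, $\tau = \tau'$, and $\Gamma;\Theta \vdash e_j : \tau_j$ for each $j$. If the (elided) congruence rule fires, some $e_j$ steps to $e_j'$; the induction hypothesis gives $\Gamma;\Theta \vdash e_j' : \tau_j$, and re-applying \textsc{T-Const} retypes $e'$ at $\tau'$. If \textsc{E-Const} fires, every $e_j$ is a value $v_j$ and $e' = \llbracket k \rrbracket(\bar v)$; from $\overline{v_j : \tau_j}$ and the assumed well-behavedness of constant evaluation we obtain $\cdot;\cdot \vdash \llbracket k \rrbracket(\bar v) : \tau'$, which weakens to $\Gamma;\Theta \vdash e' : \tau'$.

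The main obstacle is really just the \textsc{T-Field} case: the lemma statement does not itself mention any invariant tying the values physically stored in $H$ to the field types declared by the header types, so one must either thread such a well-formedness condition through the whole development (as part of the command-level preservation argument) or fold it into the meaning of $H \models \Theta$; once that is settled, the remainder is bookkeeping. Everything else --- vacuity of \textsc{T-Var}, the congruence sub-case via the induction hypothesis, and the \textsc{E-Const} sub-case via the well-behavedness assumption on $\llbracket k \rrbracket$ already recorded in the operational semantics --- is entirely routine, and the companion Lemma~\ref{lem:expr-progress} goes through by the same induction, using canonical-forms reasoning on the structure of values.
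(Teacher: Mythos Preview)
The paper does not actually prove this lemma: it states Lemmas~\ref{lem:expr-progress} and~\ref{lem:expr-preservation} and then remarks only that ``the respective proofs are straightforward for our system,'' with no further detail. Your induction on the typing derivation with case analysis on the last rule is the natural approach and matches what the paper presumably intends.

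Your handling of \textsc{T-Var} (vacuous) and \textsc{T-Const} (congruence via the induction hypothesis, \textsc{E-Const} via the stated well-behavedness assumption on $\llbracket k \rrbracket$) is correct. More importantly, you have put your finger on a genuine omission that the paper glosses over: in the \textsc{T-Field} case, nothing in the hypotheses of the lemma as stated guarantees that the value $n_i$ stored at $H(h).f_i$ has type $\mathcal{F}(h,f_i)$. The entailment relation $H \models \Theta$ constrains only $\mathit{dom}(H)$, not the values, so one really does need an auxiliary well-typedness invariant on $H$ (established by $\mathit{deserialize}_\eta$, $\mathit{init}_\eta$, and maintained by \textsc{E-Mod}) threaded through the command-level development. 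The paper's ``straightforward'' elides exactly this point; your proposal is more careful than the paper here, and your suggested resolution is the right one.
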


\begin{lemma}[Expression Substitution]
\label{lem:expr-substitution}
    If $\Gamma, x : \tau; \Theta \vdash e : \tau'$ and $\cdot; \cdot \vdash \bar{v} : \bar{\tau}$ then $\Gamma;\Theta \vdash e[\bar{v}/\bar{x}] : \tau'$
\end{lemma}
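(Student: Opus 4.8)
The plan is to prove this by induction on the derivation of $\Gamma, \bar x : \bar\tau; \Theta \vdash e : \tau'$ (reading the statement with a sequence $\bar x : \bar\tau$ of variables being substituted, so that it matches the substitution $[\bar v/\bar x]$ and the hypothesis $\cdot;\cdot \vdash \bar v : \bar\tau$); equivalently, this is a structural induction on $e$. Since the expression typing rules of Figure~\ref{fig:expression-typing-rules} are entirely syntax-directed, there is exactly one case per rule, and the substitution operation commutes with each expression former, so most of the work is bookkeeping.

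First I would dispatch the two constructor cases. For \textsc{T-Const}, $e = k(\bar e)$ and $k(\bar e)[\bar v/\bar x] = k(\overline{e_j[\bar v/\bar x]})$; by the induction hypothesis each $e_j[\bar v/\bar x]$ still has the argument type $\tau_j$ demanded by $\texttt{typeof}(k)$, so \textsc{T-Const} reapplies and yields $\Gamma;\Theta \vdash k(\overline{e_j[\bar v/\bar x]}) : \tau'$; this case also subsumes the elided ``$\cdots$'' expression forms, which the calculus models as constants. For \textsc{T-Field}, $e = h.f$ contains no expression variable ($h$ is a header-instance name drawn from the set \textsc{h}, disjoint from the variable names), so $e[\bar v/\bar x] = h.f$, and the premises $\texttt{Includes}~\Theta~h$ and $\mathcal F(h,f) = \tau'$ neither mention $\Gamma$ nor change under substitution, so \textsc{T-Field} reapplies verbatim. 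For \textsc{T-Var}, $e$ is a variable $y$ with $y:\tau' \in \Gamma,\bar x:\bar\tau$: if $y$ is none of the $x_i$, then $e[\bar v/\bar x] = y$ and $y:\tau' \in \Gamma$ already, so \textsc{T-Var} applies; if $y = x_i$, then $e[\bar v/\bar x] = v_i$ and, by uniqueness of bindings, $\tau' = \tau_i$, so the obligation becomes $\Gamma;\Theta \vdash v_i : \tau_i$ given $\cdot;\cdot \vdash v_i : \tau_i$.

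The only nontrivial ingredient, then, is a weakening fact for values: if $\cdot;\cdot \vdash v : \tau$ then $\Gamma;\Theta \vdash v : \tau$ for any $\Gamma$ and $\Theta$. I would state this as an auxiliary observation and justify it by inspecting how a closed value can be typed: a value contains no free variables and no header-field access, so its derivation consists solely of \textsc{T-Const} instances whose premises (if any) are again closed-value typings and whose conclusions place no constraint on the variable context or on the header type; hence enlarging $\Gamma$ and replacing the trivial header type by an arbitrary $\Theta$ leaves the derivation intact. I expect this to be the ``hardest'' step only in the sense of requiring one to pin down precisely what the value grammar $V$ admits --- there is no mathematical obstacle, and in particular no property of the header-type operators (\texttt{Restrict}, \texttt{Includes}, etc.) is needed, since \textsc{T-Field} is the only expression rule that consults $\Theta$ and it never applies to a value.
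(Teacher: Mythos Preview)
Your proposal is correct and is exactly the standard substitution argument one would expect. The paper does not actually supply a proof of this lemma: it is stated in the appendix alongside expression progress and preservation with the remark that ``the respective proofs are straightforward for our system,'' so there is nothing to compare against beyond that. Your case analysis over \textsc{T-Const}, \textsc{T-Field}, and \textsc{T-Var}, together with the weakening observation for closed values, is precisely the routine the paper is gesturing at.
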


\begin{lemma}[Entailment is Type Alternative]
    \label{lem:entailment}
    If $H \models \Theta$ then $\mathit{dom(H)} \in \llbracket \Theta \rrbracket$.
\end{lemma}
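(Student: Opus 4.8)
The plan is to prove Lemma~\ref{lem:entailment} by a straightforward structural induction on the derivation of $H \models \Theta$, following the five rules of the entailment relation in Figure~\ref{fig:entails-relation}. In each case I read off $\mathit{dom}(H)$ from the shape of the concluding rule and check membership against the corresponding defining clause of $\llbracket \cdot \rrbracket$ from Figure~\ref{fig:type-semantics-auxfn}. Only the forward implication is needed for progress and preservation, so I do not establish the converse here (though it holds by a similar induction on $\Theta$).

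For the base cases: in \textsc{Ent-Empty}, $H$ is the empty map and $\Theta = 1$, so $\mathit{dom}(H) = \{\}$, which is the unique element of $\llbracket 1 \rrbracket = \{\{\}\}$; in \textsc{Ent-Inst}, the premise gives $\mathit{dom}(H) = \{h\}$ and $\Theta = h$, and indeed $\{h\} \in \{\{h\}\} = \llbracket h \rrbracket$.

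For the inductive cases: in \textsc{Ent-Seq} we have $\Theta = \Theta_1 \cdot \Theta_2$ and $H = H_1 \cup H_2$ with $H_1 \models \Theta_1$ and $H_2 \models \Theta_2$; applying the induction hypothesis twice gives $\mathit{dom}(H_i) \in \llbracket \Theta_i \rrbracket$, and since $\mathit{dom}(H_1 \cup H_2) = \mathit{dom}(H_1) \cup \mathit{dom}(H_2)$, the result follows from the definition of $\bullet$ and of $\llbracket \Theta_1 \cdot \Theta_2 \rrbracket$. In \textsc{Ent-ChoiceL} (and symmetrically \textsc{Ent-ChoiceR}) we have $\Theta = \Theta_1 + \Theta_2$ and $H \models \Theta_1$; the induction hypothesis yields $\mathit{dom}(H) \in \llbracket \Theta_1 \rrbracket \subseteq \llbracket \Theta_1 \rrbracket \cup \llbracket \Theta_2 \rrbracket = \llbracket \Theta_1 + \Theta_2 \rrbracket$.

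I expect the entire argument to be routine; the only step warranting a moment's care is \textsc{Ent-Seq}, where one must observe that the domain of a union of maps equals the union of the domains. This identity holds whether or not the union is required to be disjoint, so no additional side condition on $H_1$ and $H_2$ is needed to push the case through.
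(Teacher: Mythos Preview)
Your proof is correct and essentially the same as the paper's. The only cosmetic difference is that you induct on the derivation of $H \models \Theta$ whereas the paper inducts on the structure of $\Theta$ and inverts the entailment rules (thereby also handling the vacuous case $\Theta = 0$ explicitly); the case analysis and reasoning are otherwise identical.
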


\begin{proof}
  \iftoggle{fullproofs}{
    By induction on $\Theta$. \\
    \textit{Case} $\Theta = 0$: The case immediately holds as $H \models 0$ is a contradiction.\\
    \textit{Case} $\Theta = 1$: By inversion of \textit{Entailment}, $H = \boldsymbol{\cdot}$, and so $\mathit{dom(H)=\{\} \in \llbracket 1 \rrbracket = \{\{\}\}}$. \\
    \textit{Case} $\Theta = h$: By inversion of \textit{Entailment}, $\mathit{dom(H)} = \{h\} \in \llbracket h \rrbracket = \{ \{h\}\}$. \\
    \textit{Case} $\Theta = \Theta_1 \cdot \Theta_2$: By inversion of \textit{Entailment}, $H = H_1 \cup H_2, H_1 \models \Theta_1, H_2 \models \Theta_2$.\\
    By induction hypothesis, $\mathit{dom(H_1)} \in \llbracket \Theta_1 \rrbracket$ and $\mathit{dom(H_2)} \in \llbracket \Theta_2 \rrbracket$.\\
    By set theory, $\mathit{dom}(H)=\mathit{dom}(H_1)\cup\mathit{dom}(H_2)$\\
    By induction hypothesis, $\mathit{dom}(H_1) \in \llbracket \Theta_1\rrbracket$ and $\mathit{dom}(H_2)\in\llbracket\Theta_2\rrbracket$.\\
    By definition of $\llbracket.\rrbracket$ and $(\bullet)$, $\llbracket\Theta\rrbracket = \llbracket \Theta_1\rrbracket \bullet \llbracket \Theta_2\rrbracket = \{hs_1 \cup hs_2|hs_1 \in \llbracket\Theta_1\rrbracket \land hs_2 \in \llbracket \Theta_2\rrbracket\}$ and therefore $\mathit{dom}(H_1)\cup\mathit{dom}(H_2)\in\{hs_1 \cup hs_2 | hs_1 \in \llbracket \Theta_1\rrbracket \land hs_2 \in \llbracket \Theta_2 \rrbracket\}$, i.e., $\mathit{dom}(H)\in\llbracket\Theta\rrbracket$.
    \textit{Case} $\Theta = \Theta_1 + \Theta_2$: By inversion of \textit{Entailment}, either $H \models \Theta_1$ or $H \models \Theta_2$.\\
    \textit{Subcase} $H \models \Theta_1$: By the induction hypothesis, $\mathit{dom(H)} \in \llbracket \Theta_1 \rrbracket$.
    and by set theory $\mathit{dom(H)} \in \llbracket \Theta_1 \rrbracket \cup \llbracket \Theta_2 \rrbracket$. \\
    \textit{Subcase} $H \models \Theta_2$: Symmetric to the previous subcase.
    }{ By straight-forward induction on $\Theta$.}
\end{proof}

\begin{lemma}[Included Instances in Domain]
    \label{lem:includes-domain}
    If $H \models \Theta$ and $\mathtt{Includes}~\Theta~h$, then $h \in \mathit{dom(H)}$.
\end{lemma}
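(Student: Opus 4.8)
The plan is to avoid a direct induction on $\Theta$ and instead chain the two characterizations of header types that have already been established. First I would invoke Lemma~\ref{lem:entailment} (\emph{Entailment is Type Alternative}) on the hypothesis $H \models \Theta$ to obtain $\mathit{dom}(H) \in \llbracket\Theta\rrbracket$. This converts the syntactic entailment assumption into a purely semantic membership statement about the denotation of $\Theta$.

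Next I would appeal to Lemma~\ref{lem:includes-equal}, which states that the syntactic predicate $\mathtt{Includes}~\Theta~h$ coincides with the semantic condition that $h$ belongs to every element of $\llbracket\Theta\rrbracket$. Since $\mathtt{Includes}~\Theta~h$ holds by hypothesis, we learn that $h \in hs$ for every $hs \in \llbracket\Theta\rrbracket$. Instantiating this universally quantified fact with $hs := \mathit{dom}(H)$, which is a member of $\llbracket\Theta\rrbracket$ by the previous step, yields $h \in \mathit{dom}(H)$, as desired.

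There is essentially no obstacle here: the lemma is a one-line corollary of the two equivalences between the syntactic operators and their set-theoretic meanings. The only points to keep in mind are that Lemma~\ref{lem:includes-equal} must be read with $S = \llbracket\Theta\rrbracket$, and that the degenerate case $\Theta = 0$ is vacuous because $H \models 0$ is underivable, so the hypothesis is contradictory and there is nothing to show. If a self-contained argument were preferred, the same result follows by a routine induction on $\Theta$, using the inversion of the \emph{Entailment} relation at the leaves and the clauses for $\mathtt{Includes}$ on $\cdot$ and $+$ at the inductive steps; but the two-lemma route is shorter and more in the spirit of the surrounding development.
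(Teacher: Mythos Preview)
Your proposal is correct but takes a different route from the paper. The paper proves Lemma~\ref{lem:includes-domain} by a direct structural induction on $\Theta$, inverting the entailment derivation at each constructor and unfolding the clauses of $\mathtt{Includes}$ case by case. You instead factor the argument through the two semantic characterizations already on record: Lemma~\ref{lem:entailment} turns $H \models \Theta$ into $\mathit{dom}(H) \in \llbracket\Theta\rrbracket$, and Lemma~\ref{lem:includes-equal} turns $\mathtt{Includes}~\Theta~h$ into the statement that $h$ lies in every member of $\llbracket\Theta\rrbracket$; instantiating with $\mathit{dom}(H)$ finishes. This is genuinely shorter and avoids re-doing an induction whose content has already been captured by those two lemmas. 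The paper's direct induction is more self-contained (it does not rely on the somewhat informally stated Lemma~\ref{lem:includes-equal}), but your approach better exploits the surrounding development and is exactly the alternative you sketch in your final paragraph.
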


\begin{proof}
  \iftoggle{fullproofs}{
    By induction on $\Theta$.\\
    \textit{Case} $\Theta=0$:
    The case immediately holds as $H \models 0$ is a contradiction.\\
    \textit{Case} $\Theta=1$:
    By inversion of \textit{Entailment}, $H=\boldsymbol{\cdot}$
    The case immediately holds, as $\mathtt{Includes}~\Theta~h$ is a contradiction.\\
    \textit{Case} $\Theta=g$:\\
    By inversion of \textit{Entailment}, $\mathit{dom}(H)=\{g\}$\\
    By assumption $\mathtt{Includes}~\Theta~h$, $h=g$.$\mathtt{Includes}~\{g\}~g$, i.e., $h \in \mathit{dom}(H)$. \\
    \textit{Case} $\Theta = \Theta_1 \cdot \Theta_2$:\\
    By inversion of \textit{Entailment}, $H=H_1 \cup H_2, H_1 \models \Theta_1, H_2 \models \Theta_2$. \\
    By set theory $\mathit{dom}(H) = \mathit{dom}(H_1) \cup \mathit{dom}(H_2)$\\
    By definition of \textit{Inclusion} and by assumption $\mathtt{Includes}~\Theta~h$, $\mathtt{Includes}~\Theta_1~h \lor \mathtt{Includes}~\Theta_2~h$\\
    \textit{Subcase} $\mathtt{Includes}~\Theta_1~h$: By induction hypothesis, $h\in\mathit{dom}(H_1)$ and by assumption $\mathit{dom}(H_1) \subseteq \mathit{dom}(H)$, we can conclude $h \in \mathit{dom}(H)$.\\
    \textit{Subcase} $\mathtt{Includes}~\Theta_2~h$: Symmetric to the previous subcase. \\
    \textit{Case} $\Theta = \Theta_1 + \Theta_2$: \\
    By inversion of \textit{Entailment}, either $H \models \Theta_1$ or $H \models \Theta_2$.\\
    By definition of \textit{Inclusion} and by assumption $\mathtt{Includes}~\Theta~h$, $\mathtt{Includes}~\Theta_1~h$ and $\mathtt{Includes}~\Theta_2~h$.\\
    \textit{Subcase} $H \models \Theta_1$: By induction hypothesis, we can conclude $h\in\mathit{dom}(H)$. \\
    \textit{Subcase} $H \models \Theta_2$: Symmetric to the previous subcase.
    }{By straight-forward induction on $\Theta$.}
\end{proof}

\subsection{Control Plane Assumptions}
\label{app:control-plane-assumptions}

The following propositions model the assumptions about the control
plane functions $\mathcal{CA}$ and $\mathcal{CV}$ that are required to
prove type safety.

\begin{proposition}[Control Plane Reads]
\label{prop:cp-reads}
If $H \models \Theta$ and 
$\mathcal{CV}(t) = \bar{S}$ and
$\bar{e} = \{e_j \mid (e_j,m_j) \in t.\mathit{reads()} \wedge \neg \mathsf{maskable}(t,e_j,m_j)\}$ and 
$\Gamma; \Theta \vdash e_j : \tau_j$ for $e_j \in \bar{e}$ then 
$\mathcal{CA}(t,H) = (a_i, \bar{v})$.
\end{proposition}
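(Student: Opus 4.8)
The plan is to recognize that this proposition is not an internal theorem of the core calculus but rather the precise form of correctness condition~(1) on the control plane --- that a well-behaved control plane installs only table entries that never read invalid headers. Since $\mathcal{CA}$ and $\mathcal{CV}$ are parameters of the formalization, ``proving'' the statement amounts to (i) making explicit what its conclusion asserts, (ii) showing the hypotheses supply exactly what is needed, and (iii) recording that the condition is consistent, i.e.\ satisfiable by at least one control plane. So my first step would be to unfold the conclusion $\mathcal{CA}(t,H) = (a_i,\bar v)$: because the evaluation of a table's read expressions is folded into $\mathcal{CA}$ in our model, the claim is really that $\mathcal{CA}$ is \emph{defined} on $(t,H)$, which needs (a) that each non-maskable read in $\bar e$ can be evaluated in $H$, and (b) that some matching or default entry exists.

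Next I would discharge (a) directly from the typing hypotheses. A straightforward induction on the derivation of $\Gamma;\Theta\vdash e_j:\tau_j$ shows that every header instance $h$ referenced inside $e_j$ satisfies $\texttt{Includes}~\Theta~h$ (the only rule that reads a header field, \textsc{T-Field}, has this as a premise); combining this with $H\models\Theta$ and Lemma~\ref{lem:includes-domain} yields $h\in\mathit{dom}(H)$, so each $e_j$ reduces to a concrete value via \textsc{E-Field} and \textsc{E-Const}. For (b) I would observe that every table carries a default action with well-typed default action data, so even the all-wildcard fallback is an installable entry; a well-behaved $\mathcal{CA}$ therefore returns some pair $(a_i,\bar v)$. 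For (iii) a trivial witness suffices: the control plane that always selects the default action and sets $\mathcal{CV}(t)$ to the list of empty validity sets makes $\mathcal{CA}$ total and hence satisfies the conclusion unconditionally, so the assumption is consistent.

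The only real obstacle is definitional rather than mathematical: because the operational semantics hides the evaluation of a table's reads inside the abstract function $\mathcal{CA}$, one has to state the ``domain of definedness'' of $\mathcal{CA}$ carefully, so that condition~(1) is both meaningful and \emph{exactly} matched by the hypothesis $\Gamma;\Theta\vdash e_j:\tau_j$ on the non-maskable reads $\bar e$. Once that is pinned down, there is no case analysis to grind through --- the typing hypotheses give evaluability of the reads and the default entry gives existence of a result --- and the implication follows.
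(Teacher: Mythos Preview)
Your reading is correct: in the paper this proposition is not proved at all. It appears in the appendix section titled ``Control Plane Assumptions,'' which opens with ``The following propositions model the assumptions about the control plane functions $\mathcal{CA}$ and $\mathcal{CV}$ that are required to prove type safety.'' In the body of the paper the authors likewise say they \emph{assume} $\mathcal{CA}$ and $\mathcal{CV}$ satisfy three correctness properties, of which this is the first. So there is no proof in the paper to compare against; the proposition is an axiom on the abstract control-plane interface, and it is discharged simply by fiat.

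Your proposal goes further than the paper by arguing that the assumption is coherent and satisfiable: you unfold what definedness of $\mathcal{CA}(t,H)$ should mean, use the typing hypotheses together with Lemma~\ref{lem:includes-domain} to show the non-maskable reads are evaluable in $H$, and exhibit the trivial ``always take the default action'' control plane as a witness. That is a reasonable consistency check, and it is more than the paper itself offers. The only caveat is that the paper never commits to a concrete domain-of-definedness for $\mathcal{CA}$, so your step of ``unfolding the conclusion'' is really you supplying a definition the paper leaves abstract; as long as you flag that this is your interpretation rather than something derivable from the paper's text, the argument is fine.
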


\begin{proposition}[Control Plane Action Data]
\label{prop:cp-action-data}
If $H \models \Theta$ and 
$\mathcal{CA}(t,H) = (a_i, \bar{v})$ and
$\mathcal{A}(a_i) = \lambda \bar{x} : \bar{\tau}.~c_i$ then 
$\cdot; \cdot \vdash \bar{v} : \bar{\tau}$  
\end{proposition}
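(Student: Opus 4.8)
The first thing to observe is that this proposition is not a theorem about \name in isolation: the function $\mathcal{CA}$ is deliberately left abstract in the calculus, to be instantiated by whatever controller is actually running, so there is no content to derive from the operational semantics alone. Rather, the statement is the precise \emph{admissibility condition} we impose on a control-plane model --- the hypothesis that makes the \textsc{T-Apply} case of the preservation proof go through, where we need the substituted action data $\bar v$ to be well-typed at the parameter types $\bar\tau$ of $a_i$ before the substitution lemma can be applied to $c_i$. Accordingly, the plan is to (i) record it as an assumption alongside Proposition~\ref{prop:cp-reads} and the companion validity assumption, and (ii) argue that it is discharged by any sensible concrete instantiation of $\mathcal{CA}$.

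For (ii) I would unfold how a realistic $\mathcal{CA}$ is built. A concrete control plane keeps, for each table $t$, a finite ordered list of \emph{entries}, each carrying a match pattern, an action name $a$, and a tuple of action-data values $\bar v$; $\mathcal{CA}(t,H)$ locates the first entry whose pattern matches $H$ and returns that entry's $(a,\bar v)$, falling back on a miss to the table's default action with empty data. The single invariant a well-behaved controller maintains --- and which I would fold into the definition of an admissible control plane --- is that every installed entry is \emph{type-correct}: for an entry naming action $a$ with $\mathcal{A}(a) = \lambda\bar x : \bar\tau.\,c$, the stored data satisfies $\cdot;\cdot \vdash \bar v : \bar\tau$. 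Granting this, the proposition is immediate: from $\mathcal{CA}(t,H) = (a_i,\bar v)$ the pair $(a_i,\bar v)$ originates from some installed entry (or the type-correct default), and the invariant for that entry is exactly $\cdot;\cdot \vdash \bar v : \bar\tau$, which by $\mathcal{A}(a_i) = \lambda\bar x:\bar\tau.\,c_i$ is the $\bar\tau$ named in the statement. The hypothesis $H \models \Theta$ is inert here; it is carried only for uniformity with Proposition~\ref{prop:cp-reads}, where matching against $H$ genuinely matters.

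The delicate point --- the one I would flag as the real obstacle --- is not this derivation but making ``admissible control plane'' precise enough that this assumption, Proposition~\ref{prop:cp-reads}, and the validity assumption are mutually \emph{consistent}, i.e.\ that an admissible control plane actually exists. The default-action mechanism supplies the witness: the controller that installs no entries whatsoever and equips every table with a type-correct default action satisfies all three conditions vacuously, so the assumptions are not contradictory. I would close by noting that \ourchecker does not hide these obligations --- it emits a warning at each table application whose typing relies on a control-plane assumption (cf.\ Figure~\ref{fig:warnings}), shifting the burden of confirming admissibility of the deployed controller onto the programmer rather than sweeping it under the rug.
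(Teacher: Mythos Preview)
Your reading is correct: in the paper this proposition is not proved at all. It is listed in Appendix~\ref{app:control-plane-assumptions} under the heading ``Control Plane Assumptions,'' introduced by the sentence ``The following propositions model the assumptions about the control plane functions $\mathcal{CA}$ and $\mathcal{CV}$ that are required to prove type safety.'' There is no accompanying argument; it is taken as an axiom on the abstract control-plane interface and then invoked verbatim in the \textsc{T-Apply} case of the preservation proof.

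Your proposal therefore goes beyond what the paper offers. You correctly identify the proposition as an admissibility condition rather than a derivable fact, you locate exactly where it is consumed (the substitution step in the \textsc{T-Apply} case), and you supply what the paper omits: an argument that the assumption is satisfiable by exhibiting a concrete well-behaved instantiation of $\mathcal{CA}$ (the empty controller with type-correct defaults) and checking consistency with the companion assumptions. That extra content is sound and useful, but strictly speaking the paper's own ``proof'' is the empty one.
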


\begin{proposition}[Control Plane Assumptions]
\label{prop:cp-valid}
If $H \models \Theta$ and 
$\mathcal{CA}(t,H) = (a_i, \bar{v})$ and 
$\mathcal{CV}(t) = \bar{S}$ then 
$H \models \mathtt{Restrict}~\Theta~S_i$.
\end{proposition}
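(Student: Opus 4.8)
Strictly speaking, Proposition~\ref{prop:cp-valid} is not a theorem to be derived from the ether but one of the three postulated correctness conditions that any well-behaved instantiation of the control-plane interface $(\mathcal{CA},\mathcal{CV})$ must satisfy; together with Propositions~\ref{prop:cp-reads} and~\ref{prop:cp-action-data} it is exactly what the preservation proof (case \textsc{T-Apply}) consumes, where $H'=H$ and $\Theta_1' = \mathtt{Restrict}~\Theta~S_i$. The plan, then, is not to conjure a proof but to \emph{factor} the statement through the semantic characterizations already in hand, so that the only genuinely primitive obligation is exposed and the clause one must impose on $\mathcal{CV}$ is minimal and manifestly reasonable.

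First I would rewrite the conclusion semantically. By Lemma~\ref{lem:entailment}, the hypothesis $H \models \Theta$ yields $\mathit{dom}(H) \in \llbracket\Theta\rrbracket$. Lifting Lemma~\ref{lem:restrict-equal} to sets of headers (iterating the single-header case over the elements of $S_i$, exactly as the operators are lifted in the body), one obtains $\llbracket \mathtt{Restrict}~\Theta~S_i \rrbracket = \{\, hs \in \llbracket\Theta\rrbracket \mid S_i \subseteq hs \,\}$. Since the entailment relation of Figure~\ref{fig:entails-relation} satisfies $H \models \Theta'$ if and only if $\mathit{dom}(H) \in \llbracket\Theta'\rrbracket$ (the forward direction is Lemma~\ref{lem:entailment}; the converse is a routine induction on $\Theta'$ using the entailment rules), establishing $H \models \mathtt{Restrict}~\Theta~S_i$ reduces to showing $\mathit{dom}(H) \in \llbracket\Theta\rrbracket$ --- already in hand --- together with $S_i \subseteq \mathit{dom}(H)$.

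Thus the entire content of the proposition collapses to the residual fact $S_i \subseteq \mathit{dom}(H)$: the set of headers that $\mathcal{CV}(t)$ authorizes action $a_i$ to treat as valid is contained in the headers actually present in the runtime heap $H$ from which $\mathcal{CA}(t,H)$ selected $a_i$. This is precisely the ``control plane is well-intentioned'' side-condition: it depends on arbitrary packet contents and on the controller's runtime choices, so it cannot be discharged statically in general --- which is why \ourchecker emits a warning here rather than treating it as a proof obligation. Formally I would therefore add to the definition of a well-behaved pair $(\mathcal{CA},\mathcal{CV})$ the single clause ``if $\mathcal{CV}(t)=\bar S$ and $\mathcal{CA}(t,H)=(a_i,\bar v)$ then $S_i \subseteq \mathit{dom}(H)$'', after which the proposition follows by the two-step chain above. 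The main obstacle is conceptual rather than technical: recognizing that this proposition is exactly the interface contract the soundness argument needs, and checking that it factors cleanly through the \texttt{Restrict} and entailment lemmas so that no hidden assumption sneaks in --- the arithmetic of the reduction itself is entirely routine.
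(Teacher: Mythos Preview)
Your reading is correct and matches the paper: Proposition~\ref{prop:cp-valid} is explicitly introduced in Appendix~\ref{app:control-plane-assumptions} as one of the three \emph{assumptions} on the control-plane functions $\mathcal{CA}$ and $\mathcal{CV}$, not as a theorem to be proved, and the paper offers no proof for it. Your additional semantic reduction to the clause $S_i \subseteq \mathit{dom}(H)$ via Lemmas~\ref{lem:entailment} and~\ref{lem:restrict-equal} is a nice sharpening that the paper does not spell out, but it is consistent with how the proposition is used in the \textsc{T-Apply} case of preservation.
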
 

\subsection{Progress}

\begin{theorem}[Progress]
If $\cmdtype{\cdot}{\Theta}{c}{\Theta'}$ and $H \models \Theta$, then either $c=\mathit{skip}$ or $\exists \langle I',O',H',c' \rangle.~\langle I,O,H,c \rangle \rightarrow \langle I',O',H',c' \rangle$
\end{theorem}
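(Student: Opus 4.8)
The plan is to argue by induction on the derivation of $\cmdtype{\cdot}{\Theta}{c}{\Theta'}$, with a case analysis on the last rule applied. Two facts about entailment make most cases routine. First, $H \models \Theta$ implies $\mathit{dom}(H) \in \llbracket \Theta \rrbracket$ (Lemma~\ref{lem:entailment}); combined with Lemma~\ref{lem:empty-iff}, this shows the hypothesis of the \textsc{T-Zero} case, $\mathtt{Empty}~\Theta$, is incompatible with $H \models \Theta$, so that case is vacuous. Second, if $H \models \Theta$ and $\texttt{Includes}~\Theta~h$, then $h \in \mathit{dom}(H)$ (Lemma~\ref{lem:includes-domain}), which is what licenses the field lookups in the modification case.

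For \textsc{T-Skip} the command is already \syntax{skip}. The ``heap-only'' commands always have an applicable reduction rule: \textsc{T-Rem} steps by \textsc{E-Rem}; \textsc{T-Extr} steps by \textsc{E-Extr}, since every referenced instance is declared in $\mathcal{HT}$ and the input stream $I$ is infinite, so $\mathit{deserialize}_\eta(I)$ is defined; and \textsc{T-Emit}, \textsc{T-Add} step after a case split on $h \in \mathit{dom}(H)$ (rules \textsc{E-Emit}/\textsc{E-EmitInvalid}, respectively \textsc{E-Add}/\textsc{E-AddValid}). For \textsc{T-Seq} we apply the induction hypothesis to the first premise: if $c_1 = \syntax{skip}$ then \textsc{E-Seq} fires, otherwise \textsc{E-Seq1} lifts the step of $c_1$. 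For \textsc{T-IfValid} we split on $h \in \mathit{dom}(H)$ and fire \textsc{E-IfValidTrue} or \textsc{E-IfValidFalse}, with no recursion needed. For \textsc{T-Mod}, the premise $\texttt{Includes}~\Theta~h$ gives $h \in \mathit{dom}(H)$, so $H(h)$ is defined, and expression progress (Lemma~\ref{lem:expr-progress}) on the premise $\cdot;\Theta \vdash e : \tau$ yields either that $e$ is a value (so \textsc{E-Mod} fires) or that $e$ steps (so \textsc{E-Mod1} fires). The \textsc{T-If} case is analogous, applying expression progress to the boolean-typed guard and using a canonical-forms property of the constant language---a value of type $\mathtt{Bool}$ is $\mathit{true}$ or $\mathit{false}$---so that \textsc{E-IfTrue}, \textsc{E-IfFalse}, or the congruence rule \textsc{E-If} applies.

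The only genuinely delicate case is \textsc{T-Apply}, where $c = \syntax{t.apply()}$: here we must exhibit that \textsc{E-Apply} can fire, i.e.\ that $\mathcal{CA}(t,H)$ is defined. This is precisely the content of the control-plane correctness assumption (Proposition~\ref{prop:cp-reads}): from $H \models \Theta$, $\mathcal{CV}(t) = \bar S$, and the \textsc{T-Apply} premise that each non-maskable read expression $e_j$ is well-typed in $\Theta$, we obtain $\mathcal{CA}(t,H) = (a_i,\bar v)$; the action lookup $\mathcal{A}(a_i) = \lambda \bar x.~c_i$ is then defined because every action named in a table is declared, and we step to $c_i[\bar v/\bar x]$ by \textsc{E-Apply}. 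I expect this case to be the main obstacle, in the sense that progress holds only relative to the control-plane assumptions and one must be careful to invoke exactly Proposition~\ref{prop:cp-reads} with the well-typedness premises supplied by \textsc{T-Apply}; the secondary subtlety is the appeal to canonical forms for the abstract set of constants in the \textsc{T-If} case.
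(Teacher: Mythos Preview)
Your proposal is correct and follows essentially the same approach as the paper: induction on the typing derivation with a case analysis on the last rule, invoking Lemmas~\ref{lem:entailment} and~\ref{lem:empty-iff} for \textsc{T-Zero}, Lemma~\ref{lem:includes-domain} for \textsc{T-Mod}, expression progress for \textsc{T-If} and \textsc{T-Mod}, and Proposition~\ref{prop:cp-reads} for \textsc{T-Apply}. Your explicit mention of canonical forms for \texttt{Bool} in the \textsc{T-If} case and the infinitude of $I$ for \textsc{T-Extr} are details the paper leaves implicit, but the structure and key ingredients are the same.
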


\begin{proof}
By induction on typing derivations of $\cmdtype{\cdot}{\Theta}{c}{\Theta'}$. 
\begin{description}

\item{\textit{Case} \textsc{T-Skip}:}
$c = \syntax{skip}$\\[.25em]
Immediate. 

\item{\textit{Case} \textsc{T-Extr}:}
$c = \syntax{extract(h)}$\\[.25em]
Let $(I',v) = \syntax{deserialize_\eta(I)}$ and $O'=O$ and $H'=H[h \mapsto v]$ and $c'=\syntax{skip}$.
The result follows by \textsc{E-Skip}.

\item{\textit{Case} \textsc{T-Emit}:}
$c=\syntax{emit(h)}$\\[.25em]
If $h\not\in\mathit{dom}(H)$, let $I'=I$ and $O'=O$ and $H'=H$, and $c'=\syntax{skip}$. The result follows by \textsc{E-EmitInvalid}. 
Otherwise, $h\in\mathit{dom(H)}$. Let $H(h)=v$ and $\bar{B} = \syntax{serialize}_\eta(v)$ and $I' = I$ and $O' = O.\bar{B}$ and $H' = H$ and $c' = \syntax{skip}$. The result follows by \textsc{E-Emit}.

\item{\textit{Case} \textsc{T-Seq}:}
$c=\syntax{c_1;c_2}$ and $\cmdtype{\cdot}{\Theta}{c_1}{\Theta_1}$ and $\cmdtype{\cdot}{\Theta_1}{c_2}{\Theta_2}$\\[.25em]
By induction hypothesis, $c_1$ is either \syntax{skip} or there is some $\langle I',O',H',c_1'\rangle$, such that $\langle I,O,H,c_1\rangle \rightarrow \langle I',O',H',c_1'\rangle$.\\
If $c_1=\syntax{skip}$, let $I' = I$ and $O' = O$ and $H' = H$ and $c' = c_2$. The result follows by \textsc{E-Seq}.
Otherwise, the result follows by \textsc{E-Seq1}.

\item{\textit{Case} \textsc{T-If}:}
$c=\syntax{if\ (e)\ then\ c_1\ else\ c_2}$ and $\cdot; \Theta \vdash e: \mathit{Bool}$ and $\cmdtype{\cdot}{\Theta}{c_1}{\Theta_1}$ and $\cmdtype{\cdot}{\Theta}{c_2}{\Theta_2}$\\[.25em]
By the progress theorem for expressions, we have that $e$ is either \texttt{true}, \texttt{false}, or there is some $e'$ such that  $\langle H, e \rangle \rightarrow e'$. 
\begin{description}
\item{\textit{Subcase} $e = \mathtt{true}$:}
Let $I'=I$ and $O'=O$ and $H'=H$ and $c'=c_1$. The result follows by \textsc{E-IfTrue}. 
\item{\textit{Subcase} $e = \mathtt{false}$:}
Symmetric to the previous case.
\item{\textit{Subcase} $\langle H, e \rangle \rightarrow e'$:}
Let $I'=I$ and $O'=O$ and $H'=H$ and $c'=\syntax{if~(e')~c_1~c_2}$. The result follows by \textsc{E-If}. 
\end{description}

\item{\textit{Case} \textsc{T-IfValid}:}
$c=\syntax{valid(h)\ c_1\ else\ c_2}$\\[.25em]
If  $h\in\mathit{dom}(H)$, let $I'=I$ and $O'=O$ and $H'=H$ and $c'=c_1$. The result follows by \textsc{E-IfValidTrue}
Otherwise, $h\not\in\mathit{dom}(H)$. Let $I'=I$ and $O'=O$ and $H'=H$ and $c'=c_2$. The result follows by \textsc{E-IfValidFalse}

\item{\textit{Case} \textsc{T-Apply}}:
$c=\syntax{t.apply()}$\\[.25em]
By Proposition~\ref{prop:cp-reads}, we have $\mathcal{CA}(t,H) = (a_i, \bar{v})$.
Let $\mathcal{A}(a) = \lambda \bar{x}:\bar{\tau}.~c_i$.
Let $I' = I$ and $O' = O$ and $H' = H$ and $c' = c_i[\bar{v}/\bar{x}]$. The result follows by \textsc{E-Apply}.

\item{\textit{Case} \textsc{T-Add}:}
 $c=\syntax{add(h)}$\\[.25em]
If  $h\in\mathit{dom}(H)$, let $I'=I$ and $O'=O$ and $H'=H$ and $c'=\syntax{skip}$. The result follows by \textsc{E-AddValid}.
Otherwise, $h\not\in\mathit{dom}(H)$. Let $v = \mathit{init}_\eta$ and $I'=I$ and $O'=O$ and $H'=H[h \mapsto v]$ and $c'=\syntax{skip}$. The result follows by \textsc{E-Add}

\item{\textit{Case} \textsc{T-Remove}:}
 $c=\syntax{remove(h)}$\\[.25em]
Let $I'=I$ and $O'=O$ and $H'=H \setminus h$ and $c'=\syntax{skip}$. The result follows by \textsc{E-Remove}.

\item{\textit{Case} \textsc{T-Mod}:}
$c=\syntax{h.f=e}$ and $\mathtt{Includes}~\Theta~h$ and $\mathcal{F}(h,f) = \tau_i$ and $\cdot; \Theta \vdash e: \tau_i$\\[.25em]
By the progress rule for expressions, either $e$ is a value or there is some $e'$ such that $\langle H,e \rangle \rightarrow e'$.
\begin{description}
\item{\textit{Subcase} $e = v$:}
By Lemma \ref{lem:includes-domain}: $h\in\mathit{dom(H)}$. Let $r = H(h)$ and $r' = \{ r~\textit{with}~f = v\}$. 
Also let $I' = I$ and $O' = O$ and $H' = H[h \mapsto r']$ and $c' = \syntax{skip}$. The result follows by \textsc{E-Mod}.
\item{\textit{Subcase} $\langle H, e \rangle \rightarrow e'$:}
Let $I'=I$ and $O'=O$ and $H'=H$ and $c'=\syntax{h.f = e'}$. The result follows by \textsc{E-Mod1}. 
\end{description}

\item{\textit{Case} \textsc{T-Zero}:}
$\mathtt{Empty}~\Theta_1$\\[.25em]
By Lemma~\ref{lem:entailment}, we have $\mathit{dom}(H) \in \llbracket \Theta_1 \rrbracket$. 
By Lemma~\ref{lem:empty-iff}, we have $\llbracket \Theta_1 \rrbracket = \{ \}$, which is a contradiction.\qedhere
\end{description}
\end{proof}

\subsection{Preservation}

\begin{lemma}[Restriction Entailed]
    \label{lem:restrict-domain-entail}
    If $H \models \Theta$ and $h \in \mathit{dom(H)}$ then $H \models \mathtt{Restrict}~\Theta~h$.
\end{lemma}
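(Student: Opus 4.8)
The plan is to prove this by induction on the structure of $\Theta$ (equivalently, on the derivation of $H \models \Theta$), using the syntactic clauses for $\mathtt{Restrict}$ from Figure~\ref{fig:restrict} together with inversion on the entailment rules of Figure~\ref{fig:entails-relation}. The base cases are either vacuous or immediate. The case $\Theta = 0$ is vacuous, since no rule concludes $H \models 0$. The case $\Theta = 1$ is vacuous: inverting \textsc{Ent-Empty} forces $H = \boldsymbol{\cdot}$, so $\mathit{dom}(H) = \{\}$, contradicting $h \in \mathit{dom}(H)$. For $\Theta = g$ a header instance, inverting \textsc{Ent-Inst} forces $\mathit{dom}(H) = \{g\}$, so $h = g$; then $\mathtt{Restrict}~g~h = g$, and the goal is exactly the hypothesis $H \models \Theta$.

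For the inductive cases I would peel off the outermost connective of $\Theta$ and use the matching recursive clause of $\mathtt{Restrict}$. If $\Theta = \Theta_1 + \Theta_2$, invert \textsc{Ent-ChoiceL}/\textsc{Ent-ChoiceR} to obtain $H \models \Theta_i$ for some $i$; since $h \in \mathit{dom}(H)$ still holds, the induction hypothesis gives $H \models \mathtt{Restrict}~\Theta_i~h$, and we conclude with the corresponding \textsc{Ent-Choice} rule because $\mathtt{Restrict}~(\Theta_1 + \Theta_2)~h = (\mathtt{Restrict}~\Theta_1~h) + (\mathtt{Restrict}~\Theta_2~h)$. If $\Theta = \Theta_1 \cdot \Theta_2$, invert \textsc{Ent-Seq} to get $H = H_1 \cup H_2$ (a disjoint union) with $H_i \models \Theta_i$; since $h \in \mathit{dom}(H) = \mathit{dom}(H_1) \cup \mathit{dom}(H_2)$, it lies in exactly one of the two domains. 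Say $h \in \mathit{dom}(H_1)$: the induction hypothesis yields $H_1 \models \mathtt{Restrict}~\Theta_1~h$, so \textsc{Ent-Seq} gives $H_1 \cup H_2 \models (\mathtt{Restrict}~\Theta_1~h) \cdot \Theta_2$, and \textsc{Ent-ChoiceL} finishes the case because $\mathtt{Restrict}~(\Theta_1 \cdot \Theta_2)~h = ((\mathtt{Restrict}~\Theta_1~h) \cdot \Theta_2) + (\Theta_1 \cdot (\mathtt{Restrict}~\Theta_2~h))$; the subcase $h \in \mathit{dom}(H_2)$ is symmetric via \textsc{Ent-ChoiceR}.

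The main obstacle, though a minor one, is the product case: one must observe that inverting \textsc{Ent-Seq} decomposes $H$ into disjoint submaps so that $h$ can be tracked into precisely one factor, and then match the two-summand shape of $\mathtt{Restrict}$ on a product to the correct \textsc{Ent-Choice} rule. A slicker, fully semantic alternative is also available: by Lemma~\ref{lem:entailment} we have $\mathit{dom}(H) \in \llbracket \Theta \rrbracket$, hence $\mathit{dom}(H) \in \llbracket \Theta \rrbracket|_h$ since $h \in \mathit{dom}(H)$, hence $\mathit{dom}(H) \in \llbracket \mathtt{Restrict}~\Theta~h \rrbracket$ by Lemma~\ref{lem:restrict-equal}; this yields $H \models \mathtt{Restrict}~\Theta~h$ provided one also has the converse of Lemma~\ref{lem:entailment}, which holds because $\models$ depends only on $\mathit{dom}(H)$ and is itself a routine induction on $\Theta$.
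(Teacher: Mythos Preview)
Your proof is correct and follows essentially the same route as the paper's: induction on $\Theta$, with the same vacuous base cases and the same use of inversion on \textsc{Ent-Seq} and \textsc{Ent-ChoiceL}/\textsc{Ent-ChoiceR} in the inductive cases. One small overstatement: the \textsc{Ent-Seq} rule does not force the decomposition $H = H_1 \cup H_2$ to be disjoint, so you cannot conclude $h$ lies in \emph{exactly} one of $\mathit{dom}(H_1)$, $\mathit{dom}(H_2)$---but you only need ``at least one,'' which is all the paper uses, so the argument goes through unchanged.
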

\begin{proof}
  \iftoggle{fullproofs}{
    By induction on $\Theta$.\\
    \textit{Case} $\Theta = 0$: The case immediately holds as $H \models 0$ is a contradiction.\\
    \textit{Case} $\Theta=1$: By inversion of \textit{Entailment}, $H=\boldsymbol{\cdot}$.
    The case immediately holds as $h \in \mathit{dom(\boldsymbol{\cdot})}$ is a contradiction. \\
    \textit{Case} $\Theta = g$: By inversion of \textit{Entailment}, $\mathit{dom(H) = \{ g \}}$, and so $h = g$. \\
    By definition of \textit{Restriction} $\mathtt{Restrict}~\Theta~h = \mathtt{Restrict}~g~g = g$.
    By \textsc{Ent-Inst} $H \models g$, i.e., $H \models \Theta$. \\
    \textit{Case} $\Theta = \Theta_1 \cdot \Theta_2$: By inversion of \textit{Entailment} $H=H_1 \cup H_2, H_1 \models \Theta_1, H_2 \models \Theta_2$.
    By $h \in \mathit{dom(H)}$, either $h \in \mathit{dom(H_1)}$ or $h \in \mathit{dom(H_2)}$.\\
    \textit{Subcase} $h \in \mathit{dom(H_1)}$: By the induction hypothesis, we have $H_1 \models \mathtt{Restrict}~\Theta_1~h$.
    By \textsc{Ent-Seq}, we have $H_1 \cup H_2 \models \mathtt{Restrict}~\Theta_1~h \cdot \Theta_2$.
    By \textsc{Ent-ChoiceL}, we have $H_1 \cup H_2 \models (\mathtt{Restrict}~\Theta_1~h \cdot \Theta_2) + (\Theta_1 \cdot \mathtt{Restrict}~\Theta_2~h)$ which finishes the case.\\
    \textit{Subcase} $h \in \mathit{dom(H_2)}$: Symmetric to the previous subcase. \\
    \textit{Case} $\Theta = \Theta_1 + \Theta_2$: By inversion of \textit{Entailment}, either $H \models \Theta_1$ or $H \models \Theta_2$.\\
    \textit{Subcase} $H \models \Theta_1$: By the induction hypothesis, we have $H \models \mathtt{Restrict}~\Theta_1~h$.
    By \textsc{Ent-ChoiceL}, $H \models \mathtt{Restrict}~\Theta_1~h + \mathtt{Restrict}~\Theta_2~h$.\\
    \textit{Subcase} $H \models \Theta_2$: Symmetric to the previous subcase.
  }{By straight-forward induction on $\Theta$.}
\end{proof}

\begin{lemma}[NegRestriction Entailed]
    \label{lem:negrestrict-domain-entail}
    If $H \models \Theta$ and $h \not \in \mathit{dom(H)}$ then $H \models \mathtt{NegRestrict}~\Theta~h$.
\end{lemma}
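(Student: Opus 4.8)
The plan is to prove this by structural induction on $\Theta$, closely paralleling the preceding proof of Lemma~\ref{lem:restrict-domain-entail}, but exploiting the fact that the hypothesis $h \notin \mathit{dom}(H)$ is \emph{inherited unchanged} by all subderivations rather than being localized to one branch. First I would dispatch the base cases. For $\Theta = 0$ the hypothesis $H \models 0$ is contradictory, so the claim holds vacuously. For $\Theta = 1$, inverting \textsc{Ent-Empty} gives $H = \boldsymbol{\cdot}$; since $\mathtt{NegRestrict}~1~h = 1$, the goal is just $\boldsymbol{\cdot} \models 1$, which is \textsc{Ent-Empty} again. For $\Theta = g$ an instance, inverting \textsc{Ent-Inst} gives $\mathit{dom}(H) = \{g\}$, so $h \notin \{g\}$ forces $h \neq g$, whence $\mathtt{NegRestrict}~g~h = g$ by definition and $H \models g$ by \textsc{Ent-Inst}.

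For the concatenation case $\Theta = \Theta_1 \cdot \Theta_2$, I would invert \textsc{Ent-Seq} to obtain $H = H_1 \cup H_2$ with $H_1 \models \Theta_1$ and $H_2 \models \Theta_2$. Since $\mathit{dom}(H) = \mathit{dom}(H_1) \cup \mathit{dom}(H_2)$ and $h \notin \mathit{dom}(H)$, we get $h \notin \mathit{dom}(H_1)$ \emph{and} $h \notin \mathit{dom}(H_2)$; the induction hypothesis then yields $H_1 \models \mathtt{NegRestrict}~\Theta_1~h$ and $H_2 \models \mathtt{NegRestrict}~\Theta_2~h$, and \textsc{Ent-Seq} recombines these into $H \models (\mathtt{NegRestrict}~\Theta_1~h) \cdot (\mathtt{NegRestrict}~\Theta_2~h)$, which is exactly $\mathtt{NegRestrict}~(\Theta_1 \cdot \Theta_2)~h$ by definition. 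For the choice case $\Theta = \Theta_1 + \Theta_2$, inverting \textit{Entailment} gives $H \models \Theta_i$ for some $i \in \{1,2\}$; the induction hypothesis (with the same hypothesis on $H$) gives $H \models \mathtt{NegRestrict}~\Theta_i~h$, and \textsc{Ent-ChoiceL}/\textsc{Ent-ChoiceR} lift this to $H \models \mathtt{NegRestrict}~\Theta_1~h + \mathtt{NegRestrict}~\Theta_2~h = \mathtt{NegRestrict}~(\Theta_1 + \Theta_2)~h$.

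I do not anticipate a genuine obstacle: this is a direct structural induction, and the only place where it genuinely \emph{differs} from the \textsc{Restrict} lemma is the concatenation case, where the earlier proof case-split on which of $H_1, H_2$ contains $h$, whereas here we instead observe that $h$ is absent from the union and hence from both components simultaneously, so both recursive calls go through without a case split. An alternative, equally short route would be semantic: from $H \models \Theta$ and Lemma~\ref{lem:entailment} we get $\mathit{dom}(H) \in \llbracket \Theta \rrbracket$, and since $h \notin \mathit{dom}(H)$ we have $\mathit{dom}(H) \in \llbracket \Theta \rrbracket|_{\neg h} = \llbracket \mathtt{NegRestrict}~\Theta~h \rrbracket$ by Lemma~\ref{lem:negrestrict-equal}, from which $H \models \mathtt{NegRestrict}~\Theta~h$ follows once one establishes the converse of Lemma~\ref{lem:entailment} ($\mathit{dom}(H) \in \llbracket \Theta \rrbracket$ implies $H \models \Theta$), itself a routine induction. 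I would mention this as a remark but carry out the direct induction above for self-containedness.
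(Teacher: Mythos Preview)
Your proposal is correct and follows essentially the same structural induction as the paper's proof, case by case, including the key observation that in the concatenation case $h \notin \mathit{dom}(H_1 \cup H_2)$ gives $h \notin \mathit{dom}(H_1)$ and $h \notin \mathit{dom}(H_2)$ simultaneously so both inductive hypotheses apply without a case split. The semantic alternative you sketch at the end is not used in the paper, but your primary argument matches it exactly.
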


\begin{proof}
  \iftoggle{fullproofs}{
    By induction on $\Theta$.\\
    \textit{Case} $\Theta = 0$: The case immediately holds as $H \models 0$ is a contradiction.\\
    \textit{Case} $\Theta = 1$: By inversion of \textit{Entailment}, $H = \boldsymbol{\cdot}$.
    By definition of \textit{Negated Restriction}, $\mathtt{NegRestrict}~\Theta~h = \mathtt{NegRestrict}~1~h = 1$.
    By \textsc{Ent-Empty} $\boldsymbol{\cdot} \models 1$, i.e., $H \models \mathtt{NegRestrict}~\Theta~h$. \\
    \textit{Case} $\Theta = g$: By inversion of \textit{Entailment}, $\mathit{dom(H) = \{g \}}$.
    By the induction hypothesis $h \neq g$.
    By definition of \textit{Restriction} $\mathtt{NegRestrict}~\Theta~h = \mathtt{NegRestrict}~g~h = g$.
    By \textsc{Ent-Inst} $H \models g$, i.e., $H \models \mathtt{NegRestrict}~\Theta~h$. \\
    \textit{Case} $\Theta = \Theta_1 \cdot \Theta_2$: By inversion of \textit{Entailment}, $H=H_1 \cup H_2, H_1 \models \Theta_1, H_2 \models \Theta_2$.
    By $h \not \in \mathit{dom(H)}$, $h \not\in \mathit{dom(H_1)}$ and $h \not\in \mathit{dom(H_2)}$.
    By the induction hypothesis, $H_1 \models \mathtt{NegRestrict}~\Theta_1~h$ and $H_2 \models \mathtt{NegRestrict}~\Theta_2~h$.
    By \textsc{Ent-Seq}, $H_1 \cup H_2 \models \mathtt{NegRestrict}~\Theta_1~h \cdot \mathtt{NegRestrict}~\Theta_2~h$ which finishes the case. \\
    \textit{Case} $\Theta = \Theta_1 + \Theta_2$: By inversion of \textit{Entailment}, either $H \models \Theta_1$ or $H \models \Theta_2$. \\
    \textit{Subcase} $H \models \Theta_1$: By the induction hypothesis, we have $H \models \mathtt{NegRestrict}~\Theta_1~h$.
    By \text{Ent-ChoiceL}, $H \models \mathtt{NegRestrict}~\Theta_1~h + \mathtt{NegRestrict}~\Theta_2~h$. \\
    \textit{Subcase} $H \models \Theta_2$: Symmetric to the previous subcase.
  }{By straightforward induction on $\Theta$.}
\end{proof}

\begin{lemma}[Substitution]
If $\Gamma, x : \tau ; \Theta \vdash c : \Theta'$ and $\cdot; \cdot \vdash v : \tau$ then $\Gamma; \Theta \vdash c[v/x] : \Theta'$
\end{lemma}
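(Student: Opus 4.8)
The plan is to induct on the derivation of $\Gamma, x : \tau; \Theta \vdash c : \Theta'$, with a case split on the final typing rule. The key structural observation is that header types $\Theta$ contain no term variables, so the substitution $[v/x]$ leaves every header context inert, together with all the auxiliary operations on them ($\texttt{Restrict}$, $\texttt{NegRestrict}$, $\texttt{Includes}$, $\texttt{Remove}$, $\texttt{Empty}$). Hence every ``structural'' premise of a typing rule is preserved verbatim, and the only cases that require actual work are those whose premises type expressions.

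First I would dispose of the cases in which $c$ is variable-free and $c[v/x] = c$: \textsc{T-Skip}, \textsc{T-Extr}, \textsc{T-Emit}, \textsc{T-Add}, \textsc{T-Rem}, and \textsc{T-Apply}. For \textsc{T-Apply} note in particular that $t.apply()$ is closed and its reads-expressions are typed in the empty variable context $\cdot$, so $x$ cannot occur in them; the original derivation then proves the goal once we observe that $\Gamma$ is unconstrained by the premises. \textsc{T-Zero} is likewise immediate: its premise $\texttt{Empty}~\Theta$ is untouched, so \textsc{T-Zero} re-derives the conclusion for the same (arbitrary) $\Theta'$.

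The inductive cases \textsc{T-Seq}, \textsc{T-IfValid}, \textsc{T-If}, and \textsc{T-Mod} all follow the same recipe: push the substitution through the command constructor, invoke the induction hypothesis on the immediate command subterms---using that the refined contexts $\texttt{Restrict}~\Theta~h$ and $\texttt{NegRestrict}~\Theta~h$ appearing in \textsc{T-IfValid} are unchanged, since $h$ is an instance name rather than a variable---and use the Expression Substitution lemma (Lemma~\ref{lem:expr-substitution}) to re-type the substituted expression at the same type in the cases \textsc{T-If} (type \texttt{Bool}) and \textsc{T-Mod} (type $\mathcal{F}(h,f)$), where the premise $\texttt{Includes}~\Theta~h$ again survives unchanged. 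Re-applying the original rule closes each case.

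I do not expect a genuine obstacle: this is a textbook substitution lemma, and the domain-specific wrinkle---header contexts---makes it easier rather than harder, since nothing in $\Theta$, or in the operators acting on it, ever mentions a term variable. The one bookkeeping point worth flagging is the mismatch between the single-variable statement proved here and the simultaneous substitution $[\bar v/\bar x]$ invoked in the \textsc{T-Apply} case of preservation; since action parameters are drawn from a fresh set of pairwise-distinct names and each $v_i$ is closed ($\cdot; \cdot \vdash v_i : \tau_i$), the simultaneous form follows by iterating the single-variable lemma, or equivalently by running the same induction directly on the vector form as in Lemma~\ref{lem:expr-substitution}.
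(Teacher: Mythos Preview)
Your proposal is correct and follows exactly the approach the paper takes: the paper's proof is the single line ``By straightforward induction on the derivation $\Gamma, x : \tau ; \Theta \vdash c : \Theta'$,'' and you have simply spelled out that induction, correctly identifying which cases are trivial (the closed commands and \textsc{T-Zero}), which need the induction hypothesis, and which appeal to the Expression Substitution lemma. Your observation that header types and their operators contain no term variables is the right reason the argument is routine, and your remark about lifting the single-variable lemma to the simultaneous form used in the \textsc{T-Apply} case of preservation is a fair bookkeeping point that the paper leaves implicit.
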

\begin{proof}
By straightforward induction on the derivation $\Gamma, x : \tau ; \Theta \vdash c : \Theta'$.
\end{proof}

\begin{lemma}[Entails Subsumption]
    \label{lem:entails-add}
    If $H \models \Theta$ then $H[h \mapsto v] \models \Theta \cdot h$
\end{lemma}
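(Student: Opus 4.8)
The plan is to give the short, rule-directed argument first. By \textsc{Ent-Inst} we have $\{h \mapsto v\} \models h$, since $\mathit{dom}(\{h \mapsto v\}) = \{h\}$. Combining this with the hypothesis $H \models \Theta$ and the rule \textsc{Ent-Seq} yields $H \cup \{h \mapsto v\} \models \Theta \cdot h$. In every context where this lemma is applied (the \textsc{T-Extr} and \textsc{T-Add} cases of preservation, where the operational rules \textsc{E-Extr} and \textsc{E-Add} produce exactly the updated map $H[h \mapsto v]$) the instance $h$ is fresh with respect to $H$, so $H \cup \{h \mapsto v\} = H[h \mapsto v]$ and the two maps being joined have disjoint domains; hence \textsc{Ent-Seq} is legitimately applicable and delivers the desired $H[h \mapsto v] \models \Theta \cdot h$.

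For a proof that does not rely on freshness, I would instead pass through the semantic characterisation of $\models$. By Lemma~\ref{lem:entailment}, $H \models \Theta$ gives $\mathit{dom}(H) \in \llbracket \Theta \rrbracket$. Using the equations of Figure~\ref{fig:type-semantics-auxfn}, $\llbracket \Theta \cdot h \rrbracket = \llbracket \Theta \rrbracket \bullet \llbracket h \rrbracket = \llbracket \Theta \rrbracket \bullet \{\{h\}\} = \{\, hs \cup \{h\} \mid hs \in \llbracket \Theta \rrbracket \,\}$. Since $\mathit{dom}(H[h \mapsto v]) = \mathit{dom}(H) \cup \{h\}$ and $\mathit{dom}(H) \in \llbracket \Theta \rrbracket$, we obtain $\mathit{dom}(H[h \mapsto v]) \in \llbracket \Theta \cdot h \rrbracket$; the reverse direction of the equivalence $H \models \Theta \iff \mathit{dom}(H) \in \llbracket \Theta \rrbracket$ recorded with Figure~\ref{fig:entails-relation} then re-establishes entailment, $H[h \mapsto v] \models \Theta \cdot h$.

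The step I expect to require the most care is the degenerate case $h \in \mathit{dom}(H)$: there $H \cup \{h \mapsto v\}$ need not even be single-valued, and the disjointness side condition of \textsc{Ent-Seq} fails, so the rule-directed argument does not apply and one must argue semantically, additionally using $hs \cup \{h\} = hs$ when $h \in hs$. Apart from this bookkeeping about whether $h$ is fresh, the lemma is routine, proved with the same ingredients as Lemmas~\ref{lem:restrict-domain-entail} and~\ref{lem:negrestrict-domain-entail}: the definition of $\llbracket \cdot \rrbracket$ and the entailment rules \textsc{Ent-Empty}, \textsc{Ent-Inst}, and \textsc{Ent-Seq}.
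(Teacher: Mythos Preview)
Your semantic argument is correct and would establish the lemma. However, two of your side remarks are mistaken, and the paper's own proof exploits the first of these to avoid the semantic detour entirely.

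First, \textsc{Ent-Seq} in this paper carries \emph{no} disjointness side condition: the rule as stated in Figure~\ref{fig:entails-relation} concludes $H_1 \cup H_2 \models \Theta_1 \cdot \Theta_2$ from $H_1 \models \Theta_1$ and $H_2 \models \Theta_2$ with no restriction on the overlap of $H_1$ and $H_2$. Consequently the paper handles the case $h \in \mathit{dom}(H)$ rule-directedly too: since $\mathit{dom}(H) = \mathit{dom}(H[h \mapsto v])$, Lemma~\ref{lem:entails-domain} (Entailment Congruence) transports $H \models \Theta$ to $H[h \mapsto v] \models \Theta$; then take $H_1 = H[h \mapsto v]$ and $H_2 = \{h \mapsto v\}$ and apply \textsc{Ent-Inst} and \textsc{Ent-Seq}. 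No appeal to $\llbracket\cdot\rrbracket$ is needed. Your semantic route works just as well, but it relies on the reverse direction of $H \models \Theta \iff \mathit{dom}(H) \in \llbracket\Theta\rrbracket$, which the paper asserts informally but proves only in the forward direction (Lemma~\ref{lem:entailment}); the paper's rule-based argument sidesteps that dependency.

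Second, your claim that at every call site $h$ is fresh is not right: \textsc{E-Extr} has no precondition $h \notin \mathit{dom}(H)$, so the \textsc{T-Extr} case of preservation genuinely needs the lemma in the overlapping case. The paper therefore proves both cases, and so should you.
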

\begin{proof}
  \iftoggle{fullproofs}{
  We analyze two cases.
  \begin{description}
  \item{\textit{Case} $h \in \mathit{dom}(H)$:}
  By the assumption of the case, we have $\mathit{dom}(H) = \mathit{dom}(H[h \mapsto v])$. 
  Let $H_1 = H[h \mapsto v]$ and $H_2 = \{ h \mapsto v \}$. 
  Observe that $H[h \mapsto v] = H_1 \cup H_2$.
  By Lemma~\ref{lem:entails-domain}, we have that $H_1 \models \Theta$.
  By \textsc{Ent-Inst} we have $H_2 \models h$. 
  By \textsc{Ent-Seq} we have $H[h \mapsto v] \models \Theta \cdot h$.

  \item{\textit{Case} $h not\in \mathit{dom}(H)$:}
  Let $H_1 = H$ and $H_2 = \{ h \mapsto v\}$. 
  Observe that $H[h \mapsto v] = H_1 \cup H_2$. 
  By assumption we have $H_1 \models \Theta$. 
  By \textsc{Ent-Inst} we have $H_2 \models h$. 
  By \textsc{Ent-Seq} we have $H[h \mapsto v] \models \Theta \cdot h$.
\end{description}
}{Immediate by definitions and Lemma~\ref{lem:entails-domain} }
\end{proof}

\begin{lemma}[Entails Removal]
    \label{lem:entails-remove}
    If $H \models \Theta$ then $H \setminus h \models \mathtt{Remove}\ \Theta\ h$.
\end{lemma}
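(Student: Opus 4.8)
The plan is to prove this by induction on the structure of $\Theta$, inverting the derivation of $H \models \Theta$ at each step, in exactly the same style as the proofs of Lemmas~\ref{lem:restrict-domain-entail} and~\ref{lem:negrestrict-domain-entail}. The only set-theoretic facts needed are that removing a binding distributes over disjoint union of maps, $(H_1 \cup H_2) \setminus h = (H_1 \setminus h) \cup (H_2 \setminus h)$, and that $\mathit{dom}(H \setminus h) = \mathit{dom}(H) \setminus \{h\}$; everything else is just unfolding the definition of $\mathtt{Remove}$ from Figure~\ref{fig:remove}.

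For the base cases: when $\Theta = 0$ the hypothesis $H \models 0$ is vacuous, so the case holds immediately. When $\Theta = 1$, inversion of \textsc{Ent-Empty} gives $H = \boldsymbol{\cdot}$, hence $H \setminus h = \boldsymbol{\cdot}$; since $\mathtt{Remove}\ 1\ h = 1$ we conclude by \textsc{Ent-Empty}. When $\Theta = g$ is an instance, inversion of \textsc{Ent-Inst} gives $\mathit{dom}(H) = \{g\}$, and I split on whether $g = h$: if so, $H \setminus h = \boldsymbol{\cdot}$ and $\mathtt{Remove}\ g\ h = 1$, closed by \textsc{Ent-Empty}; if not, $h \notin \mathit{dom}(H)$ so $H \setminus h = H$ and $\mathtt{Remove}\ g\ h = g$, closed by \textsc{Ent-Inst}.

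For the inductive step on $\Theta_1 \cdot \Theta_2$: inversion of \textsc{Ent-Seq} gives $H = H_1 \cup H_2$ with $H_1 \models \Theta_1$ and $H_2 \models \Theta_2$; the induction hypothesis yields $H_i \setminus h \models \mathtt{Remove}\ \Theta_i\ h$, and since $(H_1 \setminus h) \cup (H_2 \setminus h) = (H_1 \cup H_2) \setminus h$ and $\mathtt{Remove}\ (\Theta_1 \cdot \Theta_2)\ h = (\mathtt{Remove}\ \Theta_1\ h) \cdot (\mathtt{Remove}\ \Theta_2\ h)$, a single application of \textsc{Ent-Seq} finishes the case. For $\Theta_1 + \Theta_2$: inversion gives $H \models \Theta_1$ or $H \models \Theta_2$; in the first subcase the induction hypothesis gives $H \setminus h \models \mathtt{Remove}\ \Theta_1\ h$, which \textsc{Ent-ChoiceL} lifts to $\mathtt{Remove}\ (\Theta_1 + \Theta_2)\ h$; the second subcase is symmetric via \textsc{Ent-ChoiceR}.

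I expect essentially no real obstacle here; the only point requiring a moment's care is the product case, where one must note that $\mathit{dom}(H_i \setminus h) \subseteq \mathit{dom}(H_i)$ so $H_1 \setminus h$ and $H_2 \setminus h$ still have disjoint domains and \textsc{Ent-Seq} genuinely applies. An alternative, equally short semantic route would combine Lemma~\ref{lem:remove-equal} ($\llbracket \mathtt{Remove}\ \Theta\ h \rrbracket = \llbracket \Theta \rrbracket \setminus h$) with the domain characterization of $\models$ (Lemma~\ref{lem:entailment} together with its converse), observing that $\mathit{dom}(H \setminus h) = \mathit{dom}(H) \setminus \{h\} \in \llbracket \Theta \rrbracket \setminus h$; but since only one direction of that characterization is stated as a lemma, I prefer the direct induction, which stays within the paper's established idiom.
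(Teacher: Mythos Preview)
Your proof is correct and follows essentially the same approach as the paper's own proof: induction on $\Theta$, inverting the entailment derivation in each case and unfolding the definition of $\mathtt{Remove}$. Your treatment is in fact slightly more careful than the paper's (you explicitly note the decomposition $(H_1 \cup H_2)\setminus h = (H_1\setminus h)\cup(H_2\setminus h)$ and worry about applicability of \textsc{Ent-Seq}, though note that rule as stated imposes no disjointness side condition).
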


\begin{proof}
  \iftoggle{fullproofs}{
    By induction on $\Theta$.\\
    \textit{Case} $\Theta=0$: The case immediately holds, as $H\models 0$ is a contradiction.

    \textit{Case} $\Theta=1$: By inversion of \textit{Entailment}, $H=\boldsymbol{\cdot}$. 
    By set theory, $\boldsymbol{\cdot}\setminus h = \boldsymbol{\cdot}$ and $\mathtt{Remove}\ 1\ h = 1$.
    By \textsc{Ent-Empty}, $\boldsymbol{\cdot}\models 1$.

    \textit{Case} $\Theta=g$:
    By inversion of \textit{Entailment}, $\mathit{dom}(H)=\{g\}$.\\
    \textit{Subcase} $g=h$: 
    By set theory $H\setminus h = \boldsymbol{\cdot}$.
    By definition of \textit{Remove}, $\mathtt{Remove}\ \Theta\ h=1$.
    By \textsc{Ent-Empty}, $\boldsymbol{\cdot} \models 1$, which concludes the case.\\
    \textit{Subcase} $g\neq h$:
    By set theory $H\setminus h = H$.
    By definition of \textit{Remove}, $\mathtt{Remove}\ \Theta\ h=g$.
    By assumption, $H\models\Theta$, which concludes the case.

    \textit{Case} $\Theta=\Theta_1 \cdot \Theta_2$:
    By inversion of \textit{Entailment}, $H=H_1\cup H_2, H_1\models\Theta_1, H_2\models\Theta_2$.
    By induction hypothesis, $H_1\setminus h \models \mathtt{Remove}\ \Theta_1\ h$ and $H_2\setminus h \models \mathtt{Remove}\ \Theta_2\ h$.
    By set theory, $H_1\setminus h \cup H_2\setminus h=(H_1\cup H_2) \setminus h$.
    By definition of \textit{Removal}, $\mathtt{Remove}\ \Theta_1\ h \cdot \mathtt{Remove}\ \Theta_2\ h = \mathtt{Remove}\ (\Theta_1\cdot\Theta_2)\ h$.
    By \textsc{Ent-Seq}, $(H_1\cup H_2) \setminus h \models \mathtt{Remove}\ (\Theta_1\cdot\Theta_2)\ h$.

    \textit{Case} $\Theta=\Theta_1 + \Theta_2$:
    By inversion of \textit{Entailment}, either $H\models\Theta_1$ or $H\models\Theta_2$.\\
    By definition of \textit{Removal}, $\mathtt{Remove}\ \Theta_1\ h + \mathtt{Remove}\ \Theta_2\ h = \mathtt{Remove}\ (\Theta_1 + \Theta_2)\ h$.
    \textit{Subcase} $H\models\Theta_1$:
    By induction hypothesis, $H \setminus h \models \Theta_1 \setminus h$.
    By \textsc{Ent-ChoiceL}, applied to $H \setminus h \models \mathtt{Remove}\ \Theta_1\ h$, and $\mathtt{Remove}\ \Theta_2\ h$, we can conclude $H \setminus h \models \mathtt{Remove}\ \Theta_1\ h + \mathtt{Remove}\ \Theta_2\ h$.
    By definition of \textit{Removal}, $H \setminus h \models \mathtt{Remove}\ (\Theta_1 + \Theta_2)\ h$.
  }{
    By straight-forward induction on $\Theta$.
  }
\end{proof}

\begin{lemma}[Entailment Congruence]
    \label{lem:entails-domain}
    If $H\models\Theta$ and $\mathit{dom}(H)=\mathit{dom}(H')$ then $H'\models\Theta$.
\end{lemma}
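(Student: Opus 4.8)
The plan is to prove the lemma by induction on the derivation of $H \models \Theta$ (equivalently, by induction on the structure of $\Theta$), using the key observation that every rule defining $(\models)$ in Figure~\ref{fig:entails-relation} inspects $H$ only through its domain. Rule \textsc{Ent-Empty} requires $H$ to be the empty map, \textsc{Ent-Inst} requires $\mathit{dom}(H)$ to be a singleton, \textsc{Ent-Seq} decomposes $H$ as a union, and the two \textsc{Ent-Choice} rules simply forward $H$ to a subderivation. Since $\mathit{dom}(H') = \mathit{dom}(H)$, whatever rule establishes $H \models \Theta$ can be replayed with $H'$ substituted for $H$.

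For the base cases this is immediate: if the last rule is \textsc{Ent-Empty} then $\Theta = 1$ and $\mathit{dom}(H') = \mathit{dom}(H) = \{\}$, so $H'$ is the empty map and $H' \models 1$; if it is \textsc{Ent-Inst} then $\Theta = h$ with $\mathit{dom}(H') = \{h\}$, so $H' \models h$. For \textsc{Ent-ChoiceL} (and symmetrically \textsc{Ent-ChoiceR}) we have $\Theta = \Theta_1 + \Theta_2$ and $H \models \Theta_1$; the induction hypothesis gives $H' \models \Theta_1$, and the same rule gives $H' \models \Theta_1 + \Theta_2$.

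The only case calling for a small construction is \textsc{Ent-Seq}, where $\Theta = \Theta_1 \cdot \Theta_2$ and $H = H_1 \cup H_2$ with $H_1 \models \Theta_1$ and $H_2 \models \Theta_2$. To invoke the induction hypothesis I would split $H'$ accordingly: set $H_i'$ to be the restriction of $H'$ to $\mathit{dom}(H_i)$, for $i \in \{1,2\}$. Because $\mathit{dom}(H_1) \cup \mathit{dom}(H_2) = \mathit{dom}(H) = \mathit{dom}(H')$, we get $H' = H_1' \cup H_2'$, and by construction $\mathit{dom}(H_i') = \mathit{dom}(H_i)$; the induction hypothesis then yields $H_i' \models \Theta_i$, and \textsc{Ent-Seq} concludes $H' \models \Theta_1 \cdot \Theta_2$.

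I do not anticipate any real obstacle: the lemma is essentially the statement that $(\models)$ factors through $\mathit{dom}(\cdot)$, and the only mildly non-trivial step is choosing the decomposition of $H'$ in the \textsc{Ent-Seq} case. An alternative route would be to combine Lemma~\ref{lem:entailment} with its converse --- that $\mathit{dom}(H') \in \llbracket\Theta\rrbracket$ implies $H' \models \Theta$, itself a routine induction on $\Theta$ --- but the direct argument above is shorter and self-contained.
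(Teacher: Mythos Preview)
Your proposal is correct and follows essentially the same route as the paper's proof: induction on $\Theta$ (equivalently, on the entailment derivation), with inversion of the applicable rule in each case. Your \textsc{Ent-Seq} case is in fact slightly more explicit than the paper's, which invokes the induction hypothesis on $H_1', H_2'$ without spelling out that they are the restrictions of $H'$ to $\mathit{dom}(H_1)$ and $\mathit{dom}(H_2)$; your construction fills that small gap.
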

\begin{proof}
  \iftoggle{fullproofs}{
By induction on $\Theta$.\\
\textit{Case} $\Theta=0$: The case immediately holds as $H\models 0$ is a contradiction.\\
\textit{Case} $\Theta=1$: By inversion of \textit{Entailment} $H=\boldsymbol{\cdot}$.
By assumption $\mathit{dom}(H)=\mathit{dom}(H')$, $H'=\boldsymbol{\cdot}$ and by \textsc{Ent-Empty}, $H'\models\Theta$.\\
\textit{Case} $\Theta=g$: By inversion of \textit{Entailment}, $\mathit{dom}(H)=\{g\}$.
By assumption $\mathit{dom}(H)=\mathit{dom}(H')$ and by \text{Ent-Inst}, $H'\models\Theta$.\\
\textit{Case} $\Theta=\Theta_1\cdot\Theta_2$:
By inversion of \textit{Entailmment}, $H=H_1\cup H_2, H_1\models\Theta_1, H_2\models\Theta_2$.
By set theory, $\mathit{dom}(H)=\mathit{dom}(H_1)\cup\mathit{dom}(H_2)$.
By induction hypothesis if $\mathit{dom}(H_1')=\mathit{dom}(H_1)$ and $\mathit{dom}(H_2')=\mathit{dom}(H_2)$, then $H_1'\models\Theta_1,H_2'\models\Theta_2$.
By \textsc{Ent-Seq}, $H'=H_1'\cup H_2' \models \Theta_1\cdot\Theta_2$. \\
\textit{Case} $\Theta=\Theta_1+\Theta_2$:
By inversion of \textit{Entailment}, either $H\models\Theta_1$ or $H\models\Theta_2$.\\
\textit{Subcase} $H\models\Theta_1$:
By induction hypothesis, we have $H'\models\Theta_1$.
By \textsc{Ent-ChoiceL}, $H'\models\Theta_1+\Theta_2$.\\
\textit{Subcase} $H\models\Theta_2$: Symmetric to the previous subcase.
} {
  By straight-forward induction on $\Theta$
}
\end{proof}

\noindent We define $\Theta_1 < \Theta_2 \triangleq \llbracket \Theta_1 \rrbracket \subseteq \llbracket \Theta_2 \rrbracket$, i.e., for every $S \in \llbracket \Theta_1 \rrbracket, S \in \llbracket \Theta_2 \rrbracket$.

\begin{lemma}[Order Extend]
\label{lem:type-less-than-extend}
If $\Theta_1'<\Theta_1$ then $\Theta_1'\cdot h < \Theta_1 \cdot h$.
\end{lemma}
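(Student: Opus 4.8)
The statement is purely semantic, so the plan is to discharge it by unfolding the definition of $<$ and the denotation of the concatenation type, reducing it to a one-line monotonicity fact about the set-level operator $\bullet$. Concretely, $\Theta_1' < \Theta_1$ means $\llbracket \Theta_1' \rrbracket \subseteq \llbracket \Theta_1 \rrbracket$, and we must establish $\llbracket \Theta_1' \cdot h \rrbracket \subseteq \llbracket \Theta_1 \cdot h \rrbracket$. Using the semantics of header types (Figure~\ref{fig:type-semantics-auxfn}), we have $\llbracket \Theta_1' \cdot h \rrbracket = \llbracket \Theta_1' \rrbracket \bullet \llbracket h \rrbracket$ and $\llbracket \Theta_1 \cdot h \rrbracket = \llbracket \Theta_1 \rrbracket \bullet \llbracket h \rrbracket$, where $\llbracket h \rrbracket = \{\{h\}\}$.

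First I would observe the general monotonicity property: if $S' \subseteq S$ then $S' \bullet T \subseteq S \bullet T$, which is immediate from $S' \bullet T = \{ s \cup t \mid s \in S',\ t \in T \}$. Instantiating with $S' = \llbracket \Theta_1' \rrbracket$, $S = \llbracket \Theta_1 \rrbracket$, and $T = \{\{h\}\}$ then gives the result directly. If one prefers an element-chase: take an arbitrary $hs \in \llbracket \Theta_1' \cdot h \rrbracket$; by the definition of $\bullet$ it has the form $hs_1 \cup \{h\}$ for some $hs_1 \in \llbracket \Theta_1' \rrbracket$. Since $\Theta_1' < \Theta_1$, we get $hs_1 \in \llbracket \Theta_1 \rrbracket$, hence $hs_1 \cup \{h\} \in \llbracket \Theta_1 \rrbracket \bullet \{\{h\}\} = \llbracket \Theta_1 \cdot h \rrbracket$. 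Thus every element of $\llbracket \Theta_1' \cdot h \rrbracket$ lies in $\llbracket \Theta_1 \cdot h \rrbracket$, i.e.\ $\Theta_1' \cdot h < \Theta_1 \cdot h$.

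There is essentially no obstacle here: the only care needed is to keep the semantic reading of $<$ and $\cdot$ separate from their syntactic definitions, and to use the former throughout. It is worth noting that this lemma is really the special case (with $T = \{\{h\}\}$) of the monotonicity of $\bullet$, and analogous one-line arguments handle monotonicity of $+$; these small facts are what let the union and concatenation output types in the preservation proof be weakened as the derivation is rebuilt.
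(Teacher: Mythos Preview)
Your proposal is correct and follows essentially the same approach as the paper: unfold $<$ and the semantics of $\cdot$, then do an element chase showing that any $hs_1 \cup \{h\}$ with $hs_1 \in \llbracket \Theta_1' \rrbracket$ lies in $\llbracket \Theta_1 \rrbracket \bullet \{\{h\}\}$. Your additional framing as monotonicity of $\bullet$ in its left argument is a slight generalization, but the underlying argument is identical to the paper's.
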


\begin{proof}
  \iftoggle{fullproofs}{
We calculate as follows:
\begin{enumerate}
    \item $\llbracket \Theta_1' \rrbracket \subseteq \llbracket \Theta_1 \rrbracket$ by (B) and the definition of $<$.
    \item $\llbracket \Theta_1'\cdot h\rrbracket == \llbracket \Theta_1'\rrbracket \bullet \sos{h}$ by definition of \teval{.}
    \item $\teval{\Theta_1\cdot h} == \teval{\T_1} \bullet \sos{h}$ by definition of \teval{.}
    \item Let $S \in \teval{\T_1'}\bullet\sos{h}$.
    \item $S=S'\cup \{h\}$, where $S'\in\teval{\T_1'}$ by def of $\bullet$.
    \item By 1., $S'\in\teval{\T_1}$
    \item By set theory, $S'\cup\{h\}\in\teval{\T_1}\bullet\sos{h}$.
    \item Then $\teval{\T_1'}\bullet\sos{h}\subseteq \teval{\T_1}\bullet\sos{h}$
    \end{enumerate}
  } { By definitions of $\teval{}$ and $<$.}
\end{proof}

\begin{lemma}[Order Remove]
\label{lem:type-less-than-remove}
If $\T_1'<\T_1$ then $\teval{\mathtt{Remove}~\T_1'~h}\subseteq \teval{\mathtt{Remove}~\T_1~h}$.
\end{lemma}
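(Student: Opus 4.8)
The plan is to reduce this syntactic statement to a purely semantic one by invoking Lemma~\ref{lem:remove-equal}, which gives $\teval{\mathtt{Remove}~\Theta~h} = \teval{\Theta}\setminus h$ for every header type $\Theta$. Applying this rewriting to both $\Theta_1'$ and $\Theta_1$, the goal $\teval{\mathtt{Remove}~\Theta_1'~h}\subseteq\teval{\mathtt{Remove}~\Theta_1~h}$ becomes $\teval{\Theta_1'}\setminus h\subseteq\teval{\Theta_1}\setminus h$. Since the hypothesis $\Theta_1'<\Theta_1$ unfolds, by the definition of $<$, to $\teval{\Theta_1'}\subseteq\teval{\Theta_1}$, it then suffices to show that the semantic removal operator $(\cdot)\setminus h$ is monotone with respect to set inclusion.

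Monotonicity is proved directly from the definition $S\setminus h\triangleq\{\,hs\setminus\{h\}\mid hs\in S\,\}$. First I would take an arbitrary $t\in\teval{\Theta_1'}\setminus h$; by definition there is some $hs\in\teval{\Theta_1'}$ with $t = hs\setminus\{h\}$. From $\teval{\Theta_1'}\subseteq\teval{\Theta_1}$ we obtain $hs\in\teval{\Theta_1}$, hence $t = hs\setminus\{h\}\in\teval{\Theta_1}\setminus h$. Since $t$ was arbitrary, this establishes the desired inclusion and completes the proof.

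I do not expect a real obstacle here: the only points requiring a little care are to apply Lemma~\ref{lem:remove-equal} on \emph{both} sides before reasoning semantically, and to note that the hypothesis and the conclusion point in the same direction, so it is monotonicity (not antitonicity) of $(\cdot)\setminus h$ that is needed. An alternative route would be a structural induction on $\Theta_1$ mirroring the proof of Lemma~\ref{lem:type-less-than-extend}, but routing through the semantic characterization is shorter and avoids re-deriving the inductive structure of $\mathtt{Remove}$. This lemma is presumably a helper for the \textsc{T-Rem} case of preservation, paralleling the role of Lemma~\ref{lem:type-less-than-extend} in the \textsc{T-Extr}/\textsc{T-Add} cases.
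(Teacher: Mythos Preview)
Your proposal is correct and follows essentially the same route as the paper: apply Lemma~\ref{lem:remove-equal} to both sides to reduce to the semantic statement $\teval{\Theta_1'}\setminus h\subseteq\teval{\Theta_1}\setminus h$, then conclude by monotonicity of $(\cdot)\setminus h$, which the paper simply attributes to ``set theory.'' Your version is actually more explicit than the paper's, spelling out the element-chasing argument for monotonicity.
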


\begin{proof}
  \iftoggle{fullproofs}{
    Since $\teval{\mathtt{Remove}~\T_1'~h}==\teval{\T_1'}\setminus h$ and $\teval{\mathtt{Remove}~\T_1~h}==\teval{\T_1}\setminus h$ by Lemma \ref{lem:remove-equal}, we can equivalently show that $\teval{\T_1'}\setminus h \subseteq \teval{\T_1}\setminus h$, which follows from set theory.
    } { Immediate by definitions and Lemma~\ref{lem:remove-equal}. }
\end{proof}

\begin{lemma}[Order Restrict]
\label{lem:type-less-than-restrict}
If $\T_1' < \T_1$ then $\teval{\mathtt{Restrict}~\T_1'~h}\subseteq\teval{\mathtt{Restrict}~\T_1~h}$
\end{lemma}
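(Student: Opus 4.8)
The plan is to reduce the claim to elementary set theory via Lemma~\ref{lem:restrict-equal}, in exactly the way the proof of Lemma~\ref{lem:type-less-than-remove} reduces Order Remove via Lemma~\ref{lem:remove-equal}. First I would apply Lemma~\ref{lem:restrict-equal} to both sides, rewriting $\teval{\mathtt{Restrict}~\T_1'~h}$ as $\teval{\T_1'}|_h$ and $\teval{\mathtt{Restrict}~\T_1~h}$ as $\teval{\T_1}|_h$, where $S|_h \triangleq \{hs \mid hs \in S \land h \in hs\}$ is the semantic restriction operator from the Restriction paragraph of Appendix~\ref{sec:appendix-header-ops}. This turns the goal into the set-theoretic inclusion $\teval{\T_1'}|_h \subseteq \teval{\T_1}|_h$.

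Second, I would observe that $(\cdot)|_h$ is monotone with respect to set inclusion: if $S \subseteq S'$, then every $hs$ with $hs \in S$ and $h \in hs$ also satisfies $hs \in S'$ and $h \in hs$, so $S|_h \subseteq S'|_h$. Instantiating this with the hypothesis $\T_1' < \T_1$, that is, $\teval{\T_1'} \subseteq \teval{\T_1}$, yields $\teval{\T_1'}|_h \subseteq \teval{\T_1}|_h$, which is precisely the desired inclusion.

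There is essentially no obstacle here; the whole argument is one or two lines once Lemma~\ref{lem:restrict-equal} is available. The only points requiring a little care are citing Lemma~\ref{lem:restrict-equal} for both $\T_1'$ and $\T_1$, and noting that the conclusion is phrased directly as an inclusion between semantic denotations rather than through the $<$ relation (the two are interchangeable by the definition $\Theta_1 < \Theta_2 \triangleq \teval{\Theta_1} \subseteq \teval{\Theta_2}$), so no separate appeal to soundness of the syntactic $\mathtt{Restrict}$ operator beyond Lemma~\ref{lem:restrict-equal} is needed.
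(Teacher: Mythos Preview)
Your proposal is correct and matches the paper's proof essentially line for line: the paper also invokes Lemma~\ref{lem:restrict-equal} to rewrite both sides as $\teval{\T_1'}|_h$ and $\teval{\T_1}|_h$, and then concludes by the set-theoretic monotonicity of $(\cdot)|_h$ under the hypothesis $\teval{\T_1'}\subseteq\teval{\T_1}$.
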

\begin{proof}
  \iftoggle{fullproofs}{
By Lemma \ref{lem:restrict-equal}, $\teval{\mathtt{Restrict}~\T_1'~h}==\teval{\T_1'}|h$ and $\teval{\mathtt{Restrict}~\T_1~h}==\teval{\T_1}|h$.
By set theory, $\teval{\T_1'}|h\subseteq \teval{\T_1}|h$ when $\teval{\T_1'}\subseteq\teval{\T_1}$, so we are done.
} { Immediate by definitions and Lemma~\ref{lem:restrict-equal}. }
\end{proof}

\begin{lemma}[Order NegRestrict]
\label{lem:type-less-than-negrestrict}
If $\T_1' < \T_1$ then $\teval{\mathtt{NegRestrict}~\T_1'~h}\subseteq\teval{\mathtt{NegRestrict}~\T_1~h}$
\end{lemma}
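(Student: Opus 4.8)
The plan is to follow the same recipe used for the three preceding order lemmas (Lemmas~\ref{lem:type-less-than-remove} and \ref{lem:type-less-than-restrict}): reduce the syntactic statement to a purely set-theoretic monotonicity fact about the corresponding semantic operation, here negated restriction $S|_{\neg h}$. First I would apply Lemma~\ref{lem:negrestrict-equal}, which gives $\teval{\mathtt{NegRestrict}~\T_1'~h} == \teval{\T_1'}|_{\neg h}$ and $\teval{\mathtt{NegRestrict}~\T_1~h} == \teval{\T_1}|_{\neg h}$. Hence it suffices to prove the inclusion $\teval{\T_1'}|_{\neg h} \subseteq \teval{\T_1}|_{\neg h}$.

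For that inclusion I would unfold the hypothesis $\T_1' < \T_1$, which by definition is exactly $\teval{\T_1'} \subseteq \teval{\T_1}$, and then observe that negated restriction is monotone in its set argument. Concretely, since $S|_{\neg h} = \{\, hs \mid hs \in S \land h \notin hs \,\}$, any $hs \in \teval{\T_1'}|_{\neg h}$ satisfies $hs \in \teval{\T_1'} \subseteq \teval{\T_1}$ together with $h \notin hs$, so $hs \in \teval{\T_1}|_{\neg h}$. This is the only step carrying any content, and it is elementary: filtering a superset by a fixed predicate yields a superset of the filtered subset.

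I do not expect any genuine obstacle: the statement is the negated-restriction twin of Lemma~\ref{lem:type-less-than-restrict}, and the argument is identical up to replacing the membership test $h \in hs$ by $h \notin hs$. Depending on the \texttt{fullproofs} toggle, the expanded proof would just inline the one-line chain above, and otherwise a terse ``Immediate by definitions and Lemma~\ref{lem:negrestrict-equal}'' matches the presentation style of the other order lemmas.
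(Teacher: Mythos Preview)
Your proposal is correct and matches the paper's proof essentially line for line: apply Lemma~\ref{lem:negrestrict-equal} to rewrite both sides semantically, then invoke the monotonicity of the filter $S|_{\neg h}$ under the hypothesis $\teval{\T_1'}\subseteq\teval{\T_1}$. You even correctly anticipated the terse form used when \texttt{fullproofs} is off.
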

\begin{proof}
  \iftoggle{fullproofs}{
    By Lemma \ref{lem:negrestrict-equal}, $\teval{\mathtt{NegRestrict}~\T_1'~h}==\teval{\T_1'}|\neg h$ and $\teval{\mathtt{NegRestrict}~\T_1~h}==\teval{\T_1}|\neg h$.
    By set theory, $\teval{\T_1'}|\neg h\subseteq \teval{\T_1}|\neg h$ when $\teval{\T_1'}\subseteq\teval{\T_1}$, so we are done.
} {
  Immediate by definitions and Lemma~\ref{lem:negrestrict-equal}.
}
\end{proof}

\begin{lemma}[Order Include]
\label{lem:type-less-than-includes}
If $\T'<\T$ and $\mathtt{Includes}~\T_1~h$ then $\mathtt{Includes}~\T_1'~h$.
\end{lemma}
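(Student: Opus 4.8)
The plan is to reduce the claim to elementary set-theoretic reasoning by passing through the semantic characterisation of the \texttt{Includes} operator, exactly as the neighbouring order lemmas (\textsc{Order Restrict}, \textsc{Order NegRestrict}, \textsc{Order Remove}) do with their respective operators. Concretely, write the hypothesis as $\Theta_1' < \Theta_1$ and $\mathtt{Includes}~\Theta_1~h$ and the goal as $\mathtt{Includes}~\Theta_1'~h$. By Lemma~\ref{lem:includes-equal}, $\mathtt{Includes}~\Theta_1~h$ is equivalent to ``$h \in hs$ for all $hs \in \llbracket \Theta_1 \rrbracket$'', and $\mathtt{Includes}~\Theta_1'~h$ is equivalent to ``$h \in hs$ for all $hs \in \llbracket \Theta_1' \rrbracket$''. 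So it suffices to derive the latter semantic statement from the former.

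For that step I would unfold the definition $\Theta_1' < \Theta_1 \triangleq \llbracket \Theta_1' \rrbracket \subseteq \llbracket \Theta_1 \rrbracket$: then any $hs \in \llbracket \Theta_1' \rrbracket$ is also a member of $\llbracket \Theta_1 \rrbracket$, hence satisfies $h \in hs$ by the (rewritten) hypothesis. This is precisely ``$h \in hs$ for all $hs \in \llbracket \Theta_1' \rrbracket$'', which by Lemma~\ref{lem:includes-equal} again yields $\mathtt{Includes}~\Theta_1'~h$, closing the proof. In other words, the argument is just monotonicity of a universal quantifier under set inclusion, lifted from semantics back to syntax via the equivalence lemma.

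There is no deep obstacle here; the only thing worth flagging is why the obvious alternative — a structural induction on $\Theta_1$ or $\Theta_1'$ — does not work directly: the relation $<$ is purely semantic, so the syntax trees of $\Theta_1'$ and $\Theta_1$ need not be related and cannot be taken apart in lockstep. Routing through $\llbracket \cdot \rrbracket$ and Lemma~\ref{lem:includes-equal} is exactly what sidesteps this, reducing everything to a two-line set-theoretic computation once the semantic characterisations are in place.
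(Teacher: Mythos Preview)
Your proposal is correct and follows essentially the same route as the paper: both convert the syntactic $\mathtt{Includes}$ predicate to its semantic form via Lemma~\ref{lem:includes-equal}, unfold $<$ as $\llbracket \Theta_1' \rrbracket \subseteq \llbracket \Theta_1 \rrbracket$, and then observe that the universal condition ``$h \in hs$ for all $hs$'' is downward closed under set inclusion. Your remark about why a direct structural induction would not work is a nice addition, but the core argument is identical.
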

\begin{proof}
  \iftoggle{fullproofs}{
By Lemma~\ref{lem:includes-equal}, $\mathtt{Includes}~\T_1'~h = h
\sqsubset \T_1'$. By the same lemma,
$\mathtt{Includes}~\T_1~h = h \sqsubset \T_1$. Let $S \in \llbracket \T_1' \rrbracket$ to show $h \in S$
and hence $h \sqsubset  \T_1'$.  Since $\llbracket
\T_1' \rrbracket \subseteq \llbracket \T_1 \rrbracket$, by assumption
and definition of $<$, then $S \in \llbracket \T_1 \rrbracket$. Since
$h \sqsubset \T_1$, conclude $h \in S$ and we are done.
}{
  By Lemma~\ref{lem:includes-equal} and definitions.
}

\end{proof}

\begin{lemma}[Order Choice]
\label{lem:type-less-than-plus}
If $\T_a'<\T_a$ and $\T_b'<\T_b$ then $\T_a'+\T_b'<\T_a+\T_b$.
\end{lemma}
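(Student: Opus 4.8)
The plan is to unfold the definition of $<$ and reduce the claim to an elementary monotonicity property of set union. First I would observe that, by the definition $\T_1 < \T_2 \triangleq \teval{\T_1} \subseteq \teval{\T_2}$, the two hypotheses $\T_a' < \T_a$ and $\T_b' < \T_b$ are precisely $\teval{\T_a'} \subseteq \teval{\T_a}$ and $\teval{\T_b'} \subseteq \teval{\T_b}$. Next I would invoke the semantics of the choice constructor, namely $\teval{\T_1 + \T_2} = \teval{\T_1} \cup \teval{\T_2}$ (from the equations defining $\teval{\cdot}$), to rewrite the goal $\T_a' + \T_b' < \T_a + \T_b$ as the set inclusion $\teval{\T_a'} \cup \teval{\T_b'} \subseteq \teval{\T_a} \cup \teval{\T_b}$.

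The remaining step is pure set theory: take any $S \in \teval{\T_a'} \cup \teval{\T_b'}$; then either $S \in \teval{\T_a'} \subseteq \teval{\T_a}$ or $S \in \teval{\T_b'} \subseteq \teval{\T_b}$, and in both cases $S \in \teval{\T_a} \cup \teval{\T_b}$, which closes the argument.

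There is no real obstacle here. Unlike the companion order lemmas for \texttt{Restrict}, \texttt{NegRestrict}, and \texttt{Remove}, which must first pass through the equivalence lemmas relating each syntactic operator to its semantic counterpart, the choice constructor is interpreted \emph{directly} as set union, so no auxiliary equivalence and no induction on the structure of $\Theta$ is required. The entire proof is a one-line consequence of the definition of $<$, the clause $\teval{\T_1 + \T_2} = \teval{\T_1} \cup \teval{\T_2}$, and monotonicity of $\cup$ with respect to $\subseteq$.
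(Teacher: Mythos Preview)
Your proposal is correct and matches the paper's approach exactly: unfold $<$ to a subset inclusion, use the semantics $\teval{\T_1 + \T_2} = \teval{\T_1} \cup \teval{\T_2}$ to rewrite, and conclude by monotonicity of union. The paper merely says ``follows from set theory'' where you spell out the elementwise argument, but the structure is identical.
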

\begin{proof}
  \iftoggle{fullproofs}{
We have to show that $\teval{\T_a'+\T_b'}\subseteq\teval{\T_a+\T_b}$ when $\T_a'<\T_a$ and $\T_b'<\T_b$.
By definition of $\teval{.}$ we can equally show that $\teval{\T_a'}\cup\teval{\T_b'}\subseteq \teval{\T_a}\cup \teval{\T_b}$, which follows from set theory.
} {
  Immediate.
}
\end{proof}

\begin{lemma}[Action Type Bounds]
    \label{lem:type-less-than-action}
    If $\Gamma;\T_1\vdash a: \bar{\tau}\rightarrow \T_2$ and $\T_1'<\T_1$, then $\exists \T_2'.\Gamma;\T_1' \vdash a: \bar{\tau}\rightarrow\T_2'$ and $\T_2' < \T_2$.
\end{lemma}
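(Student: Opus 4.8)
The plan is to deduce this from a general monotonicity (``weakening'') property of command typing and then repackage it through \textsc{T-Action}. Concretely, I would first prove a \emph{Command Type Bounds} lemma: if $\cmdtype{\Gamma}{\Theta_1}{c}{\Theta_2}$ and $\Theta_1' < \Theta_1$, then there is a $\Theta_2'$ with $\cmdtype{\Gamma}{\Theta_1'}{c}{\Theta_2'}$ and $\Theta_2' < \Theta_2$. The present statement is then immediate: invert \textsc{T-Action} on $\Gamma; \Theta_1 \vdash \lambda\bar x:\bar\tau.c : \bar\tau \rightarrow \Theta_2$ to get $\cmdtype{\Gamma,\bar x:\bar\tau}{\Theta_1}{c}{\Theta_2}$, apply Command Type Bounds with $\Theta_1' < \Theta_1$ to obtain $\Theta_2' < \Theta_2$ together with $\cmdtype{\Gamma,\bar x:\bar\tau}{\Theta_1'}{c}{\Theta_2'}$, and re-apply \textsc{T-Action} to conclude $\Gamma;\Theta_1' \vdash \lambda\bar x:\bar\tau.c : \bar\tau \rightarrow \Theta_2'$.

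The Command Type Bounds lemma is proved by induction on the typing derivation, with a case on the last rule used. \textsc{T-Skip} and \textsc{T-Emit} keep the output type equal to the input, so take $\Theta_2' = \Theta_1'$. \textsc{T-Extr} and \textsc{T-Add} give output $\Theta_1\cdot h$; take $\Theta_2' = \Theta_1'\cdot h$ and invoke Lemma~\ref{lem:type-less-than-extend}. \textsc{T-Rem} gives output $\mathtt{Remove}~\Theta_1~h$; take $\Theta_2' = \mathtt{Remove}~\Theta_1'~h$ and invoke Lemma~\ref{lem:type-less-than-remove}. \textsc{T-Mod} keeps the output equal to the input but carries side conditions; re-establish $\mathtt{Includes}~\Theta_1'~h$ by Lemma~\ref{lem:type-less-than-includes} and re-type the assigned expression in $\Theta_1'$ using a small auxiliary monotonicity lemma for expression typing (a routine induction whose only nontrivial case, \textsc{T-Field}, is again Lemma~\ref{lem:type-less-than-includes}), then take $\Theta_2' = \Theta_1'$. \textsc{T-Seq} applies the induction hypothesis to $c_1$, threading the shrunken intermediate type, and then to $c_2$. \textsc{T-If} and \textsc{T-IfValid} apply the induction hypothesis to both branches---for \textsc{T-IfValid} in the contexts $\mathtt{Restrict}~\Theta_1'~h$ and $\mathtt{NegRestrict}~\Theta_1'~h$, which are contained in their $\Theta_1$ counterparts by Lemmas~\ref{lem:type-less-than-restrict} and~\ref{lem:type-less-than-negrestrict}---and combine the results with Lemma~\ref{lem:type-less-than-plus}. \textsc{T-Zero} is handled by noting $\teval{\Theta_1'} \subseteq \teval{\Theta_1} = \{\}$, so $\mathtt{Empty}~\Theta_1'$ by Lemma~\ref{lem:empty-iff}, whence \textsc{T-Zero} again applies with $\Theta_2' = \Theta_2$. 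Finally \textsc{T-Apply}: the sets $\mathcal{CV}(t) = \bar S$, the reads $t.\mathit{reads}$, and the $\mathsf{maskable}$ predicate do not depend on the incoming type, so the residual reads $\bar e$ are unchanged; re-type each $e_j$ in $\Theta_1'$ via expression monotonicity, apply the induction hypothesis to each action body in the context $\mathtt{Restrict}~\Theta_1'~S_i$ (contained in $\mathtt{Restrict}~\Theta_1~S_i$ by Lemma~\ref{lem:type-less-than-restrict} lifted to sets of headers), and take $\Theta_2' = \sum_{a_i}\Theta_i''$, which is $< \sum_{a_i}\Theta_i' = \Theta_2$ by iterated Lemma~\ref{lem:type-less-than-plus}.

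Because an action body is syntactically restricted to \syntax{skip}, sequencing, \syntax{add}, \syntax{remove}, and modification, the only cases actually exercised for this lemma are \textsc{T-Skip}, \textsc{T-Seq}, \textsc{T-Add}, \textsc{T-Rem}, \textsc{T-Mod}, and \textsc{T-Zero}; the remaining cases are needed only if Command Type Bounds is reused elsewhere. The main obstacle I anticipate is not any single case but the bookkeeping around the auxiliary expression-typing monotonicity lemma---it is not among the ``Order'' lemmas already proved and must be stated to discharge \textsc{T-Mod} (and \textsc{T-Apply}) cleanly---together with the care needed in \textsc{T-Apply} to argue that the control-plane-derived data $\bar S$ and $\bar e$ are independent of $\Theta$, so that the rule can be reassembled verbatim over the smaller type.
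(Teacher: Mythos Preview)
Your approach is correct but inverts the paper's dependency structure. The paper proves this lemma \emph{directly}: it inverts \textsc{T-Action} to obtain $\cmdtype{\Gamma,\bar x:\bar\tau}{\Theta_1}{c}{\Theta_2}$ and then does a restricted induction on the action body $c$, which by the syntactic restriction on actions can only be built from \syntax{add}, \syntax{remove}, modification, and sequencing; so only the cases \textsc{T-Add}, \textsc{T-Rem}, \textsc{T-Mod}, and \textsc{T-Seq} are handled. The paper then uses this lemma as a black box in the \textsc{T-Apply} case of the separate Control Type Bounds lemma (Lemma~\ref{lem:type-less-than}). You instead fold everything into a single Command Type Bounds lemma and derive Action Type Bounds as a corollary; to avoid circularity, your \textsc{T-Apply} case inverts \textsc{T-Action} and appeals to the induction hypothesis on the action body's sub-sub-derivation rather than invoking a separate lemma. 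Both routes are sound. Yours is more uniform and proves the general command statement in one pass, at the cost of needing to justify that the action body's command derivation really is a strict subderivation so the IH applies; the paper's decomposition sidesteps that bookkeeping by exploiting the syntactic restriction on action bodies up front, yielding a short self-contained proof of this particular lemma before tackling the general one. One small correction: the action-body grammar does not include \syntax{skip}, so \textsc{T-Skip} is not among the cases exercised. The expression-typing monotonicity lemma you flag as missing is in fact present in the paper as Lemma~\ref{lem:type-less-than-expr}.
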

\begin{proof}
  There is only one way to have concluded that $\Gamma;\T_1\vdash a :
  \bar{\tau}\rightarrow \T_2$: via the \textsc{[T-Action]} rule, which
  gives us two facts: we know $a = \lambda \bar x : \bar \tau. c$, and
  $\cmdtype{\Gamma, \bar x : \bar \tau}{\T_1}{c}{\T_2}$.

  Since this $c$ is an action command, is only generated by the add,
  remove, modification and sequence commands. So we perform a limited
  induction on the structure of $c$:

  \textit{Case} $c = \syntax{add(h)}$. The only typing rule that
  applies is \textsc{T-Add}, so we know $\T_2 = \T_1 \cdot h$. Now let
  $\T_2' = \T_1' \cdot h$. Then \textsc{T-Add} shows $\cmdtype{\Gamma,
    \bar x : \bar \tau} {\T_1'}{\syntax{add(h)}}{\T_1' \cdot h}$. Then
  $\T_1' \cdot h < \T_1 \cdot h$ follows by
  Lemma~\ref{lem:type-less-than-extend}, and we are done.

  \textit{Case} $c = \syntax{remove(h)}$. The only typing rule that
  could have applied is \textsc{T-Remove}, so we know that $\T_2 =
  \mathtt{Remove}~\T_1~h$. Let $\T_2' = \mathtt{Remove}~\T_1'~h$. Then
  \textsc{T-Remove} shows $\cmdtype{\Gamma, \bar x : \bar
    \tau}{\T_1'}{\syntax{remove(h)}}{\mathit{Remove}~\T_1~h}$. Then
  $\mathit{Remove}~\T_1'~h < \mathit{Remove}~\T_1~h$ by
  Lemma~\ref{lem:type-less-than-remove}.

  \textit{Case} $c = \syntax{h.f = v}$. The only typing rule that
  could have applied is \textsc{T-Mod}, so we know that $\T_2 = \T_1$,
  Let $\T_2' = \T_1'$, which proves $\T_2' < \T_2$ by assumption.

  We know by our case assumtion that $\Gamma, \bar x: \bar \tau ; \T_1
  \vdash e : \mathcal{F}(h,f)$ and $\mathtt{Includes}~\T_1~h$.  By
  \textsc{T-Mod}, we only need to show that (1) $\Gamma, \bar x : \bar
  \tau \vdash e : \mathcal{F}(h,f)$ and (2)
  $\mathtt{Includes}~\T_1'~h$. (1) follows by
  Lemma~\ref{lem:type-less-than-expr}, and (2) follows by
  Lemma~\ref{lem:type-less-than-includes}.

  \textit{Case} $c = c_1;c_2$. The only rule that could have applied
  is \textsc{T-Seq}, so we know that
  $\cmdtype{\Gamma, \bar x : \bar \tau}{\T_1}{c_1}{\T_{11}}$, and
  $\cmdtype{\Gamma, \bar x : \bar \tau}{\T_{11}}{c_2}{\T_2}$, and
  $\T_1' \leq \T_1$.

  The inductive hypothesis on $c_1$ gives us a $\T_{11}' < \T_{11}$
  such that $\cmdtype{\Gamma, \bar x : \bar
    \tau}{\T_1'}{c_1}{\T_{11}'}$.

  The inductive hypothesis on $c_2$ gives us a $\T_2' < \T_2$ such
  that $\cmdtype{\Gamma, \bar x : \bar \tau}{\T_{11}'}{c_2}{\T_2'}$.

  The result follows by \texttt{T-Seq}.
  
\end{proof}

\begin{lemma}[Expression Type Bounds]
\label{lem:type-less-than-expr}
If $\Gamma;\Theta\vdash e:\tau$ and $\T'<\T$, then $\Gamma;\Theta'\vdash e:\tau$.
\end{lemma}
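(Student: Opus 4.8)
The plan is to prove this by induction on the derivation of $\Gamma;\Theta\vdash e:\tau$, which — since the expression typing rules are syntax-directed — amounts to a structural induction on $e$ over the three rules \textsc{T-Const}, \textsc{T-Var}, and \textsc{T-Field} (together with any elided rule for literal values $v$). The key observation driving the argument is that the header context $\Theta$ occurs in the premises of only one of these rules, namely \textsc{T-Field}, and there it is used solely through the side condition $\texttt{Includes}~\Theta~h$; every other expression-typing premise is independent of the header context.

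First I would dispatch the cases that do not inspect $\Theta$. For \textsc{T-Var}, $e = x$ and the derivation rests only on $x:\tau\in\Gamma$, so re-applying \textsc{T-Var} immediately gives $\Gamma;\Theta'\vdash x:\tau$; the same is true of any omitted rule for values. For \textsc{T-Const}, $e = k(\bar e)$ with $\mathtt{typeof}(k)=\bar\tau\to\tau'$ and $\Gamma;\Theta\vdash e_i:\tau_i$ for each $i$; applying the induction hypothesis to each $e_i$ (with the same pair $\Theta'<\Theta$) yields $\Gamma;\Theta'\vdash e_i:\tau_i$, and re-applying \textsc{T-Const} concludes. The only case with any content is \textsc{T-Field}: $e = h.f$ with premises $\texttt{Includes}~\Theta~h$ and $\mathcal{F}(h,f)=\tau$. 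Here I would invoke Lemma~\ref{lem:type-less-than-includes} (Order Include): from $\Theta'<\Theta$ and $\texttt{Includes}~\Theta~h$ we obtain $\texttt{Includes}~\Theta'~h$, and since $\mathcal{F}(h,f)=\tau$ is unchanged, \textsc{T-Field} then gives $\Gamma;\Theta'\vdash h.f:\tau$.

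Consequently there is essentially no obstacle here: the statement is a direct corollary of the monotonicity of $\texttt{Includes}$ under the containment order $<$, already established as Lemma~\ref{lem:type-less-than-includes}, which in turn reduces to elementary set-theoretic reasoning about $\llbracket\cdot\rrbracket$. The only point requiring a little care is to set up the induction over expression typing derivations rather than over the structure of $\Theta$, so that in the \textsc{T-Const} case the inductive hypothesis can be applied to each subexpression $e_i$ against the \emph{same} inequality $\Theta'<\Theta$; once that is arranged, the proof is a short case analysis.
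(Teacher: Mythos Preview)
Your proposal is correct and matches the paper's proof essentially step for step: both proceed by induction on the expression typing derivation, handle \textsc{T-Var} trivially, apply the induction hypothesis to the subexpressions in \textsc{T-Const}, and discharge \textsc{T-Field} by invoking Lemma~\ref{lem:type-less-than-includes}.
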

\begin{proof}
    By induction on the typing derivation.

    \textit{Case} \textsc{T-Constant}
    We know $e=k(\bar{e}), \Gamma;\Theta\vdash e_i:\tau_i$ for all $i$, $\mathtt{typeof}(k)=\bar{\tau}\rightarrow \tau$ and $\T'<\T$.
    By induction hypothesis, $\Gamma;\Theta' \vdash e_i:\tau_i$ for all i and we are done by \textsc{T-Constant}.

    \textit{Case} \textsc{T-Var}
    We know $e=x, x:\tau \in \Gamma$, and $\T'<\T$.
    We are done by \textsc{T-Var}.

    \textit{Case} \textsc{T-Field}
    We know $e=h.f, \mathtt{Includes}~\T~h$ and $\T'<\T$.
    By Lemma \ref{lem:type-less-than-includes}, we know $\mathtt{Includes~\T'~h}$ and the result follows by \textsc{T-Field}.
\end{proof}

\begin{lemma}[Control Type Bounds]
\label{lem:type-less-than}
If $\cmdtype{\Gamma}{\Theta_1}{c}{\Theta_2}$ and $\Theta_1' < \Theta_1$, then $\exists \Theta_2'. \cmdtype{\Gamma}{\Theta_1'}{c}{\Theta_2'}$ and $\Theta_2' < \Theta_2$.
\end{lemma}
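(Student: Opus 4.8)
The plan is to proceed by induction on the derivation of $\cmdtype{\Gamma}{\Theta_1}{c}{\Theta_2}$, with a case analysis on the last typing rule used. Every case follows a uniform recipe: transport each premise of the rule from the context $\Theta_1$ (or a context derived from it, such as $\mathtt{Restrict}~\Theta_1~h$) to the corresponding context derived from $\Theta_1'$, using the monotonicity lemmas established above (Order Extend, Order Remove, Order Restrict, Order NegRestrict, Order Include, Order Choice) together with Expression Type Bounds and Action Type Bounds for the premises that mention subexpressions and actions; then re-apply the same typing rule to synthesize a candidate $\Theta_2'$; and finally discharge $\Theta_2' < \Theta_2$ from the semantics of $<$.

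The structural and ``output equals input'' cases are essentially immediate. For \textsc{T-Skip}, \textsc{T-Emit}, and \textsc{T-Mod} the output type coincides with the input type, so $\Theta_2' = \Theta_1'$ works directly---\textsc{T-Mod} additionally needs $\mathtt{Includes}~\Theta_1'~h$ (from Order Include) and $\Gamma;\Theta_1' \vdash e : \tau_i$ (from Expression Type Bounds). For \textsc{T-Extr} and \textsc{T-Add} I take $\Theta_2' = \Theta_1' \cdot h$ and invoke Order Extend; for \textsc{T-Rem} I take $\Theta_2' = \mathtt{Remove}~\Theta_1'~h$ and invoke Order Remove. For \textsc{T-Zero}, $\mathtt{Empty}~\Theta_1$ together with $\Theta_1' < \Theta_1$ forces $\llbracket \Theta_1' \rrbracket = \{\}$, hence $\mathtt{Empty}~\Theta_1'$ by Lemma~\ref{lem:empty-iff}, so \textsc{T-Zero} applies again and $\Theta_2' = 0$ is trivially below $\Theta_2$. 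For \textsc{T-Seq} I chain the induction hypothesis twice: first on $c_1$ to obtain $\Theta_{11}' < \Theta_{11}$ with $\cmdtype{\Gamma}{\Theta_1'}{c_1}{\Theta_{11}'}$, then on $c_2$ at $\Theta_{11}' < \Theta_{11}$ to obtain the desired $\Theta_2'$.

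The branching rules \textsc{T-If} and \textsc{T-IfValid} are where the helper lemmas do the real work: in \textsc{T-If} I re-type the guard with Expression Type Bounds, apply the induction hypothesis to each branch, and combine the two smaller output types with Order Choice; in \textsc{T-IfValid} I first observe that Order Restrict and Order NegRestrict give $\mathtt{Restrict}~\Theta_1'~h < \mathtt{Restrict}~\Theta_1~h$ and $\mathtt{NegRestrict}~\Theta_1'~h < \mathtt{NegRestrict}~\Theta_1~h$, then apply the induction hypothesis in each restricted context and again close with Order Choice.

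The case I expect to take the most care is \textsc{T-Apply}. The list of maskable-excluded read expressions $\bar e$ and the control-plane data $\bar S, \bar a$ depend only on $t$, so they are unchanged; but I must (i) re-type every $e_j \in \bar e$ at $\Theta_1'$ via Expression Type Bounds, (ii) lift Order Restrict to the set-valued restriction to get $\mathtt{Restrict}~\Theta_1'~S_i < \mathtt{Restrict}~\Theta_1~S_i$, (iii) apply Action Type Bounds to each action $a_i$ to obtain $\Theta_i'' < \Theta_i'$ with $\mathtt{Restrict}~\Theta_1'~S_i \vdash a_i : \bar\tau_i \to \Theta_i''$, and (iv) fold Order Choice over the action list to conclude $\sum_{a_i} \Theta_i'' < \sum_{a_i} \Theta_i'$. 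Setting $\Theta_2' = \sum_{a_i} \Theta_i''$ then finishes the case. The bookkeeping over these finite sequences is the only mildly delicate part of the proof; none of it is conceptually hard, since every ingredient has already been isolated as a lemma.
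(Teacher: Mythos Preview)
Your proposal is correct and follows essentially the same induction-on-typing-derivations strategy as the paper, invoking the same auxiliary lemmas (Order Extend, Order Remove, Order Restrict, Order NegRestrict, Order Include, Order Choice, Expression Type Bounds, Action Type Bounds) in the corresponding cases. If anything, your write-up is slightly more careful than the paper's: you explicitly invoke Expression Type Bounds to re-type the guard in \textsc{T-If}, the right-hand side in \textsc{T-Mod}, and the non-maskable read expressions in \textsc{T-Apply}, steps the paper's proof glosses over or states imprecisely.
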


\begin{proof}
By induction on a derivation of \cmdtype{\Gamma}{\Theta_1}{c}{\Theta_2}.
We refer to assumptions \cmdtype{\Gamma}{\Theta_1}{c}{\Theta_2} and $\Theta_1'<\Theta_1$ as (A) and (B) respectively.
Similarly, we use (1) and (2) to refer to the proof goals $\exists \Theta_2'. \cmdtype{\Gamma}{\Theta_1'}{c}{\Theta_2'}$ and $\Theta_2' < \Theta_2$ respectively.

\textit{Case} \textsc{T-Zero}: 
By assumption, we have
$\mathtt{Empty}~\Theta_1$. By Lemmas~\ref{lem:empty-iff} and
\ref{lem:type-less-than-includes} we have $\mathtt{Empty}~\Theta_1'$.
Let $\Theta_2' = \Theta_2$. We have
$\cmdtype{\Gamma}{\Theta_1'}{c}{\Theta_2}$ by \textsc{T-Zero}, proving
(1), and $\Theta_2' < \Theta_2$ by reflexivity, proving (2).

\textit{Case} \textsc{T-Skip}:
We know $c=\syntax{skip}$ and $\Theta_2=\Theta_1$ and $\Theta_1'<\Theta_1$.
Let $\Theta_2'=\Theta_1'$.
Then by assumption (B) $\Theta_2'=\Theta_1'<\Theta_1=\Theta_2$, proving (2) and \cmdtype{\Gamma}{\Theta_1'}{\syntax{skip}}{\Theta_1'} by \textsc{T-Skip}, proving (1).

\textit{Case} \textsc{T-Emit}:
We know $c=\syntax{emit(h)}$ and $\Theta_2 = \Theta_1$ and $\Theta_1' < \Theta_1$.
Let $\Theta_2'=\Theta_1'$.
Then by assumption (B), $\Theta_2'=\Theta_1'<\Theta_1=\Theta_2$, proving (2) and \cmdtype{\Gamma}{\Theta_1'}{\syntax{emit(h)}}{\Theta_1'} by \textsc{T-Emit}, proving (1).

\textit{Case} \textsc{T-Add}:
We know $c=\syntax{add(h)}$ and $\Theta_2=\Theta_1 \cdot h$ and $\Theta_1'<\Theta_1$.
(1) follows since we can prove \cmdtype{\Gamma}{\Theta_1'}{\syntax{add(h)}}{\Theta_1'\cdot h} by \textsc{T-Add}.
(2), i.e., $\T_1'\cdot h < \T_1 \cdot h$, follows from Lemma \ref{lem:type-less-than-extend}.

\textit{Case} \textsc{T-Extr}:
Similar to case \textsc{T-Add}.
We know $c=\syntax{extract(h)}$ and $\T_2=\T_1\cdot h$ and $\T_1'<\T_1$.
Let $\T_2'=\T_1'\cdot h$.
(1) follows since we can prove \cmdtype{\Gamma}{\Theta_1'}{\syntax{extract(h)}}{\Theta_1'\cdot h} by \textsc{T-Extract}.
(2) follows by Lemma \ref{lem:type-less-than-extend}.

\textit{Case} \textsc{T-Rem}:
We know $c=\syntax{remove(h)}$ and $\T_2=\mathtt{Remove}~\T_1~h$ and $\T_1'<\T_1$.
Let $\T_2'=\mathtt{Remove}~\T_1'~h$.
(1) follows by \textsc{T-Rem} and for (2) we have to show that $\mathtt{Remove}~\T_1'~h<\mathtt{Remove}~\T_1~h$, which follows from Lemma \ref{lem:type-less-than-remove}.

\textit{Case} \textsc{T-Mod}:
We know $c=\syntax{h.f=e}$ and $\T_2=\T_1$ and $\T_1'<\T_1$.
Let $\T_2' = \T_1'$.
If $\teval{\T_1'}=\teval{0}$ then $\T_1 == 0$ by idempotent semiring equality and (1) follows by \textsc{T-Zero}.
Otherwise $\teval{\T_1'}$ is nonempty.
To show (1) we need to show 
\begin{enumerate}[(a)]
    \item $\mathtt{Includes}~\T_1'~h$,
    \item $\mathcal{F}(h,f)=\tau$,
    \item $\Gamma;\Theta_1\vdash e: \tau$
\end{enumerate}
(b) and (c) follow from the assumption that the previous rule in the typing derivation was \textsc{T-Mod}.
This inversion also gives us $\mathtt{Includes}~\T_1~h$.
To show (a) we calculate as follows.
$h\sqsubset \teval{\T_1}$ by Lemma \ref{lem:includes-equal}, i.e. $h\in S$ for every $S \in \teval{\T_1}$ by definition.
Since $\teval{\T_1'}\subseteq\teval{\T_1}$, $h\in S$ for every $S \in \teval{\T_1'}$, by set theory.
By definition we get $h \sqsubset \teval{\T_1'}$.
By Lemma \ref{lem:includes-equal}, we can conclude $\mathtt{Includes}~\T_1'~h$.

\textit{Case} \textsc{T-Seq}:
We know $c=\syntax{c_1;c_2}$ and \cmdtype{\Gamma}{\T_1}{c_1}{\T_{11}} and \cmdtype{\Gamma}{\T_{11}}{c_2}{\T_2} and $\T_1'<\T_1$.
By induction hypothesis, $\exists \T_{11}'.\cmdtype{\Gamma}{\T_1'}{c_1}{\T_{11}'}$ and $\T_{11}'<\T_{11}$.
Again, by induction hypothesis, $\exists \T_2'.\cmdtype{\Gamma}{\T_{11}'}{c_2}{\T_2'}$ (proving 1) and $\T_2'<\T_2$ (proving 2) which concludes the case.

\textit{Case} \textsc{T-IfValid}:
We know $c=\syntax{valid(h)~c_1~else~c_2}$ and $\cmdtype{\Gamma}{\mathtt{Restrict}~\T_1~h}{c_1}{\T_t}, \cmdtype{\Gamma}{\mathtt{NegRestrict}~\T_1~h}{c_2}{\T_f}, \T_2=\T_t+\T_f$, and $\T_1'<\T_1$.
Let $\T_2'=\mathtt{Restrict}~\T_1'~h+\mathtt{NegRestrict}~\T_1'~h$.
(1) is immediate from \textsc{T-IfValid}.
(2) follows from Lemmas \ref{lem:type-less-than-restrict}, \ref{lem:type-less-than-negrestrict} and \ref{lem:type-less-than-plus}.

\textit{Case} \textsc{T-If}:
We know $c=\syntax{if~(e)~c_1~else~c_2}, \cmdtype{\Gamma}{\T_1}{c_1}{\T_{11}}, \cmdtype{\Gamma}{\T_1}{c_2}{\T_{12}}, \Gamma;\Theta_1\vdash e: \mathit{Bool}$ and $\T_1'<\T_1$.
By induction hypothesis, there exists $\T_{11}'$ such that (1a) $\cmdtype{\Gamma}{\T_1'}{c_1}{\T_{11}'}$ and (2a) $\T_{11}'<\T_{11}$.
Also by induction hypothesis, there exists $\T_{12}'$ such that (1b) $\cmdtype{\Gamma}{\T_1'}{c_2}{\T_{12}'}$ and (2b) $\T_{12}'<\T_{12}$.
Let $\T_2'=\T_{11}'+\T_{12}'$.
(1) follows by \textsc{T-If} (1a), (1b), and the fact that $\Gamma;\Theta_1\vdash e: \mathit{Bool}$.
(2) follows by Lemma \ref{lem:type-less-than-plus}, (2a), (2b).

\textit{Case} \textsc{T-Apply}:
We know $c=\syntax{t.apply()}, \T_2=\T_{11}+\T_{12}+...+\T_{1n}, \syntax{t.actions}=a_1+a_2+...+a_n, \cdot;\Theta_1\vdash e_j:\tau_j$ for $j=1,...,m$, $\mathcal{CV}(t)=(S_1...S_n)$, $(e_1...e_m)=\{e_i|(e_i,m_i)\in\mathit{t.reads()} \land \neg\mathsf{maskable}(t,e_i,m_i))\}$ and $\mathtt{Restrict} \T_1 S_i \vdash a_i: \bar{\tau}_i \rightarrow \T_{1i}$.
We want to construc $\T_2'<\T_2$ such that $\cmdtype{\Gamma}{\T_1'}{\syntax{t.apply()}}{\T_2'}$.
By repeated application of Lemma \ref{lem:type-less-than-restrict}, $\mathtt{Restrict}~\T_1'~S_i < \mathtt{Restrict}~\T_1~S_i$.
For every $i$ apply Lemma \ref{lem:type-less-than-action} which gives us $\Gamma;\mathtt{Restrict}~\Theta_1'~S_i\vdash a:\bar{\tau}\rightarrow \T_{1_i}$ and $\T_{1i}'<\T_{1i}$.
Let $\T_2'=\sum_{i}\T_{1i}'$.
(2) follows by \textsc{T-Apply}.
To show (1), i.e., $\T_2'=\sum_{i}\T_{1i}'<\sum_{i}\T_{1i}=\T_2$.
We know $\T_{1i}'<\T_{1i}$ for all $i$.
The result follows by repeated application of Lemma \ref{lem:type-less-than-plus}.
\end{proof}

\begin{theorem}[Preservation]
If $\cmdtype{\Gamma}{\Theta_1}{c}{\Theta_2}$ and 
$\langle I,O,H,c \rangle \rightarrow \langle I',O',H',c'\rangle$, where 
$H \models \Theta_1$, then 
$\exists \Theta_1', \Theta_2'.~\cmdtype{\Gamma}{\Theta_1'}{c'}{\Theta_2'}$ where $H' \models \Theta_1'$ and $\Theta_2' < \Theta_2$.
\end{theorem}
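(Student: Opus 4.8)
The plan is to proceed by induction on the derivation of $\cmdtype{\Gamma}{\Theta_1}{c}{\Theta_2}$, with an inner case analysis on the evaluation rule justifying $\langle I,O,H,c\rangle \to \langle I',O',H',c'\rangle$. In each case I must produce $\Theta_1'$ and $\Theta_2'$ discharging three obligations: (i) $\cmdtype{\Gamma}{\Theta_1'}{c'}{\Theta_2'}$, (ii) $H' \models \Theta_1'$, and (iii) $\Theta_2' < \Theta_2$. The case \textsc{T-Zero} is vacuous: $\mathtt{Empty}~\Theta_1$ together with $H \models \Theta_1$ contradicts Lemma~\ref{lem:entailment} and Lemma~\ref{lem:empty-iff}. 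For the commands that either cannot step or step to $\mathtt{skip}$ without enlarging the valid-header set (\textsc{T-Emit} with \textsc{E-EmitInvalid}/\textsc{E-Emit}, \textsc{T-Mod} with \textsc{E-Mod}), I take $\Theta_1' = \Theta_1$ and $\Theta_2' = \Theta_1$, observing $\Theta_2 = \Theta_1$ by inversion of the typing rule, and use Lemma~\ref{lem:entails-domain} since $H' = H$.

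The state-changing base cases are \textsc{T-Extr}/\textsc{E-Extr} and \textsc{T-Add}/\textsc{E-Add}, where $H' = H[h\mapsto v]$ and $\Theta_2 = \Theta_1 \cdot h$: here I set $\Theta_1' = \Theta_2' = \Theta_1 \cdot h$, get $H' \models \Theta_1 \cdot h$ from Lemma~\ref{lem:entails-add}, type $c' = \mathtt{skip}$ by \textsc{T-Skip}, and finish (iii) by reflexivity. The subcase \textsc{E-AddValid} (where $h \in \mathit{dom}(H)$ already) needs $\Theta_1' = \mathtt{Restrict}~\Theta_1~h$, entailed by Lemma~\ref{lem:restrict-domain-entail}; this is $<\Theta_1 \cdot h$ because every choice of $\Theta_1$ that already contains $h$ is literally a choice of $\Theta_1 \cdot h$. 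For \textsc{T-Rem}/\textsc{E-Rem} I take $\Theta_1' = \Theta_2' = \mathtt{Remove}~\Theta_1~h$, using Lemma~\ref{lem:entails-remove}. The two branching cases follow the main-text sketch: in \textsc{T-IfValid} with \textsc{E-IfValidTrue} we have $c' = c_1$, $H' = H$, $h \in \mathit{dom}(H)$, so $\Theta_1' = \mathtt{Restrict}~\Theta_1~h$ is already the input type of the subderivation for $c_1$ and is entailed by Lemma~\ref{lem:restrict-domain-entail}, giving $\Theta_2' = \Theta_{12} < \Theta_{12} + \Theta_{22}$; \textsc{E-IfValidFalse} is symmetric via Lemma~\ref{lem:negrestrict-domain-entail}, and \textsc{T-If} is the same but easier since $H' = H$ and the guard steps by expression preservation.

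The cases needing genuine care are the congruence/sequencing rules and \textsc{T-Apply}. In \textsc{T-Seq} with \textsc{E-Seq1} ($c_1;c_2 \to c_1';c_2$), I apply the induction hypothesis to the subderivation for $c_1$, obtaining $\Theta_1'$, $H' \models \Theta_1'$, and an intermediate $\Theta_a' < \Theta_a$; since $c_2$ was typed at the \emph{old} intermediate type $\Theta_a$, I cannot reuse that derivation directly—instead I invoke the Control Type Bounds lemma (Lemma~\ref{lem:type-less-than}) to re-type $c_2$ at the smaller $\Theta_a'$, yielding $\Theta_2' < \Theta_2$, and rebuild with \textsc{T-Seq}; the \textsc{E-Seq} subcase ($c_1 = \mathtt{skip}$) simply returns the derivation for $c_2$ after noting $\Theta_a = \Theta_1$ by inversion of \textsc{T-Skip}. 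The congruence steps folded into \textsc{E-Mod1} and \textsc{E-If} need only expression preservation and reflexivity. For \textsc{T-Apply} with \textsc{E-Apply}: Proposition~\ref{prop:cp-reads} gives $\mathcal{CA}(t,H) = (a_i,\bar v)$ and $c' = c_i[\bar v/\bar x]$ with $\mathcal{A}(a_i) = \lambda \bar x.c_i$; inverting \textsc{T-Action} on the premise $\mathtt{Restrict}~\Theta_1~S_i \vdash a_i : \bar\tau_i \to \Theta_i'$ gives $\cmdtype{\Gamma,\bar x{:}\bar\tau_i}{\mathtt{Restrict}~\Theta_1~S_i}{c_i}{\Theta_i'}$; Proposition~\ref{prop:cp-action-data} types $\bar v$; the Substitution lemma pushes the substitution through; Proposition~\ref{prop:cp-valid} supplies $H \models \mathtt{Restrict}~\Theta_1~S_i$. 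Taking $\Theta_1' = \mathtt{Restrict}~\Theta_1~S_i$ and $\Theta_2' = \Theta_i'$ discharges all three obligations, with $\Theta_2' < \sum_j \Theta_j'$ by the semantics of $+$.

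I expect the main obstacle to be bookkeeping rather than conceptual depth: getting every $\Theta_2' < \Theta_2$ inequality to come out right—several rely on the non-obvious identity $\mathtt{Restrict}~\Theta~h < \Theta \cdot h$, others on monotonicity of $\mathtt{Restrict}$, $\mathtt{NegRestrict}$, $\mathtt{Remove}$ and $+$ under $<$—and, in the \textsc{T-Seq} and congruence cases, recognizing that the old intermediate header type can no longer be reused and that Lemma~\ref{lem:type-less-than} (proved by its own parallel induction) is exactly the lemma that bridges the gap.
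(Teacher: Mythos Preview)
Your proposal is correct and follows essentially the same route as the paper: induction on the typing derivation with an inner case analysis on the evaluation rule, invoking Lemma~\ref{lem:type-less-than} in the \textsc{E-Seq1} congruence, Lemmas~\ref{lem:restrict-domain-entail}/\ref{lem:negrestrict-domain-entail} for \textsc{T-IfValid}, Lemmas~\ref{lem:entails-add}/\ref{lem:entails-remove}/\ref{lem:entails-domain} for the header-map-changing base cases, and the control-plane Propositions plus the Substitution lemma for \textsc{T-Apply}. The only deviation is in \textsc{E-AddValid}, where you choose $\Theta_1' = \mathtt{Restrict}~\Theta_1~h$ and appeal to $\mathtt{Restrict}~\Theta_1~h < \Theta_1\cdot h$, whereas the paper more simply takes $\Theta_1' = \Theta_1\cdot h$ via Lemmas~\ref{lem:entails-add} and~\ref{lem:entails-domain}; both are sound.
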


\begin{proof}
By induction on a derivation of $\cmdtype{\Gamma}{\Theta_1}{c}{\Theta_2}$, with a case analysis on the last rule used.
\begin{description}
\item{\textit{Case} \textsc{T-Skip}:}
$c = \syntax{skip}$ and $\Theta_2 = \Theta_1$\\[.25em]
  Vacuously holds as there is no $c'$ such that $\langle I,O,H,c
  \rangle \rightarrow \langle I',O',H',c'\rangle$.

\item{\textit{Case} \textsc{T-Extr}:} $c = \syntax{extract(h)}$ and
  $\Theta_2 = \Theta_1 \cdot h$\\[.25em] 
  The only evaluation rule that
  applies to $c$ is \textsc{E-Extr}, so we also have $c' =
  \syntax{skip}$ and $\mathcal{HT}(h)=\eta$ and $H' = H[h\mapsto v]$
  where $\mathit{deserialize}_\eta(I) = (v,I')$. Let $\Theta_1' = \Theta_2'
  = \Theta_2$. We have $\cmdtype{\Gamma}{\Theta_1'}{c'}{\Theta_2'}$
  by \textsc{T-Skip}, we have $H' \models \Theta_2'$ by
  Lemma~\ref{lem:entails-add}, and we have $\Theta_2' < \Theta_2$ by reflexivity.

\item{\textit{Case} \textsc{T-Emit}:} $c = \syntax{emit(h)}$ and
  $\Theta_2 = \Theta_1$.\\[.25em] 
There are two evaluation rules that apply to $c$, \textsc{E-Emit} and
\textsc{E-EmitInvalid}. In either case, $c' = \syntax{skip}$ and $H' = H$. 
Let $\Theta_1' = \Theta_2' = \Theta_1$. 
We have $\cmdtype{\Gamma}{\Theta_1'}{c'}{\Theta_2'}$ by
\textsc{T-Skip}, we have $H' \models \Theta_1'$ by assumption, and we
have $\Theta_2' < \Theta_2$ by reflexivity.

\item{\textit{Case} \textsc{T-Seq}:}
$c = \syntax{c_1; c_2}$ and $\cmdtype{\Gamma}{\Theta_1}{c_1}{\Theta_{12}}$ and $\cmdtype{\Gamma}{\Theta_{12}}{c_2}{\Theta_2}$\\[.25em]
There are two evaluation rules that apply to $c$, \textsc{E-Seq1} and
\textsc{E-Seq}. 
\begin{description}
\item{\textbf{Subcase} \textsc{E-Seq}:}
 $c' = \syntax{c_2}$ and $H' = H$\\[.25em] 
By inversion of $\cmdtype{\Gamma}{\Theta_1}{c_1}{\Theta_{12}}$ we have
$\Theta_{12} = \Theta_1$. Let $\Theta_1' = \Theta_1$ and $\Theta_2' =
\Theta_2$. We have $\cmdtype{\Gamma}{\Theta_1'}{c'}{\Theta_2'}$ by
assumption, we have $H \models \Theta_1'$ also by assumption, and
$\Theta_2' < \Theta_2$ by reflexivity.
\item{\textbf{Subcase} \textsc{E-Seq1}:}
  $c' = \syntax{c_1';c_2}$ and $\langle I,O,H,c_1 \rangle \rightarrow \langle I',O',H',c_1'\rangle$.\\[.25em]
By IH we have $\cmdtype{\Gamma}{\Theta_1'}{c_1}{\Theta_{12}'}$ such that $H' \models \Theta_1'$ and $\Theta_{12}' < \Theta_{12}$.
By Lemma~\ref{lem:type-less-than} we have $\cmdtype{\Gamma}{\Theta_{12}'}{c_2}{\Theta_2}$ for some $\Theta_2' < \Theta_2$.
We have $\cmdtype{\Gamma}{\Theta_1'}{c_1; c_2}{\Theta_2'}$ by \textsc{T-Seq}, which finishes the case.
\end{description}

\item{\textit{Case} \textsc{T-If}:}
$c = \syntax{if~(e)~c_1~else~c_2}$ and $\Gamma; \Theta_1 \vdash e : \mathit{Bool}$ and $\cmdtype{\Gamma}{\Theta_1}{c_1}{\Theta_{12}}$ and $\cmdtype{\Gamma}{\Theta_1}{c_2}{\Theta_{22}}$ and $\Theta_2 = \Theta_{12} + \Theta_{22}$. \\[.25em]
There are three evaluation rules that apply to $c$, \textsc{E-If}, \textsc{E-IfTrue}, and \textsc{E-IfFalse}.
\begin{description}
\item{\textbf{Subcase} \textsc{E-If}:} $c' =
  \syntax{if~(e')~c_1~else~c_2}$ and $H' = H$\\[.25em] Let $\Theta_1' =
  \Theta_1$ and $\Theta_2' = \Theta_2$. We have
  \cmdtype{\Gamma}{\Theta_1'}{\syntax{if~(e)~c_1~else~c_2}}{\Theta_2'} by
  \textsc{T-If}, we have $H \models \Theta_1$ by assumption, and we
  have $\Theta_2 < \Theta_2'$ by reflexivity.
\item{\textbf{Subcase} \textsc{E-IfTrue}:} $c' = c_1$ and $H' = H$.\\[.25em]
Let $\Theta_1' = \Theta_1$ and $\Theta_2' = \Theta_{12}$. We have $\cmdtype{\Gamma}{\Theta_1'}{c'}{\Theta_2'}$ by assumption, we have $H \models \Theta_1'$ also by assumption, and we have $\Theta_2' < \Theta_2$ by the definition of $<$ and the semantics of types.
\item{\textbf{Subcase} \textsc{E-IfFalse}:} $c' = c_2$ and $H' = H$.\\[.25em]
Symmetric to the previous case.
\end{description}

\item{\textit{Case} \textsc{T-IfValid}:}
$c = \syntax{valid(h)~c_1~else~c_2}$ and $\cmdtype{\Gamma}{\mathtt{Restrict}~\Theta_1~h}{c_1}{\Theta_{12}}$ and $\cmdtype{\Gamma}{\mathtt{NegRestrict}~\Theta_1~h}{c_2}{\Theta_{22}}$ and $\Theta_2 = \Theta_{12} + \Theta_{22}$. \\[.25em]
There are two evaluation rules that apply to $c$, \textsc{E-IfValidTrue} and \textsc{E-IfValidFalse}
\begin{description}
\item{\textbf{Subcase} \textsc{E-IfValidTrue}:} $c' = c_1$ and $h \in
  \mathit{dom}(H)$ and $H' = H$.\\[.25em] Let $\Theta_1' =
  \mathtt{Restrict}~\Theta_1~h$ and $\Theta_2' = \Theta_{12}$. We have
  $\cmdtype{\Gamma}{\Theta_1'}{c'}{\Theta_2'}$ by assumption, we have
  $H \models \Theta_1'$ by Lemma~\ref{lem:restrict-domain-entail}, and
  we have $\Theta_2' < \Theta_2$ by the definition of $<$ and
  semantics of types.
\item{\textbf{Subcase} \textsc{E-IfValidFalse}:} $c' = c_2$ and $h \not\in \mathit{dom}(H)$ and $H' = H$.\\[.25em]
 Symmetric to the previous case. 
\end{description}

\item{\textit{Case} \textsc{T-Apply}:}
$c = \syntax{t.apply()}$ and $\mathcal{CV}(t) = \bar{S}$ and $t.\mathit{actions} = \bar{a}$ and $\bar{e} = \{e_j \mid (e_j, m_j) \in \mathit{t.reads()} \wedge \neg \mathsf{maskable}(t,e_j,m_j)\}$
and 
$\cdot; \Theta \vdash e_i : \tau_i$ for $e_i \in \bar{e}$ and 
$\mathtt{Restrict}~\Theta_1~S_i \vdash a_i : \bar{\tau_i} \rightarrow \Theta_i'$ for $a_i \in a$ and $\Theta_2 = \sum\left( \Theta_i' \right)$\\[.25em]
There is only one evaluation rule that applies to $c$, \textsc{E-Apply}. It follows that
$\mathcal{CA}(t, H) = (a_i, \bar{v})$, and $c' = c_i[\bar{v}/\bar{x}]$
where $\mathcal{A}(a_i) = \lambda \bar{x}.~c_i$. Next, inverting
\textsc{T-Action}, we have $\cmdtype{\Gamma, \bar{x} : \bar{\tau}_i;}{\texttt{Restrict}~\Theta~S_i}{c_i}{\Theta_i'}$. 
By Proposition~\ref{prop:cp-action-data}, we have $\cdot; \cdot \vdash \bar{v} : \bar{\tau}_i$. 
Hence, by the substitution lemma, we have $\cmdtype{\Gamma}{\texttt{Restrict}~\Theta~S_i}{c_i[\bar{v}/\bar{x}]}{\Theta_i'}$.
Let $\Theta_1' = \mathtt{Restrict}~\Theta~S_i$ and $\Theta_2' = \Theta_i'$. 
We have already shown that $\cmdtype{\Gamma}{\Theta_1'}{c'}{\Theta_2'}$, 
we have that $H' \models \Theta_1'$ by Proposition~\ref{prop:cp-valid}, and 
we have $\Theta_2' < \Theta_2$ by the definition of $<$ and the semantics of union types.

\item{\textit{Case} \textsc{T-Add}:}
$c = \syntax{add(h)}$ and $\Theta_2 = \Theta_1 \cdot h$\\[.25em]
There are two evaluation rules that apply to $c$, \textsc{E-Add} and \textsc{E-AddValid}.
\begin{description}
\item{\textbf{Subcase} \textsc{E-Add}:} $c'=\syntax{skip}$ and $\mathcal{HT}(h)=\eta$ and $\mathit{init}_\eta=v$ and $H'=H[h\mapsto v]$\\[.25em]
Let $\Theta_1'=\Theta_2'=\Theta_2$.
We have $\cmdtype{\Gamma}{\Theta_1'}{c'}{\Theta_2'}$ by \textsc{T-Skip}, we have 
$H' \models \Theta_1'$ by Lemma~\ref{lem:entails-add},
and we have $\Theta_2' < \Theta_2$ by reflexivity.
\item{\textbf{Subcase} \textsc{E-AddValid}:} $c'=\syntax{skip}$ and $H' = H$\\[.25em]
Let $\Theta_1' = \Theta_2' = \Theta_2$.
We have $\cmdtype{\Gamma}{\Theta_1'}{c'}{\Theta_2'}$ by \textsc{T-Skip}, 
we have $H' \models \Theta_1'$ by Lemma~\ref{lem:entails-domain} and \ref{lem:entails-add} since $\mathit{dom}(H') = \mathit{dom}(H[h \mapsto v])$ for any $v$, and
we have $\Theta_2' < \Theta_2$ by reflexivity.

\end{description}

\item{\textit{Case} \textsc{T-Rem}:}
$c = \syntax{remove(h)}$ and $\Theta_2 = \mathtt{Remove}~\Theta_1~h$\\[.25em]
There is only one evaluation rule that applies to $c$, \textsc{E-Rem}, so we have $c' = \syntax{skip}$ and $H' = H \setminus h$. 
Let $\Theta_1' = \Theta_2' = \mathtt{Remove}~\Theta~h$.
We have $\cmdtype{\Gamma}{\Theta_1'}{c'}{\Theta'_2}$ by \textsc{T-Skip}, 
we have $H' \models \Theta_1'$ by Lemma~\ref{lem:entails-remove}, and
we have $\Theta_2' < \Theta_2$ by reflexivity.

\item{\textit{Case} \textsc{T-Mod}:}
$c = \syntax{h.f = e}$ and $\mathtt{Includes}~\Theta_1~h$ and $\mathcal{HT}(h,f) = \tau_i$ and $\cdot; \Theta_1 \vdash e : \tau_i$ and $\Theta_2 = \Theta_1$\\[.25em]
There are two evaluation rules that applies to $c$, \textsc{E-Mod1} and \textsc{E-Mod}.
\begin{description}
\item{\textbf{Subcase} \textsc{E-Mod1}:} $c'=\syntax{h.f = e'}$ and $e \rightarrow e'$ and $H' = H$\\[.25em]
By preservation for expressions we have $\cdot; \Theta_1 \vdash e' : \tau_i$.
Let $\Theta_1' = \Theta_2' = \Theta_1$. W
We have $\cmdtype{\Gamma}{\Theta_1'}{c'}{\Theta_2'}$ by \textsc{T-Mod}, 
we have $H' \models \Theta_1'$ by assumption, and
we have $\Theta_2' < \Theta_2$ by reflexivity.
\item{\textbf{Subcase} \textsc{E-Mod}:} $c'=\syntax{skip}$ and $\mathit{dom}(H') = \mathit{dom}(H)$\\[.25em]
Let $\Theta_1' = \Theta_2' = \Theta_1$. 
We have $\cmdtype{\Gamma}{\Theta_1'}{c'}{\Theta_2'}$ by \textsc{T-Skip}, 
we have $H' \models \Theta_1'$ by Lemma~\ref{lem:entails-domain}, and 
we have $\Theta_2' < \Theta_2$ by reflexivity.
\end{description}
\item{\textit{Case} \textsc{T-Zero}:}
$\mathtt{Empty}~\Theta_1$\\[.25em]
By Lemma~\ref{lem:entailment}, we have $\mathit{dom}(H) \in \llbracket \Theta_1 \rrbracket$. 
By Lemma~\ref{lem:empty-iff}, we have $\llbracket \Theta_1 \rrbracket = \{ \}$, which is a contradiction.\qedhere
\end{description}
\end{proof}

 }{

}

\end{document}